\def\dOi{10(4:15)2014}
\newcommand{\pz}{$\Pi^0_1$\ }
\newcommand{\X}{{\mathcal X}}
\newcommand{\N}{{\mathbb N}}
\newcommand{\Q}{{\mathbb Q}}
\newcommand{\sg}{\sigma}
\newcommand{\fr}{{^{\frown}}}
\newcommand{\res}{\upharpoonright}
\newcommand{\ml}{Martin-L\"{o}f\ }
\newcommand{\Ptime}{\mbox{\rm P}}
\newcommand{\PTIME}{\mbox{\rm PTIME}}
\newcommand{\EXPTIME}{\mbox{\rm EXPTIME}}
\newcommand{\PSPACE}{\mbox{\rm PSPACE}}
\newcommand{\FPSPACE}{\mbox{\rm FPSPACE}}
\newcommand{\FPSPACEplus}{{\FPSPACE^+}}
\newcommand{\EXPSPACE}{\mbox{\rm EXPSPACE}}
\newcommand{\DSPACE}{\mbox{\rm DSPACE}}
\newcommand{\BP}{\mbox{\rm BP}}
\newcommand{\BPS}{\mbox{\rm BPS}}
\newtheorem{theorem}{Theorem}[section]
\newtheorem{definition}[theorem]{Definition}
\newtheorem{lemma}[theorem]{Lemma}
\newtheorem{corollary}[theorem]{Corollary}
\newtheorem{proposition}[theorem]{Proposition}
\begin{document}

\pagestyle{headings}

\title{Sub-computable Bounded  Randomness}

\author[S.~Buss]{Sam Buss\rsuper a}
\address{{\lsuper{a,c}}Department of Mathematics,
University of California, San Diego,
La Jolla, CA 92093-0112}
\email{\{sbuss,jremmel\}@math.ucsd.edu}
\thanks{{\lsuper a}Supported in part by
NSF grants DMS-1101228 and CCR-1213151
and Simons Foundation award 306202.}

\author[D.~Cenzer]{Douglas Cenzer\rsuper b}
\address{{\lsuper b}Deptartment of Mathematics,
University of Florida,
Gainesville, FL 32611-8105}
\email{cenzer@ufl.edu}
\thanks{{\lsuper b}Supported in part by NSF\ grant DMS-652372.}

\author[J.~B.~Remmel]{Jeffrey B. Remmel\rsuper c}

\keywords{algorithmic randomness, subcomputable classes, polynomial space,
primitive recursive}

\titlecomment{This
paper is an expanded version, with new results, of
a paper \cite{CenzerRemmel:BPrandomness} by two of the
authors}

\begin{abstract} This paper defines a new notion of bounded
computable randomness
for certain classes of sub-computable functions which lack
a universal machine.
In particular, we define
such versions of randomness for  primitive recursive functions and for
PSPACE functions. These new notions are robust in that
there are equivalent formulations in terms of
(1) Martin-L\"of tests,
(2) Kolmogorov complexity, and (3) martingales.  We
show these notions can be equivalently defined with
prefix-free Kolmogorov complexity.  We prove
that one direction of van Lambalgen's theorem holds for relative computability,
but the other direction fails.
We discuss statistical properties of these notions
of randomness.
\end{abstract}

\maketitle

\section{Introduction}\label{sec:introduction}

The study of algorithmic randomness has flourished over the past
century.  The main topic of study in this paper
 is the randomness of a single real
number which, for our purposes, can be thought of as
an infinite sequence $X = (X(0),X(1),\dots)$
from $\{0,1\}^\omega$.  Many interesting notions of algorithmic randomness for
real numbers have been investigated in recent years.  The most
well-studied notion, \ml randomness \cite{ML66} or 1-randomness, is
usually defined in terms of measure. Thus we say a
 real~$X$ is 1-random if it is
\emph{typical}, that is, $X$ does not belong to any effective set of measure
zero  in the sense of Martin-L\"{o}f \cite{ML66}.  A second definition of
1-randomness may be given in terms of information content: $X$~is
1-random if it is \emph{incompressible}, that is, the initial segments
$(X(0),X(1),\dots,X(n))$ have high \hbox{Kolmogorov}~\cite{K65} or
Levin-Chaitin~\cite{Chaitin66,Lev73} complexity.  A third definition
may be given in terms of martingales:  $X$~is 1-random if it is
{\em unpredictable}, that is, there is no effective martingale for which one
can obtain unbounded capital by betting on the values of $X$~\cite{V39}.
These three
versions have been shown by Schnorr~\cite{S71} to be equivalent.  This
demonstrates the robustness of the concept of \ml randomness.
Other interesting notions of algorithmic randomness have been studied, although in some
cases, formulations are only given for one or perhaps two
versions. For a thorough study of the area of algorithmic randomness,
the reader is directed to three excellent recently published books:
Downey and Hirschfeldt~\cite{DH11}, Nies~\cite{Niesbook} and Li and
Vitanyi~\cite{LV08}.

This paper presents a notion of bounded randomness
for some classes of sub-computable functions which do not
have access to a universal machine for that class.
We begin in Section~\ref{sec:BPrandomness} by stating
our definitions for the class of
primitive recursive functions and define a new notion of
\emph{bounded primitive recursive randomness} ($\BP$ randomness).
We show that there are three equivalent definitions of $\BP$ randomness:
one in terms of measure, one in terms of compressibility, and one in terms of martingales.   For
measure, a {\em bounded primitive recursive test} will be a primitive
recursive sequence of clopen sets $(U_n)_{n \geq 0}$ such that
$U_n$ has measure $\leq 2^{-n}$.  We define
$X$ to be {\em $\BP$ random} if it does not belong to $\bigcap_{n \geq 0} U_n$ for any
such test. For compressibility, we say that
$X$ is {\em $\BP$ compressed} by a primitive
recursive machine~$M$ if there is a primitive recursive function~$f$
such that $C_M(X \res f(c)) \leq f(c) - c$ for all~$c$, where
$C_M$ is a primitive recursive analogue of
Kolmogorov complexity. We show that
$X$ is $\BP$ random if and only if $X$ is not $\BP$ compressed by any
primitive recursive machine. We will also consider process machines
and the resulting notion of process complexity as studied recently by
Day~\cite{Day12}.

For martingales, we say that a primitive recursive martingale~$d$ {\em succeeds} on a
sequence~$X$ if there is a primitive recursive function~$f$ such that
$d(X \res f(n)) \geq 2^n$ for each~$n$. Thus $d$ makes us rich betting
on~$X$, and $f$~tells us how fast this happens.  We show
that $X$ is $\BP$ random if and only if there is no primitive recursive
martingale which succeeds on~$X$.

As we shall see, the presence of the auxiliary functions $f$ in
the definition of $X$ being $\BP$-compressed by a primitive
recursive machine and in the definition of when a primitive recursive
martingale succeeds on a sequence $X$ is key to our ability to prove that
the three definitions of $\BP$ randomness are all equivalent.
These definitions can be adapted to define a notion of
bounded randomness for other classes of sub-computable functions and
if such a  class satisfies certain closure conditions, then our
proofs of the equivalence of the three notions of primitive recursive
randomness can be adapted to prove similar results for that
class of sub-computable functions.  However some care has
to be taken in the definitions of the analogue of the auxiliary functions
$f$ described above.  As an example, in
the last part of the paper, we define a notion of
bounded polynomial space randomness, called $\BPS$ randomness.
We show that the results
obtained for $\BP$ randomness carry over, after suitable modifications,
to $\BPS$ randomness.

One motivation for the definition of $\BPS$ randomness is that it gives
a natural notion of randomness for polynomial space which has
equivalent formulations in terms of compressibility, in terms of
statistical tests, and in terms of martingales. Moreover,
our approach suggests a general alternative approach to defining randomness
via martingales for various classes of sub-computable functions by
adding the requirement that the rate of success of a strategy is measured
by a function in that class.  As we discuss below,
our notion of bounded randomness differs
from the well-studied notions of resource-bounded randomness which have
been developed over the past twenty-five years.  Finally, our general
approach to defining randomness on sub-computable classes of
functions should be useful for the study of resource-bounded trees
and effectively closed sets.

The terms \emph{bounded randomness} and \emph{finite randomness} are
sometimes used to refer to versions of randomness given by tests in
which the c.e. open sets are in fact clopen.
The term ``finite'' comes
from the fact that any clopen set~$U$ is the finite union of intervals
$U = [\sigma_1] \cup \dots\cup [\sigma_k]$.
Our notion
of $\BP$ randomness is ``bounded'' in this sense.
Kurtz
randomness \cite{Kurtz}, also referred to as weak randomness,
falls into this category. A real~$X$ is Kurtz
random if it does not belong to any \pz{}~class~$Q$ of measure zero. But
any \pz{}~class may be effectively expressed as a decreasing
intersection of clopen sets
$Q = \bigcap_n Q_n$ where the clopen sets $Q_n$ are unions
of intervals of length~$n$. If $\mu(Q) =0$, it is easy to find a
subsequence $U_i = Q_{n_i}$ with $\mu(U_i) \leq 2^{-i}$ and thus
$(U_n)_{n \geq 0}$ is a
bounded \ml test.   From this it follows immediately
that if our definition of $\BP$ random is modified to use
computable functions instead of primitive recursive functions,
then it becomes equivalent to Kurtz randomness.
Kurtz randomness
has been well-studied, and is already known to have equivalent
definitions in terms of \ml randomness, Kolmogorov
compressibility, and effective martingales; for this see Wang~\cite{Wang00},
Downey, Griffiths and Reid~\cite{DGR}, and
Bienvenu and Merkle~\cite{BM07}.
Uniform relative Kurtz randomness is studied by
Kihara and Miyabe \cite{KM13}, who proved van Lambalgen's
theorem holds for their definition.
Another special type of bounded randomness was recently
studied by Brodhead, Downey and Ng~\cite{BDN12}.

As shown by Wang~\cite{Wang00}, Kurtz random reals may not be
stochastic in the sense of Church~\cite{Church}. For example, it need not be the
case that the number of occurrences of~0's in a Kurtz
random sequence~$X$ tends to $1/2$ in the limit. This can
also happen for our $\BP$ random reals. Indeed,
we construct a
recursive real which is $\BP$ random but not stochastic. However,
in Section~\ref{sec:statisticaltests}, we show
that $\BP$ random sets do satisfy a weak
``infinitely often'' version of the stochastic
property.

A lot of work has been done on various notions of resource-bounded randomness.
One of the first approaches to resource-bounded randomness was via the
above-mentioned stochastic property of typical reals.  It is expected
that for a random real, the relative densities of 0's
and 1's  should be equal in the limit.  We identify a set $A$ of
natural numbers with its characteristic function and in those terms we
expect that $\lim_n (card(A \cap [[n]])/n = \frac12$ where
$[[n]] = \{0,1, \ldots, n-1\}$.  Levin~\cite{Lev73}
defined a notion of primitive randomness for a set $A$ to
mean that for every primitive recursive set~$B$,
the set $A \cap B$ is
stochastic relative to~$B$, and constructed a recursive set that is
primitive random. Di~Paola~\cite{DP69} studied similar notions of
randomness in the Kalmar hierarchy of elementary functions.
Wilber~\cite{Wilber83} defined a set~$A$ to be $\Ptime$ random if, for every
$\PTIME$ set~$B$, the sets $A$ and~$B$
agree on a set of density $\frac12$, and
constructed an exponential time computable $\Ptime$ random set.

The literature of computational complexity contains many papers on
random number generators and cryptography which examine various
notions of pseudorandomness.  For example, Blum and Micali~\cite{BM84}
gave a weak definition of pseudorandom sequences in which a randomly
generated sequence is said to be pseudorandom if it meets all $\PTIME$
statistical tests.  Ko~\cite{Ko86} gave definitions of randomness
with respect to polynomial time and polynomial space complexity
which are in the
tradition of algorithmic randomness as established by Levin, \ml,
Chaitin, and Schnorr. Ko's notion of polynomial space randomness
has equivalent formulations in terms
of tests and in terms of compressibility
(namely, the equivalent classes PSR1 and PSR2 in~\cite{Ko86}).
Ko also shows PSR1-/PSR2-random reals satisfy strong
stochastic properties.
The intuition behind Ko's polynomial space randomness
is very different from the motivation of our BPS randomness.
Indeed, Ko defines a notion of randomness based on the
intuition that no polynomial space test can ``reject the hypothesis
that $x$ is random on the significance level of $2^{-m}$'' \cite{Ko86}.
However, this is exactly the notion of polynomial
space test that we {\em are} adopting.
For instance, Ko's definition of PSR2 is based on
compressibility using a universal polynomial space machine~$M$
and states
that a real~$X$ is ($\PSPACE$) compressed if,
for every~$k$, there exists
infinitely many $n$ such that $K_M(X \res n) < n - (\log n)^k$,
where $K$ is a notion of Kolmogorov complexity.  (We have not
stated the definition fully: see~\cite{Ko86}
for the correct, precise definition.)
In contrast, our definition of BPS states, loosely speaking,
that $X$ is ($\PSPACE$) compressed provided there
is some~$k$ such that $K_M(X \res n^k) \le n^k-n$ for all sufficiently
large~$n$. (See
Section~\ref{sec:BPSrandomness} for the actual definition; in particular,
unlike Ko, we do not use a universal machine~$M$).
Clearly every real which is polynomial space random in Ko's sense is
also $\BPS$ random.  The converse does not hold, as BPS randomness
does not have as strong stochastic properties.
For this, see the discussion after Theorem~\ref{thm:nolimitBPS}.

Lutz~\cite{Lutz92}, building on Schnorr~\cite{S71},
defined an important notion of resource-bounded
randomness in terms of martingales. For the special
case of polynomial space, Lutz defined that a real~$X$
is $\PSPACE$ random
if there is no $\PSPACE$ martingale which succeeds on~$X$.
This differs from our definition of $\BPS$ randomness
in that there is no auxiliary function~$f$ which bounds the
time-to-success for the martingale.
We shall
see that this is a strictly stronger notion than $\BPS$ random.
Clearly, every $\PSPACE$ random real is automatically $\BPS$ random.
On the other hard, there
are $\BPS$ random reals which are not $\PSPACE$ random in the sense of
Lutz.  This will be an immediate consequence of our
Theorem~\ref{thm:nolimitBPS}, which states that $\BPS$ random reals
do not
enjoy as strong stochastic properties as are known to
hold for $\PSPACE$ random reals.
Lutz further defined that a set~$\X$ of reals
has $\PSPACE$ measure one if there is no
$\PSPACE$ martingale
which succeeds on every element of~$\X$. Then almost
every $\EXPSPACE$ real is random, and this can be used to study
properties of $\EXPSPACE$ reals by examining whether the set of $\EXPSPACE$
reals with the property has measure one. Buhrman and Longpre~\cite{BL02} gave a
rather complicated equivalent formulation of $\PSPACE$ randomness in
terms of compressibility.  Lutz's notion of complexity theoretic
randomness concept has been developed further by
\cite{AS94,AS95,AM97,Moser:Martingale}. Shen et al.~\cite{CSVV} have recently studied
on-line complexity and randomness.

Ville's theorem is a fundamental property of \ml random reals
and states that any effective subsequence of a random
sequence is also random. We prove an analogue
 of Ville's theorem for $\BP$ randomness.

Another fundamental
property for random reals is van Lambalgen's theorem, which
states that the join $A \oplus B$ of two random sets is random if and only if
$B$~is random and $A$~is random relative to~$B$.
We define a natural notion of {\em relative
$\BP$ randomness} which still has three equivalent formulations.
We prove one direction of the analogue of van Lambalgen's theorem,
showing that if $A$~is $\BP$ random relative to~$B$ and $B$~is $\BP$ random relative to~$A$,
then $A\oplus B$ is $\BP$ random.  However, we give a counterexample
for the converse, by showing there exist $A$ and~$B$ such that
$A$~is $\BP$ random relative
to~$B$ and $B$~is $\BP$ random relative to~$A$,
but $A\oplus B$ is not $\BP$ random.  This corrects
Theorem~6 of~\cite{CenzerRemmel:BPrandomness} which claimed that
both directions of van Lambalgen's theorem hold for $\BP$ and $\BPS$ randomness.

For the case of bounded $\BPS$ randomness,
we again present equivalent notions,
one in terms of compressibility, one in terms of measure,
and one in terms of martingales.  In this case, polynomial growth
rate functions are used to bound the rate of compressibility,
the size of the measure-based
tests, and the success rate of the martingales.
The compressibility definition is shown to be equivalent whether
defined in terms of prefix-free or non-prefix-free functions.
We prove that one direction of van Lambalgen's theorem holds for relative randomness
of bounded $\BPS$ randomness, whereas the other direction fails.
We also discuss stochastic properties
of bounded $\BPS$ random sequences.

The outline of this paper is as follows.
Section~\ref{sec:BPrandomness} studies $\BP$ randomness and
shows the equivalence of our three versions. We construct a
computable real which is $\BP$ random. We prove an analogue of Ville's
theorem for primitive recursive subsequences of $\BP$ random reals.
We also define a notion of relative randomness and discuss
the analogue of van Lambalgen's theorem.
Section~3 presents three equivalent notions of
bounded polynomial space
randomness (which we call ``$\BPS$ randomness'' to distinguish it from
earlier different definitions); and extends all the constructions of
Section~\ref{sec:BPrandomness} to bounded $\BPS$ randomness.
Section~4 states our conclusions and some
directions for further research.

A preliminary version of this paper was published in \cite{CenzerRemmel:BPrandomness}.
The present paper has the following improvements.
It includes all of the proofs for the case of $\BPS$ randomness which were omitted in the conference version.
It presents a new definition of $\BP$ randomness and $\BPS$ randomness
using process machines and the proof of their equivalence with the other definitions.
It proves the equivalence of $\BPS$ randomness and weak $\BPS$ randomness.
It gives a proof of the intuitive fact that prefix-freeness is not needed for the Kolmogorov complexity version of $\BP$ and $\BPS$ randomness.
It gives a proof that only one direction of van Lambalgen's theorem holds for $\BP$ and $\BPS$ randomness. This corrects
Theorem~6 of  \cite{CenzerRemmel:BPrandomness},
which claimed that both directions of van Lambalgen's Theorem
held for $\BP$ and $\BPS$ . It contains an improvement in the
construction of a $\BP$ random real which significantly lowers the complexity
from $2^{2^n}$ to $n^{\epsilon \log n}$.

\section{Bounded primitive recursive randomness}\label{sec:BPrandomness}

This section defines the three notions of primitive recursive
randomness, Kolmogorov $\BP$ randomness,
\ml $\BP$ randomness, and martingale $\BP$ randomness,
and proves their equivalence.
Hence we say that a real $X$ is $\BP$ random if
it satisfies these three definitions.
Subsection~\ref{sec:prefixfreeBP}
introduces
a notion of prefix-free $\BP$ randomness, and shows it is also
equivalent to $\BP$ randomness.   We also
prove a simple analogue of Ville's theorem
and discuss statistical tests and $\BP$ randomness.

\subsection{Three definitions for BP randomness}\label{sec:definitionsBP}

\subsubsection{Martin-L{\"o}f BP randomness}\label{sec:MartinLofBP}

We work with the usual alphabet $\Sigma = \{0,1\}$ and
the corresponding set $\{0,1\}^*$ of finite strings and the Cantor space
$\{0,1\}^\omega$ of infinite sequences, but our results hold for any finite
alphabet.
In this section, we study primitive recursive functions
$M: \Sigma^* \to \Sigma^*$.  We code finite strings as numbers in order
to define these primitive recursive functions using primitive
recursive coding and
decoding functions.
The code $c(\sg)$ of a finite sequence
$\sg =\sg_1 \cdots \sg_n \in \{0,1\}^*$
is just the natural number whose binary expansion is $1\sg_1\cdots \sg_n$.
Given a nonempty finite set $S=\{\sg^{(1)}, \ldots , \sg^{(k)}\}$ of
strings in $\{0,1\}^*$ such that
 $c(\sg^{(1)}) < \cdots < c(\sg^{(k)})$, the code $C(S)$ of~$S$ is defined
to be the natural number~$n$ whose ternary expansion is
$2c(\sg^{(1)})2 \cdots 2 c(\sg^{(k)})$.
We let $\emptyset$ denote the empty string.
The set of infinite $0$/$1$ sequences is
denoted $\{0,1\}^\omega$; these are also
called {\em reals}.
An $X\in \{0,1\}^\omega$ is also viewed as a
set, namely the set $\{i : X(i)=1\}$.

For any string $\sg \in \{0,1\}^*$,
$|\sigma|$~denotes the length of~$\sigma$.  For
$X \in \{0,1\}^* \cup \{0,1\}^\omega$,
we write $\sg \sqsubset X$ if $\sg$ is
an initial segment of $X$, and let
$[\sigma] = \{X \in \{0,1\}^\omega$ such that $\sg \sqsubset X\}$.
For a set $G$ of
strings in $\{0,1\}^*$, we let $[G] = \bigcup\{[\sg]: \sg \in G\}$.
We say a sequence $(U_n)_{n \in \N}$ of clopen
sets is a {\em primitive recursive sequence} if there is a
primitive recursive function~$f$ such that for all~$n$, $f(n)$ is a code
of a finite set $G_n = \{\sg_{1,n}, \ldots, \sg_{k(n),n}\}$ such that
$U_n = [G_n]$.

We define a {\em primitive recursive test} to be a primitive recursive
sequence $(U_n)_{n \geq 0}$ of clopen sets such that, for each $n$,
$\mu(U_n) \leq 2^{-n}$.
Without loss of generality, we may assume that there is a primitive
recursive function~$g$ such that $g(n)$ is a code of a finite set
$G_n = \{\tau_{1,n}, \ldots, \tau_{r(n),n}\}$
such that $U_n = [G_n]$ and
$|\tau_{1,n}|= \cdots = |\tau_{r(n),n}| = \ell(n)$
where $r$ and~$\ell$ are also primitive recursive functions.
It follows that there is a primitive recursive function~$m$ such that $m(n)$
codes the measure $\mu(U_n)$ as a dyadic rational.
Since the measures $\mu(U_n)$ may be computed
primitive recursively, one could equivalently
define a primitive recursive test to be a
a primitive recursive sequence $(V_n)_{n \geq 0}$ such that
$\lim_n \mu(V_n) = 0$ and there is a primitive recursive function~$f$
such that, for each~$p$, $\mu(V_{f(p)}) \leq 2^{-p}$.

Observe that $\bigcap_n U_n$ is a
\pz{}~class of measure zero, so any primitive recursive test is
a Kurtz test and hence also a Schnorr test.

\begin{definition}
An infinite sequence $X \in \{0,1\}^\omega$ is {\em \ml $\BP$ random}
if $X$ \emph{passes} every primitive recursive test, that is, for every
primitive recursive test $(U_n)_{n \geq 0}$, there is some~$n$ such
that $X \notin U_n$.
\end{definition}

By the remarks above, every Kurtz random real is  \ml $\BP$ random.

\begin{proposition} \label{prop1}
  $X$ is \ml $\BP$ random if and only if there is no
  primitive recursive sequence $(U_n)_{n \geq 0}$ of clopen sets with
  $\mu(U_n) = 2^{-n}$ such that $X \in \bigcap_n U_n$.
\end{proposition}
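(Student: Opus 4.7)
The plan is to handle the two implications separately. The $(\Rightarrow)$ direction is essentially immediate: a primitive recursive sequence $(U_n)_{n \geq 0}$ of clopen sets with $\mu(U_n) = 2^{-n}$ is, in particular, a primitive recursive test (since $2^{-n} \leq 2^{-n}$), so any $X \in \bigcap_n U_n$ fails this test and hence is not \ml $\BP$ random. The content of the proposition lies in $(\Leftarrow)$.

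For $(\Leftarrow)$ I would argue by contrapositive. Suppose $X$ is not \ml $\BP$ random, witnessed by a primitive recursive test $(U_n)_{n \geq 0}$ with $X \in \bigcap_n U_n$ and $\mu(U_n) \leq 2^{-n}$. I would use the canonical representation noted after the definition: fix primitive recursive $g, r, \ell$ so that $U_n = [G_n]$ where $G_n = \{\tau_{1,n}, \ldots, \tau_{r(n),n}\}$ consists of $r(n)$ strings all of common length $\ell(n)$, and $\mu(U_n) = r(n)\cdot 2^{-\ell(n)}$. From this data I will build a primitive recursive sequence $(V_n)$ with $V_n \supseteq U_n$ and $\mu(V_n) = 2^{-n}$, which immediately forces $X \in \bigcap_n V_n$.

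The key construction is to first lift $U_n$ to a convenient length and then pad. Set $\ell'(n) = \max(\ell(n), n)$ and let $G'_n$ be the set of all length-$\ell'(n)$ extensions of strings in $G_n$, so that still $U_n = [G'_n]$ and $|G'_n| = r(n)\cdot 2^{\ell'(n)-\ell(n)}$. Since $\mu(U_n) \leq 2^{-n}$ we have
\[
 |G'_n| \;=\; \mu(U_n)\cdot 2^{\ell'(n)} \;\leq\; 2^{\ell'(n)-n}.
\]
Now enlarge $G'_n$ to a set $H_n$ of length-$\ell'(n)$ strings of size exactly $2^{\ell'(n)-n}$ by appending the lexicographically least strings of length $\ell'(n)$ not already present in $G'_n$. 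Define $V_n = [H_n]$. By construction, $V_n \supseteq U_n$ and $\mu(V_n) = |H_n|\cdot 2^{-\ell'(n)} = 2^{-n}$.

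The main thing to verify, and the only mildly delicate step, is that $(V_n)_{n\geq 0}$ is in fact a primitive recursive sequence of clopen sets. This reduces to checking that the operations involved—computing $\ell'(n)$, listing all extensions of the strings in $G_n$ to length $\ell'(n)$, sorting them lexicographically, enumerating lexicographically least strings of length $\ell'(n)$ outside this list until the target count $2^{\ell'(n)-n}$ is reached, and producing the corresponding code $C(H_n)$—are all primitive recursive in the data given by $g$, $r$, $\ell$. Each of these is a bounded search or bounded loop controlled by quantities already primitive recursive in $n$, so a routine (if tedious) unwinding shows the entire construction is primitive recursive. This completes the contrapositive and hence the proof.
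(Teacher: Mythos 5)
Your proposal is correct and follows essentially the same route as the paper: both pad the sets $G_n$ with additional strings of a fixed length so that the measure becomes exactly $2^{-n}$. The only difference is cosmetic—where the paper simply assumes $\ell(n)\geq n$ without loss of generality, you justify this explicitly by lifting to length $\ell'(n)=\max(\ell(n),n)$, and you spell out the primitive recursiveness check in slightly more detail.
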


\begin{proof} The if direction is immediate. Now suppose that
there is a primitive recursive sequence $(V_n)_{n \geq 0}$ such that
$\mu(V_n) \leq 2^{-n}$ and $X \in \bigcap_n V_n$.
Let $V_n = \bigcup_{\sg \in G_n} [\sg]$
where $G_n \subseteq \{0,1\}^{\ell(n)}$
for some primitive recursive function $\ell(n)$, where $\ell(n)\geq n$
for all $n$.  Then
$\mu(V_n) = card(G_n)/2^{\ell(n)} \leq 2^{-n}$.
Define $H_n$ to be $G_n$ together with
$2^{\ell(n)-n} - card(G_n)$ additional strings of length $\ell(n)$ and
let $U_n = [H_n]$. Then for each~$n$,
$X \in U_n$ and $\mu(U_n) = 2^{-n}$.
\end{proof}


We also need the notion of a \emph{weak} primitive recursive
test. A \emph{weak} primitive recursive
test $(U_n)_{n\geq 0}$ is a primitive recursive test
for which there are
a primitive recursive sequence $(G_n)_{n \geq 0}$ and
a primitive recursive function $\ell$ such that
for each $n$,
$U_n = [G_n]$ and, for all $\tau \in G_n$,  $|\tau| = \ell(n)$  and
$\mu(U_{n+1} \cap [\tau]) \leq \frac12 \mu([\tau])$.

By definition, every weak primitive recursive test $(U_n)_{n \geq 0}$
is also a primitive recursive test.
Conversely,
we can convert a primitive recursive test
$(U_n)_{n \geq 0}$ into a weak primitive recursive test
as follows. First, we may assume that $U_{n+1} \subseteq U_n$ for each
$n$, since the sequence $W_n = \bigcap_{i \leq n} U_i$ is
also a primitive recursive test with $\mu(W_n) \leq \mu(U_n) \leq
2^{-n}$. Next suppose
$U_{n} = [\tau_{1,n}] \cup \cdots \cup [\tau_{k(n),n}]$
where there is a primitive recursive function $\ell$ such that
$|\tau_{i,n}|= \ell(n)$ for all~$i$, so
each interval $[\tau_{i,n}]$ has measure
exactly $2^{-\ell(n)}$.  The clopen set $U_{\ell(n)+1}$ has a
total measure $\leq 2^{-\ell(n)-1}$, so
$\mu(U_{\ell(n)+1} \cap [\tau_{i,n}]) \le \mu(U_{\ell(n)+1}) \le \frac12 \mu([\tau_{i,n}])$.
Then we can define
a primitive recursive weak test $(V_n)_{n \geq 0}$ as follows.
Let $h(0) = 0$ and let $V_0 = U_0$. Then let $h(1) = \ell(0)+1$ and $V_1 = U_{h(1)}$.
In general for $n >1$, we let $h(n+1) = \ell(h(n))+1$ and let
$V_{n+1} = U_{h(n+1)}$.
The sequence $V_0,V_1,\dots$ will be a weak primitive recursive
test. Since the sequence $(V_n)_{ n \geq 0}$ is a subsequence of the
original sequence $(U_m)_{m \geq 0}$, it follows that $\bigcap_n V_n
= \bigcap_n U_n$, so that $X$ passes the weak test $(V_n)_{n \geq 0}$
if and only if it passes the original test.

We have established the following.

\begin{proposition} \label{prop2}
$X$ is \ml $\BP$ random if and only
if it passes every weak primitive recursive test.
\end{proposition}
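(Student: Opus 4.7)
The plan is to extract the two directions from the discussion that already precedes the statement, making the constructions there precise. The easy direction is immediate: by the definition of a weak primitive recursive test, every such test is in particular a primitive recursive test (a primitive recursive sequence of clopen sets with $\mu(U_n)\leq 2^{-n}$). So if $X$ is \ml $\BP$ random and hence passes every primitive recursive test, $X$ passes every weak one.

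For the harder direction, I would assume $X$ fails \ml $\BP$ randomness via some primitive recursive test $(U_n)_{n\geq 0}$ and build a weak primitive recursive test $(V_n)_{n\geq 0}$ with $X\in\bigcap_n V_n$. The first step is to replace $U_n$ by $W_n=\bigcap_{i\leq n}U_i$: this is still primitive recursive (finite intersections of clopen sets given by codes are primitive recursive), it is nested, it satisfies $\mu(W_n)\leq 2^{-n}$, and $X\in\bigcap_n W_n$. Without loss of generality I may also assume each $W_n$ is presented as $[G_n]$ with $G_n\subseteq\{0,1\}^{\ell(n)}$ for a primitive recursive length function $\ell$, by padding the generating strings to a common length. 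The second step is to subsample at a fast enough rate: define $h(0)=0$ and $h(n+1)=\ell(h(n))+1$, and set $V_n=W_{h(n)}$. Since $h$ is defined by primitive recursion from $\ell$, and we can obtain codes for $V_n$ primitive recursively from $W_{h(n)}$, the sequence $(V_n)$ is a primitive recursive sequence of clopen sets, with $\mu(V_n)\leq 2^{-h(n)}\leq 2^{-n}$.

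The third step is to verify the weak-test inequality. Each $V_n$ is a union of intervals $[\tau]$ of common length $\ell(h(n))$, so $\mu([\tau])=2^{-\ell(h(n))}$. By construction, $V_{n+1}=W_{\ell(h(n))+1}$, so $\mu(V_{n+1}\cap[\tau])\leq\mu(V_{n+1})\leq 2^{-\ell(h(n))-1}=\tfrac12\mu([\tau])$, which is exactly the weak-test condition. Finally, since $(V_n)$ is a cofinal subsequence of the nested sequence $(W_n)$, we have $\bigcap_n V_n=\bigcap_n W_n\ni X$, so $X$ fails the weak primitive recursive test $(V_n)$, contradicting the hypothesis.

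The only step that requires any care is the bookkeeping that all the objects produced are genuinely primitive recursive: the common-length presentation of $W_n$, the function $h$, and the codes for $V_n$. None of this is subtle because primitive recursive functions are closed under the operations involved (finite union/intersection of finite sets of strings, padding to a fixed length, and simple primitive recursion), so I do not expect any real obstacle beyond writing these routine verifications carefully.
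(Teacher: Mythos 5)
Your proof is correct and follows essentially the same route as the paper: taking intersections to make the test nested, padding to a common length $\ell(n)$, subsampling via $h(n+1)=\ell(h(n))+1$, and verifying the weak-test inequality from $\mu(V_{n+1})\leq 2^{-\ell(h(n))-1}=\tfrac12\mu([\tau])$. The routine primitive-recursiveness bookkeeping you flag at the end is exactly what the paper leaves implicit as well.
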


\subsubsection{Kolmogorov BP randomness}\label{sec:KolmogorovBP}

Let $M:\{0,1\}^* \rightarrow \{0,1\}^*$ be a primitive recursive
function. Let $C_M(\tau)$ be the length~$|\sigma|$
of the shortest string~$\sigma$ such that $M(\sigma) = \tau$,
that is, the length of the
shortest $M$-description of~$\tau$. Notice that we are using plain and
not prefix-free complexity. We say that $X$ is
{\em primitive recursively compressed}
by~$M$ if there exists a primitive recursive function $f$ such that,
for every $c \in \N$,
$C_M(X\res f(c)) \leq f(c)-c$.

\begin{definition} \label{def:KolmogorovRandomDefn}
An infinite sequence $X \in \{0,1\}^\omega$ is
\emph{Kolmogorov $\BP$ random} if it is not primitive recursively
compressed by any primitive recursive function $M:\{0,1\}^* \rightarrow \{0,1\}^*$.
\end{definition}

Our definition of primitive recursive compressibility
is a natural effective analogue
of the usual definition of Kolmogorov compressibility
which says that, for every $c \in \N$, there exists $n$ such that $C_M(X \res
n) \leq n - c$. That is, given the formulation ``for every $c$, there exists $n$'', there must
exist a function~$f$ which computes a value $n = f(c)$ for each input $c$. The complexity of
the function~$f$ calibrates the difficulty of computing a value of $n$ for an arbitrary
input $c$.  In the setting
of primitive recursive computability,
it is reasonable to require that $f$ be primitive recursive.
In addition, this makes the definition equivalent to the
definition given above for
$\BP$ \ml randomness.

The function $f$ can be assumed to be strictly increasing
without loss of generality; for this see the remark
after the proof of Theorem~\ref{bps}.

 Of course, one defines
Kolmogorov randomness in terms of prefix-free
complexity $K_M$ since there are no infinite Kolmogorov random
sequences for plain complexity.  The definition
of Kolmogorov $\BP$ random uses plain complexity since
every primitive recursive function is total so there are no
prefix-free machines.
However, Section~\ref{sec:prefixfreeBP}
considers a version of prefix-free
complexity for primitive recursive functions which
uses primitive recursive functions~$M$ such that
$M(\sigma)$ may diverge.  This will be done by introducing
a new symbol~$\infty$ as a possible output of $M(\sigma)$ to signify that
$M(\sigma)$ diverges.  It will not be hard to show that this makes no
difference.

We also consider process machines and the resulting notion of
process complexity.  A partial computable
function $M:\{0,1\}^*\rightarrow\{0,1\}^*$ is said to be a \emph{process machine}
if, whenever $\tau \sqsubset \tau'$ and $\tau,\tau' \in Dom(M)$, then
$M(\tau) \sqsubset M(\tau')$, so that $M$~is extension
preserving when defined.  A process machine~$M$
is a \emph{strict process machine}
if, whenever $\tau \sqsubset\tau' \in Dom(M)$,
then $\tau \in Dom(M)$, so the domain of~$M$ is
closed under prefixes.
A strict process
machine~$M$ is a \emph{quick process machine} if $M$~is total and
there is an order function~$h$ such that, for all $\tau \in \{0,1\}^*$,
$|M(\tau)| \geq h(|\tau|)$.
Recall that $h:\N\rightarrow \N$ is an {\em order function}
provided $h$~is non-decreasing and $\lim_n h(n) = \infty$.
A strict process machine~$M$
is called a \emph{quick process $\BP$ machine} if
the order function~$h$ is primitive recursive.
Finally, $X\in\{0,1\}^\omega$ is {\em quick process $\BP$ random}
provided $X$~is not primitively recursively compressed by any
quick process $\BP$ machine.

The definition of a process machine is due to
Levin and Zonkin~\cite{LZ70} and a
similar notion was defined by Schnorr~\cite{S73}.
Day~\cite{Day12} gives characterizations of
computable randomness, Schnorr randomness, and weak randomness
using quick process machines.
Theorem~\ref{bps} below shows
that $X$ is Kolmogorov $\BP$ random
if and only if $X$ is quick process $\BP$ random.

\subsubsection{Martingale BP randomness}\label{sec:martingaleBP}

A martingale is a function $d: \{0,1\}^* \to \Q \cap [0,\infty)$
such that $d(\emptyset)=1$ and, for all $\sigma \in \{0,1\}^*$,
$d(\sigma) = (d(\sigma \fr 0)+d(\sigma \fr 1)) / 2$.
Of course, any primitive recursive
martingale is also a computable martingale. We say that the martingale~$d$
{\em succeeds primitive recursively on~$X$} if there is a
primitive recursive function~$f$ such that, for all $n$,
$d(X \res f(n)) \geq 2^n$.
(Of course, we could replace $2^n$ here with any strictly increasing
primitive recursive function.)  In
general, a martingale $d$ is said to succeed on $X$ if $\limsup_n d(X
\res n) = \infty$, that is, for every $n$, there exists $m$ such that
$d(X \res m) \geq 2^n$. Thus our definition is an effectivization of
the usual definition with a primitive recursive function~$f$
which witnesses some point where $d$ will return $2^n$.
A  martingale could be thought of as a financial advisor who guarantees
the client's  eventual wealth; effective success of the martingale means that
the advisor can predict when the client will reach a given level of wealth.
Again, the requirement that $f$ be primitive recursive
is necessary to have the equivalence
with  $\BP$ \ml randomness.
\begin{definition}
$X$ is \emph{martingale $\BP$ random} if there is no
primitive recursive martingale which succeeds primitive recursively on~$X$.
\end{definition}
If $X$ is \emph{not}  martingale $\BP$ random, then there is a
computable martingale which succeeds primitive recursively on~$X$ and
thus certainly succeeds on~$X$, so $X$ is not computably random.
Hence every computably random real is also a martingale $\BP$ random
real.

The definition of martingale $\BP$ random real has
the following equivalent formulations.

\begin{proposition} \label{prop3} The following are equivalent.
\begin{enumerate}
\item $X$ is  martingale $\BP$ random.
\item There do not exist a primitive
  recursive martingale~$d$ and a primitive recursive function~$f$ such
  that, for every~$n$, there exists $m\leq f(n)$
  such that $d(X \res m) \geq 2^n$.
\item There do not exist a primitive recursive martingale~$d$
  and a primitive
  recursive function~$f$ such that $d(X \res m) \geq 2^n$ for
  all $n$ and all $m \geq f(n)$.
\end{enumerate}
\end{proposition}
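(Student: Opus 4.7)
The plan is to establish the cycle $(2) \Rightarrow (1) \Rightarrow (3) \Rightarrow (2)$; two of these implications are trivial and only the third requires a construction. For $(2) \Rightarrow (1)$, argue by contrapositive: if primitive recursive $d$ and $f$ witness the failure of (1), so that $d(X\res f(n)) \geq 2^n$ for every $n$, then the same pair $(d,f)$ witnesses the failure of (2) with the choice $m := f(n)$, since $m \leq f(n)$ trivially. Dually, $(1) \Rightarrow (3)$: given pr $d, f$ with $d(X\res m) \geq 2^n$ for every $m \geq f(n)$, specialise to $m = f(n)$ to obtain $d(X\res f(n)) \geq 2^n$, refuting (1). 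Both implications use no construction and work with the same pair $(d,f)$ on each side.

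The substantive implication is $(3) \Rightarrow (2)$. By contrapositive, I assume a pr martingale $d$ and pr $f$ such that, for every $n$, some $m \leq f(n)$ satisfies $d(X\res m) \geq 2^n$, and I must construct a pr martingale $d'$ and pr $f'$ with $d'(X\res m') \geq 2^n$ for every $m' \geq f'(n)$. This is a quantitative \emph{savings trick}: modify the betting strategy of $d$ so that any winnings are transferred into a monotone reserve that cannot be lost. Concretely, I define $d'(\sigma) := a(\sigma) + s(\sigma)$ by joint primitive recursion on $|\sigma|$, starting from $a(\emptyset) = 1$ and $s(\emptyset) = 0$, with update rule
\[
  s(\sigma b) := s(\sigma) + \max(0,\, a(\sigma) - 1),
  \qquad
  a(\sigma b) := \min(a(\sigma), 1) \cdot d(\sigma b)/d(\sigma)
\]
(reading the ratio as $1$ if $d(\sigma) = 0$). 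The algebraic identity $\min(a,1) + \max(0, a - 1) = a$ together with the martingale identity for $d$ yields that $d'$ is a martingale; $d'$ is clearly primitive recursive, and $s$ is non-decreasing along extensions.

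The quantitative core is the lemma that $s(\sigma) \geq h\bigl(\max_{\tau \sqsubseteq \sigma} d(\tau)\bigr)$ for a primitive recursive order function $h$ satisfying $h(2^k) \geq k/c$ for a fixed constant $c$ and all sufficiently large $k$. This is proved by tracking the log-capital walk $Y_n := \log_2 a(X\res n)$, observing via telescoping that the ``log-skim'' increments $U_n := \max(0, Y_n)$ sum to $L_N - Y_N$ where $L_n := \log_2 d(X\res n)$, and applying Jensen's inequality (convexity of $x \mapsto 2^x - 1$) to bound $\sum_n (2^{Y_n} - 1)_+$ from below by a constant multiple of $L_N - Y_N$. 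A minor case split handles the possibility that a single large jump pushes $Y_n$ past $L_N/2$, in which case the skim at the following step already produces $s \geq 2^{L_N/2}$. Combining with the hypothesis, choose a primitive recursive $k(n)$ with $h(2^{k(n)}) \geq 2^n$ (for instance $k(n) := c \cdot 2^n + O(1)$); then there is some $m \leq f(k(n))$ with $d(X\res m) \geq 2^{k(n)}$, so by monotonicity of $s$ one has $s(X\res m') \geq 2^n$ for every $m' \geq f(k(n)) + 1$. Setting $f'(n) := f(k(n)) + 1$ (which is primitive recursive) yields $d'(X\res m') \geq s(X\res m') \geq 2^n$ for all $m' \geq f'(n)$, as required.

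The main obstacle, and essentially the only nontrivial part of the proof, is the quantitative lower bound on the reserve $s$ in terms of $\max_{\tau \sqsubseteq \sigma} d(\tau)$, whose verification amounts to the telescoping identity and Jensen calculation just sketched on the reflected log-walk. The primitive recursiveness of $d'$ and $f'$ is automatic from the recursive definitions and closure of the primitive recursive functions under composition and rational arithmetic.
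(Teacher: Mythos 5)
Your proof is correct and takes essentially the same route as the paper: the two easy implications are handled identically, and for (3)\,$\Rightarrow$\,(2) you use the same savings-account trick (the paper banks half the capital at each doubling threshold $2^{n+1}$, getting $d'(\tau)\ge n$ after $n$ doublings and setting $f'(n)=f(2^{n+1})$, while you skim per step with the active capital capped at $1$ and verify the same logarithmic savings bound $s \gtrsim \log_2 \max_{\tau\sqsubseteq\sigma} d(\tau)$ by telescoping plus the tangent-line estimate $2^y-1\ge y\ln 2$, then compose $f$ with an exponential $k(n)$). Two minor remarks: your ``large jump'' case split is vacuous, since any martingale on a binary alphabet satisfies $d(\sigma b)\le 2\,d(\sigma)$, so $Y_n\le 1$ always; and the convexity step you cite as Jensen is really just the tangent-line bound for the convex function $y\mapsto 2^y-1$.
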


\begin{proof}
The implications (1)\,$\Rightarrow$\,(3) and (2)\,$\Rightarrow$\,(1) are obvious.
The proof of (3)\,$\Rightarrow$\,(2)
uses the idea of a \emph{savings account} as formulated in
\cite{DH11,Niesbook}.  Let the martingale~$d$ and function~$f$
be as in (2). We shall  modify these to form a
new pair $d$ and~$f$ which satisfy~(3).
The intuition is that $d(\tau)$ is modified so that whenever $d(\tau)\ge 2^{n+1}$,
then one half of the working capital, as represented by~$d(\tau)$,
is transferred to a savings account.  That is, loosely speaking,
every time the capital is doubled, one half of it is placed into
the savings account.
Formally, we define $d^\prime(\tau)$ as follows.
Given $\tau\in \{0,1\}^*$, define
$\ell^\tau_n$ to equal the least value,
if any, such that $d(\tau\res \ell^\tau_n) \ge 2^{n+1}$.
Let $\tau_n$ equal $\tau \res \ell^\tau_n$, and let $n(\tau)$ be the
least value of~$n$ for which $\ell^\tau_n$ is undefined.
Thus, $\tau_{n(\tau)-1}\sqsubset \tau$, and $\tau_{n(\tau)}$
is undefined.
Note $n(\emptyset) = 0$.
The martingale~$d^\prime(\tau)$ is defined by
\[
d^\prime(\tau) ~= ~ \frac{d(\tau)}{2^{n(\tau)}} +
   \sum\limits_{i=0}^{n(\tau)-1} \frac{d(\tau_i)}{2^{i+1}}.
\]
As the reader can easily verify, $d^\prime$~is a martingale.
The intuition is that the summation on the righthand side
of the equation equals the capital placed in
the savings account, and
$d(\tau)/2^{n(\tau)}$ is the remaining working capital.

Since $d(\tau_i)\ge 2^{i+1}$, we have $d(\tau)\ge n$ whenever
$\ell^\tau_n$ is defined; in particular, this holds whenever
$d(\tau^\prime)\ge 2^{n+1}$ for some $\tau^\prime\sqsubset\tau$.
Finally, define $f^\prime(n)$ to equal $f(2^{n+1})$.  Let
$X\in\{0,1\}^\omega$  and $m\ge f^\prime(n)$.  Since
$m \ge f(2^{n+1})$, $d(X \res m)\ge 2^{2^{n+1}}$.
Therefore, $d^\prime(X \res m) \ge 2^n$.  This establishes
that the property of~(3) holds for $d^\prime$ and~$f^\prime$.
\end{proof}

\subsection{Equivalences}\label{sec:BPequivalences}

The main result of this section is the next
two theorems showing that the three versions of $\BP$ random
described above are equivalent.

\begin{theorem} \label{bps}
The following statements are equivalent for $X \in \{0,1\}^\omega$:
\begin{enumerate}
\item $X$ is \ml $\BP$ random.
\item $X$ is Kolmogorov $\BP$ random.
\item $X$ is quick process $\BP$ random.
\end{enumerate}
\end{theorem}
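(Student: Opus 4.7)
The plan is to establish the cycle $(1) \Rightarrow (3) \Rightarrow (2) \Rightarrow (1)$. The implication $(3) \Rightarrow (2)$ is essentially free, since a quick process $\BP$ machine is itself a total primitive recursive function, so any pair $(M, f)$ witnessing quick process $\BP$ compressibility of $X$ also witnesses Kolmogorov $\BP$ compressibility.

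For $(2) \Rightarrow (1)$, suppose primitive recursive $M$ and $f$ give $C_M(X \res f(c)) \leq f(c) - c$ for every $c$. For each $n$, primitive recursively enumerate
\[
G_n \ = \ \{M(\sigma) : |\sigma| \leq f(n+1) - (n+1), \ |M(\sigma)| = f(n+1)\}.
\]
There are fewer than $2^{f(n+1) - n}$ eligible $\sigma$, so $U_n := [G_n]$ has measure less than $2^{-n}$, and $(U_n)_{n \geq 0}$ is a primitive recursive test. By the compression hypothesis at $c = n+1$, $X \res f(n+1) \in G_n$, so $X \in \bigcap_n U_n$, contradicting \ml $\BP$ randomness.

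The main work is $(1) \Rightarrow (3)$. Apply Proposition~\ref{prop2} to fix a weak primitive recursive test with $U_n = [G_n]$, $|\tau| = \ell(n)$ for $\tau \in G_n$, and $\mu(U_{n+1} \cap [\tau]) \leq \frac{1}{2}\mu([\tau])$ for each $\tau \in G_n$. After routine adjustments we may assume $\ell(0) = 0$ and $U_{n+1} \subseteq U_n$, so each $\tau \in G_{n+1}$ extends a unique element of $G_n$. The halving property limits the $G_{n+1}$-successors of any $\tau \in G_n$ to at most $2^{\ell(n+1) - \ell(n) - 1}$, which we primitive recursively index by bitstrings of length exactly $\ell(n+1) - \ell(n) - 1$. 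Concatenating selector blocks along a path gives, for each $\tau \in G_n$, a canonical codeword of length $L(n) := \ell(n) - n$.

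Now define $M$ on input $\rho$ by letting $n(\rho)$ be the largest $n$ with $L(n) \leq |\rho|$, decoding the prefix $\rho \res L(n(\rho))$ as above, and returning the corresponding element of $G_{n(\rho)}$ (with a default output if some selector block overshoots the available successors). Since $M(\rho)$ depends only on $\rho \res L(n(\rho))$, and the nested tree forces $\tau_{n'} \sqsubset \tau_n$ whenever $n' \leq n$, the map $M$ is total, extension preserving, and primitive recursive. The bound $|M(\rho)| = \ell(n(\rho)) \geq h(|\rho|)$ holds for the primitive recursive order function $h(m) := \ell(n(m))$, so $M$ is a quick process $\BP$ machine. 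Finally, for $X \in \bigcap_n [G_n]$ the segment $X \res \ell(n)$ lies in $G_n$ and thus has a codeword $\sigma^{(n)}$ of length $L(n)$ with $M(\sigma^{(n)}) = X \res \ell(n)$; taking $f(n) := \ell(n)$ gives $C_M(X \res f(n)) \leq f(n) - n$. The main obstacle is reconciling totality, extension preservation, primitive recursivity, and the primitive recursive lower bound on $|M(\rho)|$: defining $M$ to be constant on each interval $L(n) \leq |\rho| < L(n+1)$ is the key device, and the halving property of weak tests is exactly what makes the required codelength $\ell(n) - n$ fit.
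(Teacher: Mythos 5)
Your decomposition is, in substance, exactly the paper's, but your implication labels are systematically reversed: the subclass observation (a quick process $\BP$ machine is a total primitive recursive machine) shows that Kolmogorov $\BP$ randomness implies quick process $\BP$ randomness, i.e.\ (2)\,$\Rightarrow$\,(3), not (3)\,$\Rightarrow$\,(2); your ``(2)\,$\Rightarrow$\,(1)'' assumes $\neg(2)$ and derives $\neg(1)$, which is the contrapositive of (1)\,$\Rightarrow$\,(2); and your ``(1)\,$\Rightarrow$\,(3)'' assumes $\neg(1)$ and derives $\neg(3)$, i.e.\ (3)\,$\Rightarrow$\,(1). Fortunately the three implications you actually prove, (1)\,$\Rightarrow$\,(2)\,$\Rightarrow$\,(3)\,$\Rightarrow$\,(1), still form a cycle through all three statements --- indeed the very cycle the paper proves --- so this is only a bookkeeping slip; your test-from-compression argument and your machine-from-weak-test construction otherwise track the paper's proof closely.

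There is, however, one genuine gap in the main construction: you never arrange that the codeword length $L(n)=\ell(n)-n$ is strictly increasing, or even unbounded, and everything downstream silently depends on this. If $L$ is bounded, then ``the largest $n$ with $L(n)\le|\rho|$'' does not exist, so $n(\rho)$, $M$, and your order function $h(m)=\ell(n(m))$ are all undefined; and even when $n(\rho)$ exists you need a primitive recursive bound on the search for it, which you never supply. Concretely, let $X$ be any primitive recursive real and $U_n=[X\res n]$: this is a weak primitive recursive test (since $\mu(U_{n+1}\cap[X\res n])=\frac12\mu([X\res n])$) with $\ell(n)=n$, hence $L(n)=0$ for all~$n$, so your selector blocks all have length~$0$, your codewords never grow, and no total machine could satisfy $C_M(X\res\ell(n))\le\ell(n)-n=0$ for every~$n$; the test must be modified before any machine can witness compression. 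The paper handles this up front by assuming w.l.o.g.\ that $f(c+1)-(c+1)>f(c)-c$, obtained by splitting each $[\tau]$ into $[\tau\fr 0]\cup[\tau\fr 1]$ (i.e., replacing the strings of $G_n$ by all their extensions of a suitably greater length), which leaves the clopen sets unchanged. This normalization also gives $\ell(n)-n\ge n$, which is exactly what bounds the search for $n(\rho)$ by $|\rho|$ and makes $M$ primitive recursive and $h$ a genuine order function. With that one addition (plus a check that your ``default outputs'' are chosen extension-preservingly, as in the paper's $M(\nu)\fr 0^{\dots}$ device), your argument goes through.
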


\begin{proof} The implication (2)\,$\Rightarrow$\,(3) is immediate.

{(1)\,$\Rightarrow$\,(2)}: Suppose (2) is false
and $X$ is not Kolmogorov $\BP$ random.  Then
there exist primitive recursive $M$ and~$f$ such that
$C_M(X \res f(c)) \leq f(c)-c$ for all $c \in \N$.

Let $U_c = \{Y\in\{0,1\}^\omega: C_M(Y \res f(c+1)) \leq f(c+1) - c-1\}$.
We claim this is a uniformly primitive recursive sequence of clopen
sets.  To see this, let
\[
G_c ~=~ \{M(\sigma): \sigma \in \{0,1\}^{\leq f(c+1) - c-1}\}
\cap \{0,1\}^{f(c+1)}.
\]
Then $U_c = [G_c]$, and this expresses
$(U_c)_{c \geq 0}$ as a primitive recursive sequence
of clopen sets.

We also claim that $\mu(U_c) \leq 2^{-c}$.  For this, fix~$c$ and let
$U_c = [\tau_1] \cup [\tau_2] \cup \dots \cup [\tau_k]$,
for distinct $\tau_i \in \{0,1\}^{f(c+1)}$.
Thus there exist $\sigma_1,\dots,\sigma_k$ such
that $|\sigma_i| \leq f(c+1) - c-1$ and
$M(\sigma_i) = \tau_i$ for each~$i\le k$.
Since there are only $2^{f(c+1) - c}-1$ strings
of length $\leq f(c+1)-c-1$, we have
$k < 2^{f(c+1) - c}$.
Since for each $i$, $\mu([\tau_i]) = 2^{-f(c+1)}$,
\[
\mu(U_c) ~=~ k \cdot 2^{-f(c+1)} ~<~ 2^{f(c+1)-c} \cdot 2^{-f(c+1)} = 2^{-c}.
\]
Therefore, $(U_c)_c$ is a primitive recursive test.
By assumption, $X \in U_c$ for all $c \geq 0$, so $X$
is not \ml $\BP$ random.

{(3)\,$\Rightarrow$\,(1)}: Suppose that $X$
is not \ml $\BP$ random. Then there exist primitive recursive functions
$g$, $k$, and $f$ so that for all $c \geq 0$, $g(c)$ is a code of a
set $G_c \subseteq \{0,1\}^{f(c)}$ with cardinality $k(c)$,
such that if $U_c = [G_c]$ then $\mu(U_c) \leq 2^{-c}$ and
such that $X \in \bigcap_c U_c$.  Furthermore, we may assume by
Proposition~\ref{prop2} that this is a weak test,
so that, for each $\sigma \in G_c$,
$\mu([\sigma] \cap U_{c+1}) \le \frac12 \mu([\sigma])$.
By the proof of Proposition~\ref{prop2}, we may assume
that $U_{c+1}\subset U_c$.
We may assume without loss of
generality that for each $c$, $f(c+1) - (c+1) > f(c) - c$.  This is
because we may always break each $[\tau]$ into $[\tau \fr 0] \cup
[\tau \fr 1]$ to increase $f(c)$ by one, if necessary.
Also w.l.o.g., $f(0)=0$, so $U_0 = \{0,1\}^\omega$.

We define a quick process $\BP$ machine $M$ such that
$C_M(\sigma) \leq f(c) - c$ for all~$c$ and all $\sigma\in G_c$.
Since $X\res f(c) \in G_c$, we have
$C_M(X \res f(c)) \leq f(c) - c$ for all~$c$.
The machine~$M$ is defined in stages as follows.

At stage $c=1$, we have $\mu(U_1) \leq \frac 12$, and since
$\mu(U_1) = k(1) \cdot 2^{-f(1)}$, it follows that
$k(1) \leq 2^{f(1)-1}$. Let $G_1 = \{\tau_1,\dots,\tau_{k(1)}\}$.
Take the lexicographically first $k$ strings
$\sigma_1, \dots, \sigma_{k(1)}$ of length $f(1) - 1$,
and define $M(\sigma_i) = \tau_i$.
To make $M$ a total function, the remaining strings of length
$f(1) - 1$ are all mapped to~$0^{f(1)}$, and all strings of length
$< f(1) - 1$ are mapped to the empty string~$\emptyset$.

Observe that for all strings~$\sigma$ of length $< f(1)-1$,
$|M(\sigma)| = 0$ and for all strings $\sigma$ of length $f(1) -1$,
$|M(\sigma)| = f(1)$.

After stage~$c$, we have defined $M(\sigma)$ for all strings~$\sigma$
of length $\leq f(c)-c$ so that $M$ is extension-preserving and such
that, for each $b \leq c$ and each $\tau \in G_b$, there exists
a $\sigma$ of length $f(b)-b$ with $M(\sigma) = \tau$. Furthermore, for
any~$\sigma$, if $f(b-1)-b+1 \leq |\sigma| < f(b)-b$, then
$|M(\sigma)| = f(b-1)$, and if $|\sigma| = f(c)-c$, then
$|M(\sigma)| = f(c)$.

At stage $c+1$, we define $M(\sigma)$ when
$f(c)-c < |\sigma| \leq f(c+1)-c-1$.  For each $\nu$ of length
$f(c)-c$ we work on the extensions of $\nu$ independently.
We set $M(\sigma) = M(\nu)$ for any extension~$\sigma$ of~$\nu$
of length $< f(c+1)-c-1$. If $M(\nu) \notin G_c$, then we let
$M(\sigma) = M(\nu) \fr 0^{f(c+1)-1-f(c)}$ for
all $\sigma\sqsupset\nu$ with $|\sigma| = f(c+1)-c-1$.
Otherwise, $M(\nu)\in G_c$ and let $H_\nu$ be the members of~$G_{c+1}$
which extend $M(\nu)$:
\[
H_\nu ~=~ \{\rho \in \{0,1\}^{f(c+1)-f(c)}: M(\nu) \fr \rho \in G_{c+1}\}.
\]
Since $(U_c)_c$ is a weak test,
$\mu(H_{\nu}) \leq \frac12$ and so $k_\nu = |H_\nu| \le 2^{f(c+1)-f(c)-1}$.
Enumerate $H_\nu$ in lexicographic order
as $\{\rho_1,\ldots,\rho_{k_\nu}\}$.
Let $\sigma_i$ be the $i$-th string of length~$f(c+1)-f(c)-1$
in lexicographical order with $i\le k_\nu$
and set $M(\nu\fr \sigma_i) =M(\nu)\fr \rho_i$.
For all other $\sigma\in \{0,1\}^{f(c+1)-f(c)-1}$, set
$M(\nu\fr \sigma) =M(\nu)\fr 0^{f(c+1)-f(c)}$.

It is clear that $M$ continues to be
extension-preserving.
Since $U_{c+1}\subset U_c$, each
$\tau\in G_{c+1}$ is equal to
$M(\nu) \fr \rho$ for some~$\nu$ and some $\rho \in H_{\nu}$. It
follows that $\tau = M(\sigma)$ for some~$\sigma$
of length $f(c+1)-c-1$.
Finally, for any~$\sigma$, if
$f(c)- c \leq |\sigma| < f(c+1)-c-1$, then $|M(\sigma)| = f(c)$ and
also if $|\sigma| = f(c+1)-c-1$, then $|M(\sigma)| = f(c+1)$.

To see that $M$ is primitive recursive,
observe that,
since $f(c+1) -\penalty10000 c -\penalty10000 1 > f(c) - c$ for all~$c$,
we have $f(c) - c \ge c$. Thus, given a string~$\sigma$ of length~$m$,
we need check only values $c \le m$
to find the least~$c$ such that $m \leq f(c) - c$.
Then we simply run the process above for $c$~stages to compute $M(\sigma)$.

To verify that $M$ is a quick process machine, we define a function~$h(m)$
so that $|M(\tau)| \geq h(|\tau|)$ for all $\tau$:  First
let $h'(m)$ be the least $c \leq m$ such that $m < f(c+1) - c -1$ and
then let $h(m) = f(h'(m))$.

By assumption, $X \in U_c$ for every~$c$, so $X \res f(c) = \tau$
for some $\tau \in G_c$ hence $M(\sigma) = \tau$ where
$|\sigma| = f(c) - c$.
It follows that $C_M(X \res f(c)) = f(c) - c$.
Hence, $X$ is not quick process $\BP$ random.
\end{proof}

The proof of (3)\,$\Rightarrow$\,(1) above showed that the
function~$f$ giving the length of strings for
the \ml tests can be strictly increasing without loss
of generality.  The same function~$f$ was then
used for showing that $X$ can be primitive recursively compressed.
It follows that in the definition of Kolmogorov $\BP$ random, $f$~may
be assumed to be strictly increasing without loss of generality.

\begin{theorem} \label{bps2}
The following statements are equivalent for $X \in \{0,1\}^\omega$.
\begin{enumerate}
\item $X$ is \ml $\BP$ random.
\item $X$ is martingale $\BP$ random.
\end{enumerate}
\end{theorem}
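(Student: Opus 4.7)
The plan is to prove the two implications separately. The direction $(1)\Rightarrow(2)$ follows from a simple Markov-style measure bound, while $(2)\Rightarrow(1)$ requires a staged construction of a primitive recursive martingale.

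For $(1)\Rightarrow(2)$, I argue the contrapositive. Suppose a primitive recursive martingale~$d$ and primitive recursive function~$f$ satisfy $d(X\res f(n))\geq 2^n$ for all~$n$. Set $V_n = \{\sigma \in \{0,1\}^{f(n)} : d(\sigma) \geq 2^n\}$ and $U_n = [V_n]$. Then $(U_n)_{n \geq 0}$ is a primitive recursive sequence of clopen sets. Since $\sum_{\sigma \in \{0,1\}^{f(n)}} d(\sigma)\, 2^{-f(n)} = d(\emptyset) = 1$, at most a $2^{-n}$ fraction of the strings of length $f(n)$ can lie in $V_n$, so $\mu(U_n) \leq 2^{-n}$ and $(U_n)_n$ is a primitive recursive test. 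By hypothesis $X \in U_n$ for every~$n$, so $X$ is not \ml $\BP$ random.

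For $(2)\Rightarrow(1)$, I also argue the contrapositive. Suppose $X\in\bigcap_n U_n$ for some primitive recursive test. By Proposition~\ref{prop2} I may assume this is a weak primitive recursive test, so $U_n = [G_n]$ with $G_n\subseteq \{0,1\}^{\ell(n)}$ primitive recursive, $\ell$ strictly increasing, $U_{n+1}\subseteq U_n$, and $\mu(U_{n+1}\cap [\sigma])\leq \mu([\sigma])/2$ for each $\sigma\in G_n$. I build a primitive recursive martingale~$d$ that doubles along~$X$ at each length $\ell(n)$. Set $d(\emptyset)=1$, and inductively assume $d$ is defined on $\{0,1\}^{\ell(n)}$. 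Let $K_n = 2^{\ell(n+1)-\ell(n)}\geq 2$. For each $\sigma\in\{0,1\}^{\ell(n)}$: if $\sigma\notin G_n$, set $d(\tau)=d(\sigma)$ for every $\tau\in \{0,1\}^{\ell(n+1)}$ extending~$\sigma$; if $\sigma\in G_n$, let $H_\sigma = \{\tau\in G_{n+1}:\sigma\sqsubset \tau\}$, which has $|H_\sigma|\leq K_n/2$ by the weak test property, then assign $d(\tau) = 2d(\sigma)$ for $\tau\in H_\sigma$ and $d(\tau) = d(\sigma)(K_n-2|H_\sigma|)/(K_n-|H_\sigma|)\geq 0$ to the remaining extensions. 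A direct check gives $K_n\, d(\sigma) = \sum_{\tau} d(\tau)$, where $\tau$ ranges over length-$\ell(n+1)$ extensions of $\sigma$, so the martingale identity carries from level $\ell(n)$ down to level $\ell(n+1)$. For $\rho$ with $\ell(n)<|\rho|<\ell(n+1)$, define $d(\rho)$ as the average of $d(\tau)$ over length-$\ell(n+1)$ extensions of $\rho$, which is automatically a martingale value. To see that $d$ is primitive recursive, given $\rho$ with $|\rho|=L$, bounded search finds the unique $n$ with $\ell(n)\leq L < \ell(n+1)$ (possible since $\ell(n)\geq n$), and one then iteratively computes $d$ at $\rho\res\ell(0),\ldots,\rho\res\ell(n)$ using the cases above, followed by the averaging step. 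Since $X\res\ell(n)\in G_n$ for every~$n$, the initial segments of~$X$ always fall in the doubling case, giving $d(X\res\ell(n+1))\geq 2\, d(X\res\ell(n))$ and hence $d(X\res \ell(n))\geq 2^n$ by induction. Taking $f := \ell$ exhibits the primitive recursive success of~$d$ on~$X$, contradicting martingale $\BP$ randomness.

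The main obstacle is obtaining a martingale that is simultaneously exact and primitive recursive. The naive candidate $d(\sigma) = \sum_n \mu(U_n\cap [\sigma])/\mu([\sigma])$ is a martingale that grows unboundedly along~$X$, but the infinite sum is not obviously primitive recursive in~$\sigma$. The staged construction above finesses this by using only finite data at each input, while the weak test property delivers exactly the inequality $|H_\sigma|\leq K_n/2$ needed to double along~$X$ while preserving the martingale identity.
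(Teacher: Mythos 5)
Your proposal is correct and takes essentially the same route as the paper: the $(1)\Rightarrow(2)$ direction is the paper's counting/Markov bound on $\{\sigma\in\{0,1\}^{f(n)} : d(\sigma)\geq 2^n\}$ verbatim, and your $(2)\Rightarrow(1)$ direction is the paper's staged martingale construction over a weak test obtained from Proposition~\ref{prop2}, differing only cosmetically in the capital allocation (you double exactly on $H_\sigma$ and keep an explicit nonnegative residual $d(\sigma)(K_n-2|H_\sigma|)/(K_n-|H_\sigma|)$ on the remaining extensions, whereas the paper bets all-in proportionally, giving each test string $2^{f(1)}/k$ and then extending multiplicatively via $d(\sigma\tau)=d(\sigma)\,m(\tau)$ with zero capital off the test), with the same backward averaging for intermediate lengths and the same bounded-search argument for primitive recursiveness. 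The only nit is the unstated base case on $\{0,1\}^{\ell(0)}$, fixed trivially by setting $d\equiv 1$ up to length $\ell(0)$ or by normalizing so that $\ell(0)=0$ and $U_0=\{0,1\}^\omega$, as the paper does elsewhere.
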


\begin{proof}
{(1)\,$\Rightarrow$\,(2)}: Suppose that $X$ is not martingale $\BP$ random. Then
there is a primitive recursive martingale~$d$
and a primitive recursive function~$f$ such that,
for all~$n$, $d(X \res\penalty10000 f(n)) \geq 2^n$.
Let $G_n = \{\tau \in \{0,1\}^{f(n)}: d(\tau) \geq 2^n\}$ and $U_n = [G_n]$.
The sequence $(U_n)_{n \geq 0}$ is a primitive recursive sequence
of clopen sets, and $X \in \bigcap_n U_n$.

Since $d$ is a martingale and $d(\emptyset) = 1$,
$\sum_{|\tau| = m} d(\tau) \leq 2^m$.
It follows that there are at most $2^{f(n) - n}$ strings
$\tau \in \{0,1\}^{f(n)}$ such that $d(\tau) \geq 2^n$.
For each such~$\tau$, $\mu([\tau]) = 2^{-f(n)}$.
Thus $\mu(U_n) \leq 2^{f(n) - n} \cdot 2^{-f(n)} = 2^{-n}$.
Hence $(U_n)_{n \geq 0}$ is a primitive recursive test which
succeeds on~$X$, and $X$ is not \ml $\BP$  random.

{(2)\,$\Rightarrow$\,(1)}: Suppose $X$ is not $\BP$ \ml random. By
Proposition~\ref{prop2}, $X \in
\bigcap_n U_n$ for some weak primitive recursive
test $(U_n)_{n \geq 0}$.  As usual, $U_n =[G_n]$, where
$G_n = \{\tau_{1,n},\ldots,\tau_{k(n),n}\}$
is primitive recursively computable.
Let $f(n)$ be the length of the strings $\tau_{i,n}$.

We define a martingale $d$ as follows.
For $n=1$, and given $U_1 = [\tau_{1,1}] \cup \dots \cup [\tau_{k,1}]$, we let
$d(\tau_{i,1}) = \frac{2^{f(1)}}{k}$ for $i = 1,\ldots,k$.  If $\tau
\in \{0,1\}^{f(1)}\setminus\{\tau_{1,1}, \ldots, \tau_{k,1}\}$,
then we let $d(\tau) = 0$.  Since $\mu(U_1) \leq
\frac12$, it follows that $k \leq 2^{f(1)-1}$ and therefore $d(\tau_{i,1})
\geq 2$ for each $i$. Moreover,
$\sum_{ \tau \in \{0,1\}^{f(1)}} d(\tau) = k \cdot \frac{2^{f(1)}}k = 2^{f(1)}$.
Now work backwards using the martingale equation
$d(\sigma) = \frac12 (d(\sigma\fr 0) + d(\sigma \fr 1))$
to define $d(\sigma)$ for all~$\sigma$ of length $\leq f(1)$.
It follows by induction that for all
$j \leq f(1)$, $\sum_{ \tau \in \{0,1\}^{j}}d(\tau)  = 2^j$ so that,
in particular, $d(\emptyset) = 1$.

Now suppose that $n\ge 1$ and we have defined $d(\tau)$ for all~$\tau$ with
$|\tau| \leq f(n)$ so that  $d(\tau) \geq 2^n$ for all $\tau \in G_n$. We
need to extend $d$ to strings of length $\leq f(n+1)$. For
$\sigma$ of length $f(n)$, we will define $d(\sigma\tau)$,
where $\sg\tau = \sg \fr \tau$, for all~$\tau$ of length $f(n+1) - f(n)$.
If $d(\sigma) = 0$, then
we simply let $d(\sigma\tau) = 0$ for all~$\tau$. Now  fix $\sigma \in
G_n$ with $d(\sigma) \geq 2^n$ and consider $H = \{\tau: \sigma\tau \in G_{n+1}\}$.
Since $(U_n)_n$ is a weak test,
$\mu([H]) \leq \frac12$.
Thus we may proceed as in the first case where $n = 1$ to
define a martingale $m$ such that $m(\emptyset) = 1$ and $m(\tau) \geq 2$
for all $\tau \in H$. Now extend the definition of~$d$ to the strings
extending~$\sigma$ by defining $d(\sigma\tau) = d(\sigma)  \cdot m(\tau)$.
Since $d(\sigma) \geq 2^n$ and, for $\tau \in H$, $m(\tau) \geq 2$, it
follows that for $\sigma\tau \in G_{n+1}$, $d(\sigma\tau)
\geq 2^{n+1}$.  It is easy to see that this extension obeys the
martingale equality, since, for any~$\tau$,
\[
d(\sigma\tau) = d(\sigma) \cdot m(\tau) = d(\sigma) \cdot \frac12
(m(\tau^\frown 0) + m(\tau^\frown 1)) = \frac12 \cdot (d(\sigma\tau^\frown 0)
+ d(\sigma\tau^\frown 1)).
\]
Since $X \in \bigcap_n U_n$, it follows that $d(X \res f(n)) \geq 2^n$
for each $n$ and hence $d$ succeeds primitive recursively on $X$.

It is easy to see that this defines a
primitive recursive procedure to compute $d(\sigma)$.  The first step
is to compute $f(n)$ for $n \leq |\sigma|$ until we find $n$ so that
$|\sigma| \leq f(n)$. Then we consider all extensions $\tau$ of $\sigma$
of length $f(n)$. We
 follow the procedure outlined above to compute $d(\sigma
\res f(i))$ for $i \leq n$, and, hence, compute $d(\tau)$
for all extensions $\tau$ of $\sigma$ of length $f(n)$. Finally we
backtrack using the martingale inequality to compute $d(\sigma)$ from
the values of such
$d(\tau)$. Thus $d$ is a primitive recursive martingale
and $X$~is not martingale $\BP$ random.
\end{proof}

The above proofs are
similar to those that have already appeared in the literature.
For example Downey, Griffiths and
Reid \cite{DGR} proved the equivalence
of analogues of (1) and~(3) of Theorem~\ref{bps}
in the setting of Kurtz randomness, and their proof could be modified
to prove the equivalence of (1) and~(3) in Theorem~\ref{bps}.
Similarly, Bienvenu and Merkle \cite{BM07} gave a proof the equivalence of
parts (1) and~(2) in the setting of Kurtz randomness and
their proof can be modified to work in our setting.

Given Theorems \ref{bps} and~\ref{bps2},
we define $X \in \{0,1\}^\omega$ to be
{\em $\BP$ random} if and only if $X$ is \ml $\BP$ random.
Since every $\BP$ test is also a Kurtz test and a computable test, it
follows that all Kurtz random and all computably random reals
are $\BP$ random.

It is clear that no primitive recursive set can be $\BP$ random. It was
shown by Jockusch \cite{Kautz} that Kurtz random sets are
\emph{immune}, that is, they do not include any c.e.\  subsets. Here is a
version of that result for $\BP$ randomness.

\begin{proposition} \label{prop4}
If $X$ is $\BP$ random, and $f$ is an increasing
primitive recursive function, then  $X$~does not contain the range of~$f$.
\end{proposition}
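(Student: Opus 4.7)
The plan is to prove the contrapositive: assume the range of~$f$ is a subset of~$X$ (i.e., $X(f(n)) = 1$ for every~$n$) and exhibit a primitive recursive test that~$X$ fails, thereby contradicting \ml $\BP$ randomness (which equals $\BP$ randomness by Theorem~\ref{bps}). Without loss of generality I will take ``increasing'' to mean strictly increasing; if only non-decreasing is assumed, the range of~$f$ is also the range of a strictly increasing primitive recursive enumeration of that same set, so nothing is lost.

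The key step is to define, for each $n \geq 1$, the clopen set
\[
U_n ~=~ \{Y \in \{0,1\}^\omega : Y(f(i)) = 1 \text{ for all } i < n\}.
\]
Since $f$ is strictly increasing, the positions $f(0) < f(1) < \cdots < f(n-1)$ are distinct, so $U_n$ constrains exactly $n$ bits within the initial segment of length $f(n-1)+1$. Hence $\mu(U_n) = 2^{-n}$, and $U_n = [G_n]$ where $G_n$ is the set of strings of length $f(n-1)+1$ having a~$1$ at each of the positions $f(0),\dots,f(n-1)$. Because $f$ is primitive recursive, a code for $G_n$ (together with the uniform length $\ell(n) = f(n-1)+1$) can be produced primitive recursively from~$n$: one enumerates all $2^{f(n-1)+1-n}$ completions of the fixed bits, which is a primitive recursive task in~$n$. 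Therefore $(U_n)_{n \geq 1}$ is a primitive recursive test.

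By the hypothesis that the range of~$f$ is contained in~$X$, we have $X(f(i)) = 1$ for every~$i$, so $X \in U_n$ for every~$n$. Thus $X$ fails the primitive recursive test $(U_n)_{n \geq 1}$, contradicting the assumption that $X$ is $\BP$ random. I do not anticipate any real obstacle here; the only point that requires care is verifying that the bookkeeping which produces the codes of the finite sets $G_n$ (as used in the definition of a primitive recursive sequence of clopen sets at the start of Section~\ref{sec:MartinLofBP}) is indeed primitive recursive in~$n$, and this follows at once from the primitive recursiveness of~$f$ and of the coding function~$C$.
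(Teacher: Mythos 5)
Your proof is correct and is essentially the paper's own argument: the paper defines $G_n = \{\sigma \in \{0,1\}^{f(n)} : (\forall i < n)\, (\sigma(f(i)) = 1)\}$, sets $U_n = [G_n]$, notes $\mu(U_n) = 2^{-n}$, and concludes exactly as you do, the only cosmetic difference being your choice of string length $f(n-1)+1$ in place of $f(n)$. One caution about your side remark: for a merely non-decreasing primitive recursive $f$, the strictly increasing enumeration of its range need not be primitive recursive (e.g., let $f(n)$ be the largest value of the Ackermann function that is $\le n$, which is primitive recursive since the Ackermann graph is, yet the increasing enumeration of its range is the Ackermann function itself), but since ``increasing'' here means strictly increasing, this does not affect your proof.
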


\begin{proof}
Suppose for the contrapositive that $X$
contains the range of~$f$.
Let $G_n = \{\sigma \in \{0,1\}^{f(n)}: (\forall i < n) (\sigma(f(i)) = 1)\}$,
and $U_n = [G_n]$. It is clear that $\mu([U_n]) = 2^{-n}$
so that $(U_n)_{n \geq 0}$ is a primitive recursive test. But then $X$
belongs to each~$U_n$ so $X$ is not $\BP$ random.
\end{proof}

\begin{theorem} \label{thm2}
There is a recursive real which is
$\BP$ random.
\end{theorem}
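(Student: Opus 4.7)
The plan is to construct $X$ by a computable diagonalization against all candidate primitive recursive tests. The primitive recursive functions do not admit a universal primitive recursive machine, but they do admit a computable enumeration $(\varphi_e)_{e\in \N}$. For each $e$, interpret $\varphi_e$ as coding a candidate primitive recursive sequence $T_e = (U^e_n)_{n\ge 0}$ of clopen sets, where the strings in the canonical code of $U^e_n$ have a common length $\ell^e(n)$ (as discussed before Proposition~\ref{prop1}). Since $\varphi_e$ is total, both $U^e_n$ and its measure are uniformly computable from the pair $(e,n)$.

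I would build $X$ as the union of a nested sequence $\emptyset = \sigma_0 \sqsubset \sigma_1 \sqsubset \cdots$ of strings. At stage $e+1$, given $\sigma_e$, I set $n_e = |\sigma_e|+1$ and compute $U^e_{n_e}$. If $\mu(U^e_{n_e}) > 2^{-n_e}$, then $T_e$ is not a legitimate primitive recursive test and need not be diagonalized against, so I simply set $\sigma_{e+1} = \sigma_e \fr 0$. Otherwise $\mu(U^e_{n_e}) \le 2^{-n_e} < 2^{-|\sigma_e|} = \mu([\sigma_e])$, so some extension $\tau$ of $\sigma_e$ of length $\ell^e(n_e)$ has $[\tau] \cap U^e_{n_e} = \emptyset$; take $\sigma_{e+1}$ to be the lexicographically first such $\tau$. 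Finally, set $X = \bigcup_e \sigma_e$, which is well-defined because $|\sigma_{e+1}| > |\sigma_e|$ at every stage.

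The real $X$ is computable, since each stage is a bounded finite computation using the computable enumeration $(\varphi_e)_e$ and the uniformly computable measures. To verify that $X$ is $\BP$ random, let $(U_n)_{n\ge 0}$ be any primitive recursive test; then $(U_n)_{n\ge 0} = T_e$ for some $e$, and since $T_e$ is genuinely a test we have $\mu(U^e_{n_e}) \le 2^{-n_e}$, so stage $e+1$ took the second branch and chose $\sigma_{e+1}$ with $[\sigma_{e+1}] \cap U^e_{n_e} = \emptyset$. As $\sigma_{e+1}\sqsubset X$ we conclude $X \notin U_{n_e}$, hence $X \notin \bigcap_n U_n$. Thus $X$ is \ml $\BP$ random, and therefore $\BP$ random by Theorems~\ref{bps} and~\ref{bps2}.

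The main subtlety is that the enumeration $(\varphi_e)_e$ is only computable, not primitive recursive, so the construction itself is inherently not primitive recursive; this is of course unavoidable, since Proposition~\ref{prop4} already rules out primitive recursive $\BP$ random reals. I expect the only real work beyond the above sketch to be (i)~fixing the coding conventions carefully so that interpreting $\varphi_e$ as an alleged test never causes undefined behavior, and (ii)~noting that the test of validity $\mu(U^e_{n_e}) \le 2^{-n_e}$ at stage $e$ only needs to be checked at the single value $n_e = |\sigma_e|+1$, which is what keeps each stage bounded and which, with a time-efficient choice of primitive recursive index enumeration, yields the $n^{\varepsilon\log n}$ complexity bound on $X$ promised in the introduction.
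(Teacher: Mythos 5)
Your construction is correct, but it takes a genuinely different route from the paper's. You diagonalize directly against Martin-L\"of $\BP$ tests: you enumerate the primitive recursive functions, interpret each as a candidate test, and at stage $e$ use the measure bound $\mu(U^e_{n_e}) \le 2^{-n_e} < \mu([\sigma_e])$ to slip $X$ into an interval disjoint from one level of the $e$-th test --- essentially the classical Kurtz-style genericity argument. The paper instead enumerates pairs $(M_e,f_e)$ of primitive recursive machines and auxiliary functions and diagonalizes against \emph{compression}: at stage $k+1$ it picks an extension $\tau_{k+1}$ of $\tau_k$ of length $f_k(c)$ avoiding all values $M_k(\sigma)$ with $|\sigma| \le f_k(c)-c-1$, which is possible by counting ($2^{f_k(c)-c}$ extensions versus $2^{f_k(c)-c}-1$ short descriptions), so that $X$ is not Kolmogorov $\BP$ compressed; Theorem~\ref{bps} then converts this to \ml\ $\BP$ randomness. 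Your approach is arguably more economical relative to the definitions: since the paper \emph{defines} $\BP$ random as \ml\ $\BP$ random, you do not actually need Theorems~\ref{bps} and~\ref{bps2} at all, whereas the paper's proof does. What the paper's counting-based diagonalization buys is that it transfers directly to the space-bounded setting: the same construction, run over machines stratified by space $n^{\log e}$, yields Theorem~\ref{thm2sp} and the $\DSPACE(2^{\epsilon(\log n)^2}) = n^{\epsilon\log n}$ bound.

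Two small corrections. First, your avoidance step presumes $\ell^e(n_e) > |\sigma_e|$; if $\ell^e(n_e) \le |\sigma_e|$ then no extension of $\sigma_e$ of that length exists, but in that case the measure bound forces $[\sigma_e] \cap U^e_{n_e} = \emptyset$ already (otherwise $[\sigma_e] \subseteq U^e_{n_e}$, contradicting $\mu(U^e_{n_e}) < \mu([\sigma_e])$), so you may take $\sigma_{e+1} = \sigma_e \fr 0$; alternatively, build $\ell^e(n) \ge n$ into your normalization, as the paper does. Second, your closing expectation that this yields the $n^{\epsilon\log n}$ bound ``promised in the introduction'' is misplaced: that bound belongs to the $\BPS$ construction of Theorem~\ref{thm2sp}, where the enumerated machines carry explicit space bounds. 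In the $\BP$ setting, computing $\mu(U^e_{n_e})$ requires evaluating an arbitrary primitive recursive $\varphi_e$, whose running time admits no fixed subrecursive bound, so your $X$ (like the paper's in Theorem~\ref{thm2}) is recursive but comes with no such complexity guarantee. Neither point affects the correctness of your proof of the theorem as stated.
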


\begin{proof}
Let $(M_e,f_e)$ enumerate all pairs of primitive recursive functions, where
$M_e: \{0,1\}^* \to \{0,1\}^*$
and  $f_e: \N \to \N$ is strictly increasing.
Let $C_e$ denote $C_{M_e}$.
We define a recursive real~$X$ such that, for any~$e$,
there is some~$c$ such that $C_e(X\res f_e(c)) > f_e(c) - c - 1$.
Thus no primitive recursive machine can primitive recursively compress~$X$,
so $X$~is $\BP$ random.

The definition of $X$ is in stages, with $X$ defined as the union of a sequence
$\emptyset = \tau_0 \subseteq \tau_1 \subseteq \cdots$.
Let $n_k = |\tau_k|$.
We set $\tau_0 = \emptyset$ and $n_0 = 0$.

At stage $k+1$, we let $c = n_k$, and $n_{k+1} = f_k(n_k)$.
We are looking for an extension $\tau = \tau_{k+1}$ of $\tau_k$
of length $f_k(c)$ which does not
equal $M_k(\sigma)$ for any $\sigma$ of length $\leq f_k(c) - c - 1$.
There are $2^{f_k(c) - c}$ different extensions of~$\tau_k$ of length $f_k(c)$,
but only $2^{f_k(c) - c} -1$ strings of length $\leq f_k(c) - c - 1$.
Hence such a string~$\tau$ exists and we may compute it primitive
recursively (and in a space efficient manner) by the following
algorithm.
Enumerate all possible values for $\tau\in\{0,1\}^{n_{k+1}}$
starting with $0^{n_{k+1}}$.  For each value~$\tau$, compute $M(\sigma)$
for all strings~$\sigma$ of length $\leq f_k(c) - c -1$.
If the values $M(\sigma)$ are all distinct from the candidate
value for~$\tau$, then output that~$\tau$.
\end{proof}

Next we show that $\BP$ random reals satisfy the following analogue of
Ville's Theorem.

\begin{theorem} \label{thm3}
Let $X \in \{0,1\}^\omega$ be $\BP$ random and let $g$ be a primitive
recursive increasing function.  Then the sequence
$(X(g(0)), X(g(1)), X(g(2)), \ldots )$
is also $\BP$ random.
\end{theorem}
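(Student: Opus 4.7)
The plan is to argue by contraposition, converting a primitive recursive test that succeeds on the subsequence $Y = (X(g(0)), X(g(1)),\ldots)$ into one that succeeds on $X$ itself. By Theorem~\ref{bps}, it suffices to work with the \ml $\BP$ formulation, and by Proposition~\ref{prop1} we may assume without loss of generality that the tests are presented by uniform clopen sets of fixed string length, given by primitive recursive functions.

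So suppose $Y$ is not $\BP$ random. Let $(V_n)_{n \ge 0}$ be a primitive recursive test with $Y \in \bigcap_n V_n$, and write $V_n = [G_n]$ where $G_n \subseteq \{0,1\}^{\ell(n)}$ is a primitive recursively presented finite set and $\mu(V_n) \le 2^{-n}$. First I would pull each $G_n$ back along $g$: define
\[
H_n ~=~ \bigl\{\tau \in \{0,1\}^{g(\ell(n)-1)+1} : \text{there exists } \sigma \in G_n \text{ with } \tau(g(i)) = \sigma(i) \text{ for all } i < \ell(n)\bigr\},
\]
and set $U_n = [H_n]$. Since $g$, $\ell$, and a code of $G_n$ are primitive recursive in $n$, so is a code of $H_n$, hence $(U_n)_{n \ge 0}$ is a primitive recursive sequence of clopen sets.

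Next I would check the measure bound. Since $|G_n| \le \mu(V_n) \cdot 2^{\ell(n)} \le 2^{\ell(n)-n}$, and for each fixed $\sigma \in G_n$ the number of $\tau \in \{0,1\}^{g(\ell(n)-1)+1}$ with $\tau(g(i)) = \sigma(i)$ for all $i<\ell(n)$ is exactly $2^{g(\ell(n)-1)+1-\ell(n)}$, we obtain
\[
\mu(U_n) ~\le~ |G_n| \cdot 2^{-\ell(n)} ~\le~ 2^{-n}.
\]
Thus $(U_n)_{n \ge 0}$ is a primitive recursive test. Finally, for each $n$, the string $\sigma_n = Y \res \ell(n)$ lies in $G_n$, and by construction $X(g(i)) = Y(i) = \sigma_n(i)$ for all $i < \ell(n)$, so $X \res (g(\ell(n)-1)+1) \in H_n$ and hence $X \in U_n$. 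Therefore $X \in \bigcap_n U_n$, contradicting $\BP$ randomness of $X$.

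I do not expect any serious obstacle: the only things to verify are the primitive recursiveness of the pullback (routine, since $g$ and the presentation of $V_n$ are primitive recursive) and the measure bound (a direct counting argument). The one minor point to be careful about is that $g$ is strictly increasing, which is used to guarantee that the positions $g(0),\ldots,g(\ell(n)-1)$ are distinct, so that the counting above is correct and the length $g(\ell(n)-1)+1$ indeed suffices to record all the constrained coordinates.
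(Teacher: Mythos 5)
Your proposal is correct and takes essentially the same route as the paper: the paper's proof likewise argues contrapositively, pulling the test back along $g$ by setting $V_n = \{X : (X(g(0)),X(g(1)),\dots,X(g(f(n)))) \in U_n\}$, observing $\mu(V_n)=\mu(U_n)$, and concluding $X\in\bigcap_n V_n$. Your explicit counting argument for the measure bound and your remark on the strict monotonicity of $g$ (ensuring the constrained positions are distinct) merely spell out details the paper leaves implicit.
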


\begin{proof} Let $Y(n) = X(g(n))$, and suppose
that $Y$ is not $\BP$ random.
Let $(U_n)_{n \geq 0}$
be a primitive
recursive test such that $Y \in \bigcap_n U_n$.
As usual, we may assume that $U_n= [G_n]$ where
$G_n$ is a subset of $\{0,1\} ^{f(n)}$ for some
primitive recursive function~$f$.  In other words,
each $\tau \in G_n$ has length~$f(n)$.

We define a primitive recursive test $(V_n)_n$ for~$X$
by letting $V_n\subset [\{0,1\}^{g(f(n))}]$ be
\[
V_n ~=~ \{ X: (\,X(g(0)),X(g(1)),\dots,X(g(f(n)))\,) \in U_n\}.
\]
It is easy to see that $\mu(U_n)= \mu(V_n)$,
and $(V_n)_{n \geq 1}$ is a primitive recursive test.
Also, $X \in \bigcap_{n \geq 1} V_n$ which violates the assumption
that $X$ is $\BP$ random. This contradicts the
assumption that $Y$ is $\BP$ random.
\end{proof}

\subsection{Prefix-free primitive recursive}\label{sec:prefixfreeBP}

Kolmogorov complexity and randomness are usually studied for prefix-free machines.
By convention, primitive recursive functions are total, and thus not prefix-free.
Nonetheless, we can consider partial primitive recursive functions
by adding a new symbol,~$\infty$, for divergence.
The usual definitions of primitive recursive functions can be readily modified
to allow the primitive function to output the special symbol~$\infty$
to indicate the function diverges.
With this convention, a primitive recursive function~$M$ is {\em prefix-free}
provided that there do not exist distinct strings $\sigma\sqsubset\tau$
such that both $M(\sigma)\not=\infty$ and $M(\tau)\not=\infty$.

\begin{definition} A sequence $X\in\{0,1\}^\omega$ is
\emph{prefix-free $\BP$ random} if there does not exist
a prefix-free primitive recursive
function~$M$
and a primitive recursive function~$f$ such that
$C_M(X \res f(c)) \leq f(c) - c$ for all~$c$.
\end{definition}

\begin{proposition} \label{p2}
A real $X$ is $\BP$ random if and only if it is prefix-free $\BP$ random.
\end{proposition}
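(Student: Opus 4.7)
My plan is to prove the biconditional via two contrapositives, with the direction from plain to prefix-free compression being the main work.

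For the easy direction ($X$ $\BP$ random implies $X$ prefix-free $\BP$ random), I would prove the contrapositive. Given a prefix-free primitive recursive $M$ (which may output the divergence symbol $\infty$) and a primitive recursive $f$ with $C_M(X\res f(c)) \le f(c) - c$ for all~$c$, define a total primitive recursive $M^\prime$ by $M^\prime(\sigma) := M(\sigma)$ when $M(\sigma) \ne \infty$, and $M^\prime(\sigma) := \emptyset$ otherwise. Since, in the paper's convention, $M(\sigma)$ is always either a string or $\infty$, the test is primitive recursive and $M^\prime$ is a total primitive recursive function. Every $M$-description of $X\res f(c)$ is an $M^\prime$-description of the same length, so $C_{M^\prime}(X\res f(c)) \le f(c) - c$, and $X$ is not Kolmogorov $\BP$ random, hence not $\BP$ random by Theorem~\ref{bps}.

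For the hard direction ($X$ prefix-free $\BP$ random implies $X$ $\BP$ random), I would prove the contrapositive using a block-structured construction. Assume $X$ is not $\BP$ random, witnessed by a total primitive recursive $M$ and a strictly increasing primitive recursive $f$ with $C_M(X\res f(c)) \le f(c) - c$. Fix a primitive recursive prefix-free self-delimiting encoding $\langle\cdot\rangle \colon \N \to \{0,1\}^*$ with $|\langle c\rangle| \le 2\log(c+2) + 3$ (for instance $\langle c\rangle := 1^{|\mathrm{bin}(c)|}\,0\,\mathrm{bin}(c)$). Choose a strictly increasing primitive recursive function~$g$ with $g(c) - c - 1 \ge |\langle c\rangle|$, say $g(c) := c + 2\lceil\log(c+2)\rceil + 4$. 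Set $f^\prime(c) := f(g(c))$ and $L_c := f(g(c)) - g(c)$. Define a prefix-free machine $M^\prime$ on input~$w$ as follows: attempt to decode the unique~$c$ such that $w = \langle c\rangle \cdot s$ with $|s| = L_c + 1$ (output $\infty$ if this fails); parse $s = 0^p\cdot 1\cdot \sigma$, where $p$ is the number of leading zeros (output $\infty$ if $s$ has no~$1$); compute $\tau := M(\sigma)$ and, if $|\tau| \ge f(g(c))$, output the first $f(g(c))$ bits of~$\tau$, otherwise output~$\infty$.

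The verifications to complete are: (i)~$M^\prime$ is primitive recursive, because decoding $\langle c\rangle$, computing $L_c$, parsing~$s$, and invoking~$M$ are all primitive recursive; (ii)~$M^\prime$ is prefix-free, because within a fixed $c$-block all accepted inputs share the length $|\langle c\rangle| + L_c + 1$, and inputs from distinct $c$-blocks have incomparable prefixes $\langle c\rangle$ by prefix-freeness of the encoding; (iii)~for each~$c$, taking a plain $M$-description $\sigma^*$ of $X\res f(g(c))$ of length $\le L_c$, the word $w^* := \langle c\rangle\cdot 0^{L_c - |\sigma^*|}\cdot 1\cdot \sigma^*$ has length $|\langle c\rangle| + L_c + 1 \le f^\prime(c) - c$ and satisfies $M^\prime(w^*) = X\res f^\prime(c)$. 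The main obstacle is balancing the prefix-free header overhead against the compression slack. A direct Kraft--Chaitin argument would be natural but it breaks down because we lack primitive recursive access to the targets $X\res f(c)$ themselves; the block-structured definition sidesteps this by enumerating candidate descriptions instead of targets, and because the header $|\langle c\rangle|$ is only logarithmic in~$c$ while the slack $g(c)-c$ is also logarithmic but strictly larger, the argument goes through uniformly regardless of how rapidly $f$ itself grows.
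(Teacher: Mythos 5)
Your proposal is correct, but your hard direction takes a genuinely different route from the paper's. The paper never manipulates the plain compressor directly: it invokes Theorem~\ref{bps} to replace the hypothesis by a primitive recursive test $(U_c)_c$ with $X\in\bigcap_c U_c$, strengthens it to $\mu(U_c)\le 2^{-2c}$ by passing to $U_{2c}$, and then builds the prefix-free machine in stages, assigning to each $\tau\in G_c$ a fresh description of \emph{exact} length $f(c)-c$ chosen among strings of that length not extending any previously used description; prefix-freeness holds by construction, and a counting argument (stage~$c$ consumes at most a $2^{-c}$ fraction of the strings of length $f(c)-c$, so the total consumed across stages stays below~$1$) guarantees enough strings remain. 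You instead stay entirely at the machine level and run the classical plain-to-prefix-free conversion: a self-delimiting header $\langle c\rangle$ plus fixed-length padding $0^p1$, with the $O(\log c)$ header overhead absorbed by re-indexing $c\mapsto g(c)=c+O(\log c)$ and taking $f'(c)=f(g(c))$. Your verifications are sound: decoding is unambiguous since two header codes that prefix the same string must be comparable, equal-length blocks are automatically antichains, distinct blocks are separated by incomparable headers, the accounting closes because $|\langle c\rangle|+1\le g(c)-c$, and note that $L_c\ge 0$ needs no extra hypothesis since $C_M(X\res f(g(c)))\le L_c$ already forces $f(g(c))\ge g(c)$. What each approach buys: the paper's test-based construction reuses machinery already established (tests, the stage construction of Theorem~\ref{bps}) and yields descriptions with no loss at all in the compression parameter, whereas yours is shorter, avoids measure and counting entirely, and is more portable --- since the header is logarithmic in~$c$ uniformly in the growth of~$f$, the same argument transfers essentially verbatim to other subrecursive classes, e.g.\ the $\BPS$ setting of Proposition~\ref{prop:prefixfreeBPS}, where the paper must instead re-examine the space complexity of its staged construction. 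Your easy direction (totalizing $M$ by sending $\infty$-outputs to $\emptyset$, which can only shrink $C_M$) is exactly the observation the paper leaves implicit in its one-line ``certainly.''
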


\begin{proof}
Certainly if $X$ is $\BP$ random then it is prefix-free $\BP$ random.
So suppose that $X$ is not $\BP$ random.
By Theorem~\ref{bps}, there is a primitive recursive test
$\{U_c: c \in \N\}$ such that $X \in \bigcap_c U_c$.
We may assume (by  replacing $U_c$ with $U_{2c}$ if necessary)
that in fact $\mu(U_c) \leq 2^{-2c}$.
As usual, $U_n$~is defined in terms of a
primitive recursive $G_n = \{\tau_{1,n},\ldots,\tau_{k(n),n}\}$
where each $\tau_{i,n}$ has length $f(n)$, so
$U_n = [G_n]$.
As in the proof of Theorem~\ref{bps}, we
may assume that $f(c+1) - (c+1) > f(c) - c$.
From $|U_c|<2^{-2c}$, we have $k(c)2^{-f(c)}\le 2^{-2c}$.
Hence $k(c)\le 2^{f(c)-2c}$.
It follows that $f(c)-c\ge c$ for all~$c$.

We need to define a prefix-free primitive recursive function~$M$
such that $C_M(X \res f(c)) \leq f(c) - c$ for all~$c$.
We start by defining $M(\sigma)$ for $\sigma$ of length $< f(1)$.
Let $\sigma_{1,1}, \sigma_{2,1}, \dots, \sigma_{k(1),1}$ be the
lexicographically first $k(1)$ strings  of length $f(1) - 1$,
and define $M(\sigma_i) = \tau_{i,1}$.
And, let $M(\sigma)=\infty$
for all other strings of length $< f(1)$.
Note that $M(\sigma) \not= \infty$ for at most half
of the strings of length $f(1)-1$, since $k(1)\le 2^{f(1)-2}$.

Now suppose that we have defined the partial function~$M$,
in a prefix-free way, for strings of length $\leq f(c-1)-c+1$, so that
the following hold: First, $M$ is injective (where it converges).
Second, for each $n<c$
and $\tau_{i,n}\in G_n$, there is a unique $\sigma$ of length $f(n)$
such that $M(\sigma) = \tau_{i,n}$.  Third, $M(\tau)$ is undefined for
all other strings $\sigma$ of length $\le f(c-1)-c+1$.
Fourth, for each~$n<c$, $M(\sigma) \not= \infty$ for at most fraction $2^{-n}$
of the strings of length $f(n)-n$.
We want to extend $M$ by setting the values of $M(\sigma)$ for
all $\sigma$ of length $<f(c)-c$.  For $f(c-1)-c+1 < |\sigma| < f(c)-c$,
set $M(\sigma)=\infty$.  In order to define
$M(\sigma)$ for $|\sigma|=f(c)-c$, enumerate the
lexicographically first strings
$\sigma_{1,c},\sigma_{2,c},\ldots,\sigma_{k(c),c}$
of length $f(c)-c$ which do not extend any $\nu$ for which
$|\nu|<f(c)-c$ and
$M(\nu) \not=\infty$.
Once these are enumerated, define $M(\sigma_{i,c}) = \tau_{i,c}$.
To see that it is possible to enumerate such strings, first note that $k(c)\le 2^{-c}\cdot\penalty10000 2^{f(c)-c}$.
Furthermore, for each $n<c$, at most a fraction
$2^{-n}$ of the strings~$\nu$ in $\{0,1\}^{f(n)-n}$
have $M(\nu)\not= \infty$. Since $\sum_{n=0}^{c-1}  2^{-n}<1 -2^{-c+1}$,
there are more than $k(c)$ strings. In fact,
there are at least $2 k(c)$
available to enumerate.  Thus $M$ is well-defined.

By assumption, $X \in U_c$ for every~$c$,
so $X \res f(c) = \tau_{i,c}$ for some $i$ and hence
$M(\sigma_{i,c}) = \tau_{i,c} = X \res f(c)$.
Since $|\sigma_{i,c}| = f(c) - c$, we have
$C_M(X \res f(c)) = f(c) - c$ as desired.

It remains to check that $M$ is indeed a primitive recursive function.
Recall that $f(c) -\penalty10000 c \ge c$. Thus, given a string~$\sigma$,
we only need to check $c \le |\sigma|$ to see
whether $m = f(c) - c$ for some $c$.  If not, then $M(\sigma)=\infty$.
If so, we use the above construction of $M$ to determine whether
$\sigma = \sigma_{i,c}$ where $M(\sigma_{i,c}) = \tau_{i,c}$.
This is clearly a primitive recursive procedure for computing $M(\sigma)$.
\end{proof}

\subsection{Statistical tests}\label{sec:statisticaltests}

It is interesting to see to what extent the $\BP$ random sets are statistically random.
We begin with a positive result.

\begin{theorem} \label{thm4}
Let $X$ be a $\BP$ random set. For any increasing
primitive recursive function~$f$ and any $\epsilon > 0$,
\[
\left| \frac{card(X \cap [[f(n)]])}{f(n)} - \frac12 \right| \leq \epsilon
\]
for infinitely many~$n$.
\end{theorem}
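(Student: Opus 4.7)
The plan is to prove the contrapositive: assume $X$ violates the conclusion, that is, there exist an increasing primitive recursive $f$ and $\epsilon>0$ such that
\[
\left| \frac{card(X \cap [[f(n)]])}{f(n)} - \frac{1}{2}\right| > \epsilon
\]
for all but finitely many $n$, and from this data construct a primitive recursive test that captures $X$, contradicting $\BP$ randomness.

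First I would introduce the clopen sets
\[
A_n ~=~ \left\{Y \in \{0,1\}^\omega : \left|\frac{card(Y \cap [[f(n)]])}{f(n)} - \frac{1}{2}\right| > \epsilon\right\}.
\]
Each $A_n$ depends only on the first $f(n)$ bits and is the union of those cylinders $[\tau]$ with $\tau \in \{0,1\}^{f(n)}$ whose Hamming weight falls outside $[(\tfrac12-\epsilon)f(n),(\tfrac12+\epsilon)f(n)]$. Both a canonical code for $A_n$ and its measure $\mu(A_n)$ are primitive recursive in $n$.

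Next I would bound the measure using a standard concentration inequality: by Hoeffding's inequality (or, more than sufficiently, by Chebyshev),
\[
\mu(A_n) ~\le~ 2 e^{-2\epsilon^2 f(n)}.
\]
Since $f$ is primitive recursive and strictly increasing (hence unbounded), there is a primitive recursive strictly increasing function $h:\N\to\N$ such that $\mu(A_{h(k)}) \le 2^{-k}$ for every $k$; one may take $h(k)$ to be the least $n$ with $f(n)$ exceeding a suitable primitive recursive threshold in $k$ and $\epsilon$, found by bounded primitive recursive search. Setting $U_k = A_{h(k)}$ then yields a primitive recursive test $(U_k)_{k\ge 0}$.

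Finally, by the failure hypothesis there exists some finite $N$ (depending on $X$) such that $X\in A_n$ for all $n\ge N$. Choose the constant $M$ with $h(M)\ge N$ and define the shifted test $V_k = U_{k+M}$. The sequence $(V_k)_{k\ge 0}$ is still a primitive recursive test (shifting by a constant preserves primitive recursiveness), its measures satisfy $\mu(V_k) \le 2^{-(k+M)} \le 2^{-k}$, and $X \in V_k$ for every $k$, so $X \in \bigcap_k V_k$ contradicts $\BP$ randomness of $X$. The only subtle point is precisely this last step: the test we build is not allowed to refer to the unknown threshold $N$, so we cannot use $(U_k)$ directly (it only captures $X$ for large $k$). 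Absorbing $N$ into the constant shift $M$ of the index is what makes the argument go through; since $M$ is just a numerical constant, the shifted test remains primitive recursive, even though the value of $M$ depends on $X$.
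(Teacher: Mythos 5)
Your proof is correct and takes essentially the same route as the paper's: a Chernoff/Hoeffding concentration bound turns the deviation sets at lengths $f(n)$ into a primitive recursive test (after reindexing so the measures are $\le 2^{-k}$), and a constant index shift disposes of the finitely many exceptional $n$. The paper organizes this directly, applying $\BP$ randomness to each shifted test $(U_{n+i})_{n\ge 0}$ to conclude $X \notin U_n$ for infinitely many~$n$, while your contrapositive with the nonuniform constant~$M$ (which may depend on~$X$, as you rightly note) is the same shift trick viewed from the other side.
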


\begin{proof}
This follows from the law of large numbers (Chernoff's
Lemma~\cite[p.\,61]{LV08}). For a finite string~$\sigma$,
let $card(\sigma)$ denote
$card(\{i: \sigma(i) = 1\})$.
For $n \in \N$ and any real $\epsilon > 0$, let
\[
S_{n,\epsilon} ~=~
   \Bigl\{ \sigma \in \{0,1\}^n:
     \Bigl| \frac{card(\sigma)}n - \frac 12\Bigr| >\epsilon \Bigr\}.
\]
Chernoff's Lemma states that $\mu([S_{n,\epsilon}])  \leq 2e^{-\epsilon^2 n/6}$.

Fix $\epsilon>0$ and let $m = \lceil 1/\epsilon\rceil$.
Let $V_n = S_{6 m^2 (n+1)\ln 2,\frac1m}$. It follows by an easy
calculation from Chernoff's Lemma that $\mu(V_n) \leq 2^{-n}$.
Finally, let $U_n = V_{f(n)}$.
Since $f$ is increasing, $f(n) > n$ and hence
$\mu(U_n) \leq 2^{-f(n)} \le 2^{-n}$ for all $n$.
We claim that $(U_n)_n$ is in fact a primitive recursive test.
This is because $U_n = [G_n]$, where $G_n$ is a set
of strings of length $6 m^2 f(n)$ and the membership of $\tau$ in~$G_n$
is easily computable by counting the numbers of 0's and 1's in~$\tau$.

Since $X$ is $\BP$ random, it follows that $X \notin U_n$ for at least one~$n$.
In fact, by considering the tests $(U_{n+i})_n$
for each $i \in \N$, we see that $A \notin U_n$ for infinitely many~$n$.
\end{proof}

\begin{corollary} \label{cor1} For any $\BP$
random set~$X$, if $\lim_n card(X \cap [[n]])/n$ exists, then it
equals $1/2$.
\end{corollary}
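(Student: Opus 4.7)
The plan is to deduce the corollary directly from Theorem~\ref{thm4} by specializing to the identity function and letting $\epsilon$ shrink to zero. Assume $L = \lim_n \operatorname{card}(X \cap [[n]])/n$ exists; we must show $L = 1/2$.

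First I would take the increasing primitive recursive function $f(n) = n$ and apply Theorem~\ref{thm4}. For an arbitrary fixed $\epsilon > 0$, this gives infinitely many $n$ with $|\operatorname{card}(X \cap [[n]])/n - 1/2| \leq \epsilon$. Since $L$ is the limit of the whole sequence $\operatorname{card}(X \cap [[n]])/n$, the limit along any infinite subsequence must also be $L$. Passing to the limit along the infinite set of $n$ produced by Theorem~\ref{thm4} then gives $|L - 1/2| \leq \epsilon$.

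Since $\epsilon > 0$ was arbitrary, we conclude $|L - 1/2| \leq \epsilon$ for all positive $\epsilon$, hence $L = 1/2$. There is no real obstacle here: all the work is already in Theorem~\ref{thm4}, and the corollary is just the observation that a sequence whose densities come within $\epsilon$ of $1/2$ infinitely often cannot converge to any value other than $1/2$.
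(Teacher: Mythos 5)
Your proof is correct and is exactly the argument the paper intends: Corollary~\ref{cor1} is stated without proof as an immediate consequence of Theorem~\ref{thm4}, obtained by taking $f(n)=n$ and letting $\epsilon\to 0$, just as you do. Nothing is missing.
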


On the other hand, $\BP$ random sets do not have to be stochastic.
Note that this is also the case for Kurtz random sets.

\begin{theorem} \label{thm5}
There is a computable $\BP$ random set~$X$ such that
\hbox{$\lim_n card(X \cap [[n]])/n$} does not exist.
\end{theorem}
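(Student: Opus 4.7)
The plan is to extend the stagewise construction from the proof of Theorem~\ref{thm2} by interleaving ``padding'' stages between the diagonalization stages. Each padding stage appends a long constant block (all $1$s or all $0$s, alternating with stage parity) whose length is chosen large enough to drive the running density of $1$s arbitrarily close to $1$ or to $0$ at prescribed positions. The diagonalization stages, essentially unchanged from Theorem~\ref{thm2}, will continue to guarantee $\BP$ randomness.

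In detail, I enumerate pairs $(M_e,f_e)$ of primitive recursive functions with $f_e$ strictly increasing as in the proof of Theorem~\ref{thm2}, and build $X = \bigcup_k \tau_k$ in stages with $\tau_0 = \emptyset$. At stage $k = 2e+1$ (diagonalization), set $c = n_{2e}$ and, assuming $f_e(c) > c$, use the counting argument from Theorem~\ref{thm2} to choose an extension $\tau_{2e+1}$ of $\tau_{2e}$ of length $f_e(c)$ avoiding $M_e(\sigma)$ for every $|\sigma| \leq f_e(c) - c - 1$; if $f_e(c) \leq c$, set $\tau_{2e+1} = \tau_{2e}$ and note that the compressibility condition for $(M_e,f_e)$ already fails (trivially, for cofinitely many values of $c$). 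At stage $k = 2e+2$ (padding), set $L_e = 2^{e+1}\cdot n_{2e+1}$ and let $\tau_{2e+2}$ be $\tau_{2e+1}$ followed by $L_e$ ones if $e$ is even, or by $L_e$ zeros if $e$ is odd. Since each stage is effective, $X$ is computable.

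$\BP$ randomness follows exactly as in Theorem~\ref{thm2}: for any primitive recursive pair $(M,f)$ with $f$ strictly increasing there is some $e$ with $(M_e,f_e) = (M,f)$, and the diagonalization at stage $2e+1$ forces $C_{M_e}(X \res f_e(n_{2e})) > f_e(n_{2e}) - n_{2e} - 1$. The padding bits added at later stages lie strictly beyond position $f_e(n_{2e})$, so they do not affect this prefix. For the failure of the density limit, look at the position $n_{2e+2} = n_{2e+1} + L_e$: for even $e$, the number of $1$s in $\tau_{2e+2}$ is at least $L_e$, giving density at least $L_e/(n_{2e+1}+L_e) = 2^{e+1}/(1+2^{e+1})$, which tends to $1$; for odd $e$, the number of $1$s is at most $n_{2e+1}$, giving density at most $1/(1+2^{e+1})$, which tends to $0$. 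Hence $\limsup_n card(X\cap[[n]])/n = 1$ and $\liminf_n card(X\cap[[n]])/n = 0$, so the limit does not exist. The only point that requires comment is whether the padding stages spoil the diagonalization, but since the two kinds of stages act on disjoint portions of $X$, this is not a genuine obstacle.
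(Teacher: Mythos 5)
Your proposal is correct and takes essentially the same approach as the paper: the paper's proof of Theorem~\ref{thm5} is exactly the sketch of modifying the construction of Theorem~\ref{thm2} by appending long blocks of $0$'s and $1$'s in alternation after each requirement is met. Your explicit padding lengths $L_e = 2^{e+1}\cdot n_{2e+1}$ and your observation that the padding lies strictly beyond the diagonalized prefix $X \res f_e(n_{2e})$ just fill in the details the paper leaves implicit.
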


\begin{proof} To construct such a set~$X$, modify the proof of
Theorem~\ref{thm2} by adding long strings of~0's and long strings of~1's (in
alternation) after satisfying each requirement. Then we can make the
density go arbitrarily low and then arbitrarily high infinitely
often.
\end{proof}

\subsection{Relative randomness}\label{sec:relativeBP}

We now turn to the concept of {\em relative} primitive recursive randomness,
namely, $\BP$ randomness relative to an oracle~$Y$.  Let
$Y = (Y(0),Y(1), Y(2),\ldots )$ be a real.  The set of
functions which are primitive recursive relative to~$Y$ can be
defined in two equivalent ways.  The first definition
uses the characterization of the primitive
recursive functions as the closure of a set of base functions
under composition and primitive recursions.  Namely,
let the {\em base} primitive recursive functions
consist of the constant functions $f(\vec x) = c$ for $c\in \N$,
the successor function $f(x)= x+1$, and the projection
functions $f(x_1,\ldots, x_k) = x_i$.  Then a function
is {\em primitive recursive relative to~$Y$}  it can be obtained
from the base functions plus the function~$Y$ under
the usual primitive recursion operations of composition and
primitive recursion.

The second definition is in terms of Turing machines
with runtimes bounded by primitive recursive functions.  Namely,
a Turing machine, denoted~$M^Z$, is given oracle access
to the values of~$Z$ in the usual way
by giving it a query state and a oracle query/answer tape.
An oracle Turing machine~$M^Z$ is called {\em a primitive
recursive oracle machine} provided there is a
primitive recursive function~$g$, such that for
all~$x$ and all reals~$Z$, $M^Z(x)$~has runtime bounded by~$g(x)$
A function~$f^Z$ is a {\em primitive recursive oracle function}
provided there is some primitive recursive oracle machine~$M^Z$ such that
that for all~$x$ and all reals~$Z$, $f^Z(x) = M^Z(x)$.
Then, for a fixed oracle~$Y$,
a function~$f(x)$ is {\em primitive recursive relative to~$Y$}
if and only if there is some primitive recursive oracle Turing machine~$M^Z$
such that $f(x) = M^Y(x)$.

The equivalence of the two definitions for primitive recursive relative to~$Y$
is well-known, and depends on the fact that $Y$ is a 0/1-valued function
and thus is majorized by a primitive recursive function.

We shall define a real  $X =(X(0),X(1), \ldots, )$ to be
\ml $\BP$ random relative to~$Y$, Kolmogorov $\BP$ random relative
to~$Y$, and martingale $\BP$ random relative to~$Y$ by replacing
(some of) the primitive recursive functions in the definitions
of \ml $\BP$ random, Kolmogorov $\BP$ random, and martingale $\BP$ random
by primitive recursive functions relative to~$Y$, respectively.

\subsubsection{The compressibility definition}
Let $C_M^Y(\tau)$ be the length~$|\sigma|$ of the shortest
string~$\sigma$ such that
$M^Y(\sigma) = \tau$, that is, the length
of the shortest $M^Y$-description of~$\tau$.

\begin{definition}
An infinite sequence~$X$ is {\em primitive recursively compressed relative
to~$Y$} if there exist a primitive recursive oracle machine $M$ and a
primitive recursive function $f$ such
that, for every $c$, $C_M^Y(X\res f(c)) \leq f(c)-c$.
An infinite sequence~$X$ is \emph{Kolmogorov bounded primitive
recursively random relative to $Y$} (for short, {\em Kolmogorov $\BP$ random
relative to~$Y$}) if it cannot be primitive
recursively compressed relative to~$Y$.
\end{definition}

If in the above definition, the condition that $f$ is primitive recursive is replaced by $f$ is primitive
recursive relative to $Y$, then the set of  reals which are Kolmogorov bounded
primitively random relative to $Y$ would not change.
This is because we can always find
a primitive recursive function~$f^\prime$ (not relative to~$Y$)
that dominates~$f$,
and then use the construction mentioned immediately after
the proof of Theorem~\ref{bps}.

\subsubsection{The measure definition}
A {\em primitive recursive oracle test} is given by
a function~$g^Z$ which is primitive recursive relative to~$Z$,
and a primitive recursive function~$f$
such that, for all reals~$Z$,
(1)~$g^Z(n)$ codes a finite set, denoted $G^Z_n$ of strings,
with $G_n^Z \subseteq \{0,1\}^{f(n)}$,
(2)~$U_n^Z = [G_n^Z]$ is a clopen set, and
(3)~$\mu(U_n^Z) <2^{-n}$.  If these conditions hold,
then, for a particular real~$Y$,
we also refer to the sequences $(G_n^Y)_n$ and
$(U_n^Y)_n$ as being primitive recursive oracle tests.

\begin{definition}
A real~$X$ is {\em \ml $\BP$ random relative to a real~$Y$} if, for any primitive
recursive oracle test $(U_n^Y)_n$,
$X \notin U_n^Y$ for some~$n$.
\end{definition}
As before, the definition of \ml $\BP$ random relative to a real~$Y$ would be
unchanged if the function~$f$ were allowed to be primitive recursive
relative to~$Y$ instead of just primitive recursive.

As a simple example, suppose that $X$ is itself primitive recursive and let
$G_n^Z = [X \res n]$ for all $n$ and~$Z$, so that $f(n) = n$.
Then
$(G_n^Z)_{n \geq 0}$ is a primitive recursive oracle test,
and $X \in   [G_n^Z]$ for all $n$.
Thus $X$ is not \ml  $\BP$ random relative to \emph{any}~$Z$.

On the other hand, suppose that $X$ is \ml $\BP$ random and let $Y$ be
primitive recursive. We claim then that $X$ is \ml $\BP$ random relative to~$Y$.
If not, let $(U_n^Y)_n$ be a primitive recursive oracle test such that
$X \in U_n^Y$ for all~$n$.  Then in fact $(U_n^Y)_n$ is a
primitive recursive test since $Y$ is primitive recursive.
And $X$
fails this test, contradicting the assumption that $X$ is \ml $\BP$ random.

The notion of a \emph{weak} test may also be relativized to say that,
as before, for each $\tau \in G_n^Z$, $\mu(U^Z_{n+1} \cap [\tau])
\leq \frac12 \mu([\tau])$. The proof of Proposition~\ref{prop1}
relativizes so that we have the following proposition.

\begin{proposition}  \label{p1r}  $X$ is \ml $\BP$ random relative to~$Y$
if and only if it passes every weak
primitive recursive oracle test relative to the oracle~$Y$.
\qed
\end{proposition}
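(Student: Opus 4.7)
The plan is to relativize the proof of Proposition~\ref{prop2} essentially verbatim, checking that every construction step can be carried out uniformly in the oracle.

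The forward direction is immediate: every weak primitive recursive oracle test is in particular a primitive recursive oracle test, so if $X$ is \ml $\BP$ random relative to~$Y$ it trivially passes every weak primitive recursive oracle test relative to~$Y$. For the reverse direction, I would argue the contrapositive. Suppose $(U_n^Y)_n$ is a primitive recursive oracle test, witnessed by primitive recursive oracle functions $g^Z, \ell^Z$ and primitive recursive $f$ (with each $G_n^Z \subseteq \{0,1\}^{f(n)}$), such that $X \in \bigcap_n U_n^Y$. I will produce a weak primitive recursive oracle test $(V_n^Y)_n$ that also captures~$X$.

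First I would replace $(U_n^Z)_n$ by $W_n^Z := \bigcap_{i \leq n} U_i^Z$, so that the sequence is decreasing in~$n$ for every oracle. Since finite intersections of clopen sets whose codes are produced by a primitive recursive oracle function are again coded primitive recursively in the oracle (one simply, for each length-$f(n)$ string, queries membership in each $G_i^Z$ for $i \leq n$), the sequence $(W_n^Z)_n$ remains a primitive recursive oracle test with $\mu(W_n^Z) \leq 2^{-n}$. Next, as in the unrelativized argument, I define an oracle subsequence by $h^Z(0) = 0$ and $h^Z(n{+}1) = \ell^Z(h^Z(n)) + 1$, and set $V_n^Z = W_{h^Z(n)}^Z$. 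Because primitive recursive oracle functions are closed under primitive recursion on the stage index, $h^Z$ is itself primitive recursive relative to~$Z$, and hence so is the function enumerating codes for $V_n^Z$. The weak-test inequality $\mu(V_{n+1}^Z \cap [\tau]) \leq \tfrac12 \mu([\tau])$ for $\tau \in$ the defining set of $V_n^Z$ follows exactly as in the proof of Proposition~\ref{prop2}: a string $\tau$ coding $V_n^Z$ has length $\ell^Z(h^Z(n))$, so $[\tau]$ has measure $2^{-\ell^Z(h^Z(n))}$, while $\mu(V_{n+1}^Z) \leq 2^{-\ell^Z(h^Z(n))-1}$ by choice of $h^Z(n+1)$.

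Finally, since $(V_n^Y)_n$ is a subsequence of $(W_n^Y)_n$ and the latter is decreasing, $\bigcap_n V_n^Y = \bigcap_n W_n^Y = \bigcap_n U_n^Y$, so $X$ lies in every $V_n^Y$ and thus fails the weak primitive recursive oracle test $(V_n^Y)_n$. The main point to verify carefully is that each step in the construction (finite intersection, primitive recursion in the stage parameter, and reading off the length function) preserves primitive recursiveness relative to the oracle; this is the only place the relativization could in principle go wrong, and it goes through because the class of primitive recursive oracle functions is closed under the standard primitive recursive schemes applied uniformly in the oracle.
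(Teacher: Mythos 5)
Your proof is correct and is precisely the relativization the paper intends: the paper gives no argument for Proposition~\ref{p1r}, merely asserting that the weak-test construction (the discussion preceding Proposition~\ref{prop2}) relativizes, which is what you carry out, including the key closure point that oracle primitive recursive functions are closed under primitive recursion in the stage index. One small simplification worth noting: the paper's definition of a primitive recursive oracle test already supplies a \emph{plain} primitive recursive length function $f$ with $G_n^Z \subseteq \{0,1\}^{f(n)}$ uniformly in~$Z$, so you can take $h(n+1) = f(h(n))+1$ outright, avoiding the oracle recursion for $h^Z$ and keeping the length function of the resulting weak test oracle-free, as the definition of an oracle test formally requires (your version is also salvageable via the paper's domination remark, but this is cleaner).
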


\subsubsection{The martingale definition}

A primitive recursive oracle function~$d^Z$ is a {\em primitive recursive oracle
martingale} if $d^Z$ is a
martingale for all reals~$Z$.
\begin{definition}
A real $X$ is \emph{martingale $\BP$ random relative to~$Y$}
if there
is no primitive recursive oracle martingale~$d^Y$ which
succeeds primitive recursively on~$X$.
\end{definition}

It is straightforward to check that the proof of Theorem~\ref{bps}
relativizes to prove the following.

\begin{theorem} \label{relbps}
The following are equivalent for $X ,Y \in \{0,1\}^\omega$.
\begin{enumerate}
\item $X$ is Kolmogorov $\BP$ random relative to~$Y$.
\item $X$ is \ml $\BP$ random relative to~$Y$.
\item $X$ is martingale $\BP$ random relative to~$Y$.
\end{enumerate}
\end{theorem}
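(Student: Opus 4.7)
The plan is to observe that the proofs of Theorems~\ref{bps} and~\ref{bps2} relativize essentially verbatim by carrying the oracle $Y$ through every construction. The key point is uniformity: each transformation in those proofs---from a test to a compression, from a compression back to a test, from a test to a martingale, and from a martingale back to a test---is defined by applying primitive recursive operations to primitive recursive inputs, and the very same operations carry primitive recursive oracle inputs to primitive recursive oracle outputs.

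For (1)\,$\Leftrightarrow$\,(2), in the direction (1)\,$\Rightarrow$\,(2), given a primitive recursive oracle compression $C_{M}^Y(X\res f(c)) \leq f(c)-c$, I would mimic the proof of Theorem~\ref{bps} by defining $G_c^Z = \{M^Z(\sigma) : |\sigma|\leq f(c+1)-c-1\} \cap \{0,1\}^{f(c+1)}$. Because $M$ is a primitive recursive oracle machine and $f$ is primitive recursive, $U_c^Z = [G_c^Z]$ is a uniformly primitive recursive oracle sequence of clopen sets; the same counting argument gives $\mu(U_c^Z) < 2^{-c}$, so $(U_c^Y)_c$ is a primitive recursive oracle test capturing $X$. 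In the direction (2)\,$\Rightarrow$\,(1), given a weak primitive recursive oracle test $(U_c^Y)_c$ with oracle data $g^Z,k^Z,f$, I would run the stage-by-stage construction of Theorem~\ref{bps} with all computations at stage $c$ carried out using oracle queries to $Y$. Since $f$ is primitive recursive (not relative to $Y$), the inequality $f(c)-c \geq c$ again holds, so on input $\sigma$ one need only consider stages $c \leq |\sigma|$ to compute $M^Y(\sigma)$; this yields a primitive recursive oracle machine that compresses $X$, witnessed by the same $f$.

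For (2)\,$\Leftrightarrow$\,(3), I would relativize the proof of Theorem~\ref{bps2} in the same way. In one direction, setting $G_n^Z = \{\tau \in \{0,1\}^{f(n)} : d^Z(\tau) \geq 2^n\}$ turns a successful oracle martingale into a primitive recursive oracle test, since the counting bound $\sum_{|\tau|=m} d^Z(\tau) \leq 2^m$ holds for every oracle $Z$ and again yields $\mu(U_n^Z) \leq 2^{-n}$. In the other direction, the inductive construction of the martingale in Theorem~\ref{bps2} from a weak test uses only primitive recursive operations applied to the oracle test data $g^Y, f$ and to previously computed values of the martingale on strictly shorter strings; unrolling this bounded recursion gives a procedure that computes $d^Y(\sigma)$ primitive recursively relative to $Y$.

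The main obstacle is a bookkeeping one rather than a conceptual one: one must verify that each auxiliary function arising in the constructions (the length function, the set-size function, the stage bound) retains the correct complexity, namely primitive recursive relative to $Y$ throughout, and in the case of $f$ itself primitive recursive without the oracle, as required by the definitions. As noted after the definition of Kolmogorov $\BP$ random relative to $Y$, one can dominate any auxiliary function that is primitive recursive in $Y$ by a primitive recursive $f'$, so this causes no loss of generality. Once uniformity is verified, the arguments of Theorems~\ref{bps} and~\ref{bps2} transfer unchanged.
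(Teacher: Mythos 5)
Your proposal is correct and matches the paper exactly: the paper's entire proof of Theorem~\ref{relbps} is the assertion that the constructions of Theorems~\ref{bps} and~\ref{bps2} relativize (with Proposition~\ref{p1r} supplying the relativized weak tests), which is precisely what you carry out. Your bookkeeping observations---that the length/auxiliary function $f$ stays primitive recursive without the oracle, justified by domination since $Y$ is $0$/$1$-valued---are the same points the paper makes in the remarks following its relativized definitions.
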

When these conditions hold,
we say that $X$ is {\em $\BP$ random relative} to~$Y$.

\subsubsection{van Lambalgen's theorem}
We now discuss van Lambalgen's theorem for relative $\BP$ randomness.
We first prove that one direction of van Lambalgen's theorem
holds in this setting, and then give a counterexample for
the other direction.
Recall that if $A,B \subseteq \N$, then
$A \oplus B = \{2x: x \in A\} \cup \{2x+1: x \in B\}$.

\begin{theorem} \label{thm:LambalgenDirectionOne}
If $A\oplus B$ is $\BP$ random,
then $A$~is $\BP$ random relative to~$B$ and
$B$~is $\BP$ random relative to~$A$.
\end{theorem}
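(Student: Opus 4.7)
The plan is to prove the contrapositive: assume $A$ is not $\BP$ random relative to $B$ (the other case follows by symmetry, swapping even and odd bits) and produce a primitive recursive test that catches $A\oplus B$, contradicting the hypothesis. By Theorem~\ref{relbps}, non-randomness of $A$ relative to $B$ gives a primitive recursive oracle test $(U_n^Z)_n$ with associated primitive recursive functions $g^Z$ and $f$, such that $U_n^Z = [G_n^Z]$ with $G_n^Z \subseteq \{0,1\}^{f(n)}$, $\mu(U_n^Z) \le 2^{-n}$ for every $Z$, and $A \in U_n^B$ for every $n$.

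The construction of the non-relative test is the natural Fubini/product-space construction. For $X \in \{0,1\}^\omega$ write $X^{(0)}(i) = X(2i)$ and $X^{(1)}(i) = X(2i+1)$, so that $(A\oplus B)^{(0)} = A$ and $(A\oplus B)^{(1)} = B$. Define
\[
V_n ~=~ \bigl\{X \in \{0,1\}^\omega : X^{(0)} \in U_n^{X^{(1)}}\bigr\}.
\]
First I would verify that $V_n$ is uniformly clopen with a primitive recursive code. The key observation is that the primitive recursive oracle machine computing $g^Z(n)$ runs in some primitive recursive time, so it queries only bits of $Z$ at positions below some primitive recursive bound $q(n)$; hence $G_n^Z$ depends only on $Z \res q(n)$. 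Therefore membership of $X$ in $V_n$ depends only on $X \res 2\max(f(n),q(n))$, and a primitive recursive function can tabulate the resulting finite set $H_n \subseteq \{0,1\}^{2\max(f(n),q(n))}$ with $V_n = [H_n]$ by (i) reading off the odd bits of a candidate string $\tau$ to determine $G_n^{\tau^{(1)}}$, and (ii) checking whether the initial segment of length $f(n)$ of $\tau^{(0)}$ lies in $G_n^{\tau^{(1)}}$.

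Next I would bound the measure by a Fubini argument on $\{0,1\}^\omega \cong \{0,1\}^\omega \times \{0,1\}^\omega$ under the bijection $X \mapsto (X^{(0)}, X^{(1)})$, which is measure-preserving. For each fixed value $Y$ of $X^{(1)}$, the section of $V_n$ is exactly $U_n^Y$, which has measure at most $2^{-n}$. Integrating over the second coordinate yields $\mu(V_n) \le 2^{-n}$, so $(V_n)_n$ is a primitive recursive test in the sense of Section~\ref{sec:MartinLofBP}. Since $A \in U_n^B$ for all $n$, the construction gives $A\oplus B \in V_n$ for all $n$, contradicting the $\BP$ randomness of $A\oplus B$. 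The symmetric case replaces the roles of the even and odd bit streams (and uses the isomorphism $A\oplus B \leftrightarrow B \oplus A$ up to $\BP$ randomness, which is immediate since the coordinate swap is primitive recursive).

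The only genuinely nontrivial point will be the bookkeeping in the first step: one must verify that the bound $q(n)$ on the oracle positions queried by the primitive recursive oracle machine is itself primitive recursive, and that $H_n$ can be computed primitive recursively from $n$. This follows from the equivalence of the two definitions of ``primitive recursive relative to $Y$'' given earlier in the section, since a primitive-recursively time-bounded oracle Turing machine can write query addresses of length at most its running time, giving a primitive recursive upper bound on which oracle positions can be consulted. Everything else is a routine translation between the oracle-test and non-oracle-test formalisms.
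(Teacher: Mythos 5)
Your proposal is correct and is essentially the paper's own proof: both convert the oracle test into a plain test on the join by exploiting the primitive recursive bound on oracle query positions (your $q(n)$ is the paper's $\ell(n)$), tabulate the resulting finite sets $H_n$ of interleaved strings, and bound $\mu(V_n)\le 2^{-n}$ by averaging the sectionwise bound $\mu(U_n^Y)\le 2^{-n}$ over the oracle half, which is exactly your Fubini step in discrete form.
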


\begin{proof}
Suppose $B$ is not $\BP$ random relative to~$A$,
and there is a primitive recursive oracle test $(U_n^A)_n$
such that $B \in \bigcap_n U_n^A$.
The test $(U_n^A)_n$ is given by a primitive recursive
oracle function~$g^Z$ so that, for all reals~$Z$, $g^Z(n)$~codes
a set $G^Z_n\subset\{0,1\}^*$ so that $U_n^Z = [G^Z_n]$.
The runtime
of $g^Z(n)$ is primitive recursively bounded; in particular,
there is a primitive recursive function~$\ell(n)$ so that,
for all~$Z$ and all~$n$, $g^Z(n)$~only queries values of~$Z(i)$
for $i<\ell(n)$. In addition, $G_n^Z\subset\{0,1\}^{\le \ell(n)}$
for all~$n$.
For $z=\{0,1\}^*$ with $|z|\ge \ell(n)$, we
can thus unambiguously define $g^z(n)$ to equal the value of $g^Z(n)$
when run on any $Z \sqsupset z$.  In this situation, we
write $G^z(n)$ for the set coded by~$g^z(n)$, so
$G^z_n = G^Z_n$ for any $Z\sqsupset z$.
We define a primitive recursive test $(V_n)_n$ by letting
$V_n = [H_n]$ where
\[
H(n) ~=~ \{ a\oplus b :
   \hbox{$a,b\in\{0,1\}^{\ell(n)}$ and $b \in [G^a_n]$}\}.
\]
Clearly, $H_n$ is primitive recursive since $b \in [G^a_n]$
holds iff some prefix of~$b$ is in $G^a_n$.  Furthermore,
since for each $a\in \{0,1\}^{\ell(n)}$, $\mu([G^a_n])\le 2^{-n}$,
we have
$\mu(V_n) \le 2^{-n}$.  In addition,
since $B\in U_n^A$ for all~$n$, we also have
$A\oplus B$ is in~$V_n$ for all~$n$.  Therefore $A\oplus B$
is not $\BP$ random.

It follows by symmetry that
if $A\oplus B$ is $\BP$ random, then
$A$~is $\BP$ random relative to~$B$.
\end{proof}

Next we give a counterexample to the converse of
Theorem~\ref{thm:LambalgenDirectionOne}.

\begin{theorem}\label{thm:Lambalgen:CounterExample}
There are reals $A$ and~$B$ such that $A$~is $\BP$ random relative to~$B$,
and $B$~is $\BP$ random relative to~$A$, but $A\oplus B$ is
not $\BP$ random.
\end{theorem}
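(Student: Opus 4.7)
The plan is to build $A$ and $B$ by a finite-extension construction that simultaneously
(i) forces the join $A\oplus B$ into a fixed primitive recursive test $(W_n)_n$, and
(ii) diagonalises against every primitive recursive oracle test for $A$ relative to $B$ and for $B$ relative to $A$. The heart of the argument is to choose $(W_n)_n$ so that the natural ``convert a join-test into an oracle test'' recipe $U_n^B=\{A':A'\oplus B\in W_n\}$ fails the measure bound $\mu(U_n^B)\leq 2^{-n}$ for the particular $B$ we build; this is the phenomenon that powered the proof of Theorem~\ref{thm:LambalgenDirectionOne} in only one direction, since there one could average over the oracle to get the Fubini bound.

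First I would fix $(W_n)_n$ of the ``genuinely joint'' form
\[
W_n ~=~ \bigl\{X\in\{0,1\}^\omega : \bigl(X(2k)\oplus X(2k+1)\bigr) = \phi_k\bigl(X\res 2k\bigr)\text{ for each }k<n\bigr\},
\]
where $(\phi_k)_k$ is a primitive recursive sequence of $\{0,1\}$-valued functions chosen so that the $k$th condition depends non-trivially on both the past $A$-bits and the past $B$-bits. Then $\mu(W_n)=2^{-n}$ and $(W_n)_n$ is a valid primitive recursive test. Because each constraint binds the \emph{sum} $A(k)\oplus B(k)$ rather than $A(k)$ or $B(k)$ individually, the fiber $\{A':A'\oplus B\in W_n\}$ for a fixed $B$ has measure roughly $2^{-n/2}$ times a factor depending on how well the $\phi_k$'s line up with $B$, and for an appropriately chosen $B$ this fiber has measure $>2^{-n}$; the symmetric statement holds for the $A$-fiber. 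Thus the straightforward conversion of $(W_n)_n$ into a primitive recursive oracle test is \emph{not} valid for our $B$, leaving room for $A$ to be $\BP$ random relative to $B$.

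Next I would enumerate, in a single list, all primitive recursive oracle tests $(T_e)_e$ that threaten $A$ relative to $B$ and all $(S_e)_e$ that threaten $B$ relative to $A$. In stages I construct finite prefixes $\sigma_s\sqsubset A$ and $\tau_s\sqsubset B$ with $|\sigma_s|=|\tau_s|$, so $\sigma_s\oplus\tau_s$ is a prefix of $A\oplus B$. At stage $s$, I handle the $s$th requirement: assuming the requirement is to defeat test $T_e$, I look at the measure bound $\mu([G_n^Z])\leq 2^{-n}$ guaranteed uniformly in $Z$ by the definition of an oracle test; for $n$ large, the set of ``safe'' length-$f(n)$ extensions of $\sigma_s$ (strings avoiding $[G_n^Z]$ for every $Z\sqsupseteq\tau_s$) has measure at least $1-2^{-n+1}$, and by the local nature of the constraints defining $W_n$ there remain exponentially many safe extensions that are also consistent with $W_n$; I pick one of these together with a matching extension of $\tau_s$. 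The requirement for $S_e$ is handled symmetrically.

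The main obstacle is balancing the joint constraint of $W_n$ with the two streams of diagonalisation, because the $\phi_k$ constraint forces the $A$- and $B$-extensions to be chosen in tandem rather than independently. I would address this by making the $\phi_k$ constraints sparse (say, imposed only at block boundaries of a primitive recursive block structure) so that within each block there are sufficiently many free bits of $A$ and of $B$ to absorb a single diagonalisation step without violating the $W_n$ constraint. Once the construction terminates in the limit, (i) guarantees $A\oplus B\in\bigcap_n W_n$ so $A\oplus B$ is not $\BP$ random, while (ii) together with Theorem~\ref{relbps} ensures $A$ is $\BP$ random relative to $B$ and $B$ is $\BP$ random relative to $A$, yielding the required counterexample.
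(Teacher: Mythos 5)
There is a genuine gap, and it is fatal to the specific design rather than a fixable detail. Your chosen test $W_n$ does \emph{not} have the property you need: for a fixed oracle $B$, each XOR constraint $A'(k)\oplus B(k)=\phi_k\bigl((A'\oplus B)\res 2k\bigr)$ determines the bit $A'(k)$ exactly, given $A'\res k$ and $B$. By induction on $k$, the fiber $U_n^B=\{A':A'\oplus B\in W_n\}$ is a single cylinder of length $n$, so $\mu(U_n^B)=2^{-n}$ \emph{exactly}, for every $B$ --- not ``roughly $2^{-n/2}$ times a factor depending on $B$.'' Since the $\phi_k$ are primitive recursive, $(U_n^B)_n$ is a legitimate primitive recursive oracle test, and any $A$ with $A\oplus B\in\bigcap_n W_n$ satisfies $A\in\bigcap_n U_n^B$; hence your construction would force $A$ to be non-random relative to~$B$, the opposite of what you want. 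Making the constraints sparse does not help: $n$ XOR constraints still pin down $n$ bits of $A'$, so the fiber measure stays $2^{-n}$. The moral is that XOR-type constraints convert perfectly under the Fubini recipe; to leave slack you need constraints that are \emph{vacuous} for $A$ at the actual oracle. This is exactly what the paper's proof does: it arranges disjoint supports, with $B$ identically zero on the blocks where $A$ is freely chosen and vice versa. The join then fails the test ``no $k<n$ has $A(k)=B(k)=1$,'' which has joint measure $(3/4)^n$, while its fiber over the constructed $B$ has measure $2^{-\#\{k<n:\,B(k)=1\}}$, which the fast-growing block structure keeps far above $2^{-n}$ at the relevant scales.

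Your diagonalisation step also has a real flaw. You claim the length-$f(n)$ extensions of $\sigma_s$ avoiding $[G_n^Z]$ \emph{for every} $Z\sqsupseteq\tau_s$ have measure at least $1-2^{-n+1}$. The bound $\mu([G_n^Z])\le 2^{-n}$ holds per oracle, but the union $\bigcup_{Z\sqsupseteq\tau_s}[G_n^Z]$ can have measure up to $\min\{1,\,2^{\ell(n)-|\tau_s|}\cdot 2^{-n}\}$, since $g^Z(n)$ may query $\ell(n)\gg|\tau_s|$ bits of the oracle and produce different forbidden cylinders for different extensions; this union can cover essentially everything. The repair --- which is precisely the move in the paper's proof --- is to commit the oracle \emph{first}: before choosing the $A$-extension at a stage, fix $B$ (there, with zeros) on a segment long enough to cover every oracle query made by the relevant computations (the paper chooses $h(2j+1)$ past all queries of $M_j^X(v)$ for the finitely many short~$v$), so the oracle test or machine is pinned down to a single object and the $2^{-n}$ (or string-counting) bound applies. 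With both corrections --- replacing the XOR constraints by a disjointness-type condition tied to an alternating zero-block structure, and committing the oracle on the full query region before each counting step --- your finite-extension outline collapses into essentially the paper's argument.
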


\begin{proof}
We construct the reals $A$ and~$B$ in stages.
The stages will be controlled by a
fast growing increasing function~$h(i)$:
at the end of stage~$i$,
the values of $A(n)$ and~$B(n)$ for $n<h(i)$ will
have been set.  Furthermore, these values will
satisfy:
\begin{enumerate}[label=\arabic*.]
\item
For $h(2j)\le n < h(2j+1)$, we have $B(n)=0$.
\item
For $h(2j+1)\le n < h(2j+2)$, we have $A(n)=0$.
\end{enumerate}
Define the sequences $A_i,B_i\in\{0,1\}^{h(i)}$
to consist of the values of $A(n)$ and $B(n)$ that
have been set by the end of the $i$-th stage.  In the end,
we set $A = \lim_i A_i$ and $B = \lim_i B_i$.
Conditions 1\ and~2\  clearly imply that $A\oplus B$
is not $\BP$ random.

Let $(M^Z_j,f_j)_{j\ge 0}$ enumerate all pairs such that
$M^Z_e:\{0,1\}^*\mapsto\{0,1\}^*$
is a primitive recursive oracle function,
and $f_j:\N\rightarrow\N$ is a
primitive recursive function with $f_j(c)>c$ for all~$c$.
Stage~$2j+1$, which sets the values of $A(n)$
and~$B(n)$ for $h(2j)\le n < h(2j+1)$,
will ensure that $A$ is not primitive recursively
compressed by $(M^B_j,f_j)$.
Stage~$2j+2$
will act similarly
to ensure that $B$ is not primitive recursively
compressed by $(M^A_j,f_j)$.

We describe an odd stage~$2j+1$.  The value
of $h(2j)$ has already been set; we must define $h(2j+1)$
and the values of $A(n)$ for $h(2j)\le n < h(2j+1)$.
Define $X$ to be the infinite sequence consisting
of $B_{2j}$ followed by all~$0$'s.
Let $c=h(2j)+1$ and compute $f_j(c)>c$.
There are $2^{f_j(c)-h(2j)}$ many strings of the
form $A_{2j} u$ with $|A_{2j} u| = f_j(c)$.
On the other hand, there are at most
$2^{f_j(c)-h(2j)}-1$ strings of the form
$M^X_j(v)$ with $|v|\le f_j(c)-h(2j)-1 = f_j(c)-c $.
Fix a value for~$u$ so that $|A_{2j} u| = f_j(c)$
and $u$ is not equal to $M^X_j(v)$ for any
$v\in\{0,1\}^{\le f_j(c)-c}$.

Now, set $h(2j+1)$ to be equal to the
least value $\ge f_j(c)$ such that,
for every $v\in\{0,1\}^{\le f_j(c)-c}$,
$M^X_j(v)$ only queries oracle values
$X(m)$ for $m<h(2j+1)$.
This ensures that $M^X_j(v) = M^B_j(v)$
for all such~$v$'s.
Set $A_{2j+1} = A_{2j} u 0^{h(2j+1)-f_j(c)}$,
so $|A_{2j+1}| = h(2j+1)$.
Further set
$B_{2j+1} = B_{2j} 0^{h(2j+1)-h(2j)}$.
By construction,
$C_{M_j^B}( A\upharpoonright f_j(c) ) > f_j(c)-c$,
so $A$~is not
primitive recursively
compressed by $(M^B_j,f_j)$.

That completes the description of
the even stages.  The odd stages are defined similarly.
\end{proof}

Kihara and Miyabe \cite{KM13}  have obtained a similar result for Kurtz random reals,
using the same basic idea for the proof.

\section{Polynomial-space bounded randomness}\label{sec:BPSrandomness}

In this section, we modify the definitions of $\BP$ randomness to define a notion
of bounded randomness relative to polynomial space functions, called
 ``$\BPS$ randomness''.
It is important to note that this is a notion of bounded randomness,
similar to the bounded primitive recursive randomness developed above.

The usual notion of $\PSPACE$ randomness is part of the well-developed study
of resource-bounded measure and randomness due to
Lutz, Schnorr, and Ambos-Spies and Mayordomo, see \cite{Lutz90,AM97,S71}.
It will follow from our definitions that $\PSPACE$ randomness implies $\BPS$ randomness.
The two notions are certainly different, as $\PSPACE$ random sets satisfy the law of large numbers,
whereas $\BPS$ random sets, like $\BP$ random sets,
satisfy only a limited form of this law, see, for example, Theorem 3.13.

We begin by defining three equivalent notions of $\BPS$ randomness,
analogous to the three versions of $\BP$ randomness.

First, however, we need to define
polynomial space functions.
A $\PSPACE$ {\em predicate}, or {\em set}, is one for which membership can be
decided by a Turing machine which uses space bounded by a polynomial
$p(n)$ of the length~$n$ of its input.
A polynomial space computable {\em function}
is a function $\{0,1\}^* \rightarrow \{0,1\}^*$ which can be computed
by a Turing machine~$M$ with a read-only input tape, work tapes,
and a write-only output tape such that the space that $M$ uses on
its work tapes is polynomially bounded by the length~$n$ of  its input.
It is required that $M$ halts for all possible inputs, thus the
runtime of~$M(x)$ is bounded by $2^{p(n)}$ for some polynomial~$p$,
where $n=|x|$ is the length of the input~$x$.
If in addition, the length of output of~$M$ is polynomially
bounded, then we say $M$ computes a function in $\FPSPACE$.
That is, an $\FPSPACE$ function~$f$ is computable in polynomial
space, and there is a polynomial~$p(n)$
such that $|f(x)| \le p(|x|)$ for all inputs~$x$.
On the other hand, if we do not bound the length of $f(x)$,
so that $|f(x)|$ is bounded only by $2^{p(n)}$ for some
polynomial~$p$, then we call $f$ an $\FPSPACEplus$ function.
However, we are primarily interested in $\FPSPACE$ functions.

We will use $\FPSPACE$ functions for defining all three
of \ml, Kolmogorov, and martingale $\BPS$ randomness.
(So, $\FPSPACEplus$ functions
are not needed for the definitions.)
All three types of definitions use
$\PTIME$ functions $f:\{1\}^*\rightarrow \{1\}^*$
to
bound the lengths of strings.  The importance of these $\PTIME$ functions
lies solely in the fact that their growth rate is bounded by a polynomial;
in other words, in the fact that
$|f(1^n)| \le p(n)$ for some polynomial~$n$.  In fact, all our definitions
could equivalently use just
functions of the form $f(1^n) = 1^{n^c+c}$ for $c\in \N$.
Similarly, the definitions could equivalently use
$\FPSPACE$ functions $f:{1}^*\rightarrow \{0,1\}^*$, as their growth
rate is also bounded by a polynomial.

\begin{definition}
A $\PSPACE$ test $(U_n)_{n \geq 0}$ is specified by
a pair of functions $(G,f)$ such that
$G:\{1\}^* \times \{0,1\}^* \rightarrow \{0,1\}$
is an $\FPSPACE$-function
and $f:\{1\}^* \rightarrow \{1\}^*$ is a strictly length
increasing $\PTIME$ function
such that for each~$n$,
\[
G_n ~=~ \{\tau \in \{0,1\}^{\leq |f(1^n)|}: G(1^n,\tau)= 1\}
\]
is a set of strings of length
$\leq |f(1^n)|$ such that $U_n = [G_n]$ is a clopen set with measure $ \leq 2^{-n}$.
\end{definition}

As is easy to check,
$\PSPACE$ tests could be equivalently defined with a
$\FPSPACEplus$ function~$g$ instead of the $\FPSPACE$ function~$G$,
by defining $G_n$ to be equal to the set coded by the (potentially
exponentially long) string $g(1^n)$.

The above definition gave $G_n$ as the domain of
a parameterized $\FPSPACE$ function~$G$.  An alternate,
and equivalent, definition is to define~$G_n$
as the range of a parameterized $\PSPACE$ function~$\widehat G$:

\begin{proposition}\label{prop:PspaceRange}
The sequence $(G_n)_n$ and~$f$ satisfy the conditions for the
definition of a $\PSPACE$ test iff there is an
$\FPSPACE$ function
$\widehat G :\{1\}^* \times \{0,1\}^* \rightarrow \{0,1\}^*$
such that
\[
G_n ~=~
\{\sigma \in \{0,1\}^{\leq |f(1^n)|}:
    \text{for some}\ \tau\in \{0,1\}^{\leq |f(1^n)|},\,
     \widehat G(1^n,\tau)= \sigma\}.
\]
\end{proposition}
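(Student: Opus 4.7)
The plan is to prove both implications by direct conversion between the two representations, treating the second formulation as an existential quantification over a polynomially-bounded witness that can be searched exhaustively in polynomial space.

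For the forward direction, suppose the pair $(G,f)$ witnesses that $(G_n)_n$ forms a $\PSPACE$ test in the original sense. I would define $\widehat G(1^n,\tau)$ as follows: if $|\tau|\le |f(1^n)|$ and $G(1^n,\tau)=1$, output $\tau$; otherwise output a fixed string longer than $|f(1^n)|$, for example $0^{|f(1^n)|+1}$. Since $G$ is $\FPSPACE$, $f$ is $\PTIME$, and the length tests are trivially in polynomial space, $\widehat G$ is $\FPSPACE$. The strings of length $\le |f(1^n)|$ in the range of $\widehat G(1^n,\cdot)$ are then exactly those $\sigma$ with $G(1^n,\sigma)=1$, i.e.\ exactly $G_n$.

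For the reverse direction, suppose $\widehat G$ is as in the proposition. Define $G(1^n,\sigma)$ to output $1$ if $|\sigma| \le |f(1^n)|$ and there exists $\tau \in \{0,1\}^{\le |f(1^n)|}$ with $\widehat G(1^n,\tau) = \sigma$, and $0$ otherwise. To compute this in $\FPSPACE$, iterate $\tau$ through all strings of length $\le |f(1^n)|$ in lexicographic order; since $|f(1^n)|$ is polynomial in~$n$, the loop counter $\tau$ uses only polynomial space. For each such~$\tau$, compute $\widehat G(1^n,\tau)$ in polynomial space (reusing the workspace between iterations), compare it to $\sigma$, and halt with output~$1$ upon a match. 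The total space used is polynomial in $n+|\sigma|$, so $G\in \FPSPACE$, and $G_n$ is precisely the set of strings for which $G(1^n,\cdot)$ accepts.

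The only real subtlety is that the enumeration in the reverse direction ranges over exponentially many values of~$\tau$, so it is essential that the tape used to simulate $\widehat G(1^n,\tau)$ be reset between iterations; this is the standard observation that $\PSPACE$ is closed under exponential-time polynomial-space search. The forward direction has only the minor point that when $G_n$ is empty, the corresponding $\widehat G$ must produce no short outputs, which is why we direct the ``miss'' case to a string strictly longer than $|f(1^n)|$.
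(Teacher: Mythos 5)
Your proof is correct, and it is exactly the routine argument the paper intends: the paper states Proposition~\ref{prop:PspaceRange} without proof, treating the equivalence as easy to check. Both directions are handled properly, including the two genuine (if small) subtleties---routing non-accepted inputs to an output longer than $|f(1^n)|$ so that empty or partial $G_n$ contributes no spurious short strings to the range, and reusing workspace across the exponentially many iterations of~$\tau$ in the reverse-direction search, which is legitimate since $|f(1^n)|$ is polynomially bounded in~$n$ and $\widehat G$ is $\FPSPACE$.
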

\noindent
Therefore, we will also refer to $(\widehat G, f)$ as
being a $\PSPACE$ test.

%

A \emph{weak} $\PSPACE$ test $(U_n)_{n \geq 0}$ is a $\PSPACE$ test
as above with the additional property that for each~$n$ and $\sigma_{i,n}\in G_n$,
$\mu(U_{n+1} \cap [\sigma_{i,n}]) \leq \frac12 \mu([\sigma_{i,n}])$.

\begin{definition}
An infinite sequence~$X$ is {\em \ml $\BPS$ random}
if $X$ passes every $\PSPACE$ test.  $X$~is
{\em weakly \ml $\BPS$ random} if $X$ passes every weak $\PSPACE$ test.
\end{definition}

\begin{definition}
An infinite sequence~$X$ is {\em Kolmogorov $\BPS$ random} if there do not
exist an $\FPSPACE$ function $M:\{0,1\}^* \rightarrow \{0,1\}^*$ and a
$\PTIME$ function $f:\{1\}^* \rightarrow \{1\}^*$ such that,
for every $n \in \N$, $C_M(X\res |f(1^n)|) \leq |f(1^n)|-n$.
\end{definition}

For the next definition, we take members of $\mathbb{Q} \cap [0,\infty)$
as being coded as ternary strings $\sigma 2 \tau$ where
$\sigma.\tau$ is the binary expansion
of a rational number in $\mathbb{Q} \cap [0,\infty)$.

\begin{definition}
An $\FPSPACE$ martingale
$d:\{0,1\} \rightarrow \mathbb{Q} \cap [0,\infty)$
{\em succeeds} on~$X$ if there is a $\PTIME$
function $f:\{1\}^* \rightarrow \{1\}^*$
such that, for all~$n$,
$d(X \res m) \geq 2^n$ where $m = |f(1^n)|$.
An infinite sequence~$X$ is
{\em martingale $\BPS$ random} if no $\FPSPACE$ martingale
succeeds on~$X$.
\end{definition}

We have the following analogue of Proposition~\ref{prop3}:
\begin{proposition}\label{prop:martingaleEquivsBPS}
The following are equivalent:
\begin{enumerate}
\item $X$ is martingale $\BPS$ random.
\item There do not exist an $\FPSPACE$
martingale~$d$ and a $\PTIME$ function~$f$ such
that, for every~$n$, there exists $m\leq f(n)$
such that $d(X \res m) \geq 2^n$.
\item There do not exist an $\FPSPACE$ martingale~$d$
and a $\PTIME$ function~$f$ such that $d(X \res m) \geq 2^n$ for
all $n$ and all $m \geq f(n)$.
\end{enumerate}
\end{proposition}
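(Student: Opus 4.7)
The plan is to prove Proposition~\ref{prop:martingaleEquivsBPS} by following the pattern of Proposition~\ref{prop3}. The implications (1)\,$\Rightarrow$\,(3) and (2)\,$\Rightarrow$\,(1) are immediate: any pair $(d,f)$ witnessing the negation of~(3) witnesses the negation of~(1) (set $m=|f(1^n)|$), and any $(d,f)$ witnessing the negation of~(1) witnesses the negation of~(2). So the only nontrivial direction is (3)\,$\Rightarrow$\,(2), which I establish contrapositively. Given an $\FPSPACE$ martingale $d$ and a $\PTIME$ function $f$ such that for every~$n$ there exists $m\le|f(1^n)|$ with $d(X\res m)\ge 2^n$, the task is to produce an $\FPSPACE$ martingale $d'$ and a $\PTIME$ function $f'$ with $d'(X\res m)\ge 2^n$ for all~$n$ and all $m\ge|f'(1^n)|$.

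The savings-account construction from Proposition~\ref{prop3} does not transfer directly: there one obtains $d'(\tau)\ge n(\tau)$, so forcing $d'(X\res m)\ge 2^n$ requires $d$ to reach $2^{2^n}$ at some prefix. In the $\BP$ setting this is handled by $f'(n)=f(2^{n+1})$, but the analogous choice $f'(1^n)=1^{|f(1^{2^{n+1}})|}$ has output of exponential length in~$n$ and hence is not $\PTIME$. The essential difficulty is therefore to redesign the savings account so that $d'$ grows fast enough in the number of doublings of~$d$ that polynomial-length witnesses from~$d$ already suffice.

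My plan is to use a weighted mixture of stopped martingales. For each integer $a\ge 0$, let $\tau^*_a$ denote the shortest prefix of $\tau$ with $d(\tau^*_a)\ge 2^a$ if one exists, and define the stopped martingale $d^*_a(\tau)=d(\tau^*_a)$ in that case, and $d^*_a(\tau)=d(\tau)$ otherwise. Each $d^*_a$ is a non-negative martingale with $d^*_a(\emptyset)=1$, and, given~$d$, is $\FPSPACE$-computable by iterating over the prefixes of its input. Now let
\[
 d'(\tau) ~=~ \sum_{j=0}^{\infty} 2^{-j-1}\, d^*_{2j}(\tau),
\]
a convex combination of martingales, hence itself a martingale, with $d'(\emptyset)=1$. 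Since $d(\tau')\le 2^{|\tau'|}$ for every prefix, $d^*_{2j}(\tau)=d(\tau)$ whenever $2j>|\tau|$, so the sum truncates to
\[
 d'(\tau) ~=~ \sum_{j=0}^{\lfloor |\tau|/2\rfloor} 2^{-j-1}\, d^*_{2j}(\tau) + d(\tau)\cdot 2^{-\lfloor |\tau|/2\rfloor - 1},
\]
a sum of polynomially many $\FPSPACE$-computable terms, each of polynomial bit-length; so $d'\in\FPSPACE$.

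For the success rate, observe that whenever $d$ reaches $2^{2j}$ at some prefix of $\tau$ one has $d'(\tau)\ge 2^{-j-1}\cdot 2^{2j}=2^{j-1}$. Taking $j=n+1$ forces $d'(\tau)\ge 2^n$ as soon as $d$ has reached $2^{2n+2}$ at a prefix of~$\tau$. By hypothesis this happens within the first $|f(1^{2n+2})|$ symbols, a quantity polynomial in~$n$, so setting $f'(1^n)=1^{|f(1^{2n+2})|}$ yields a $\PTIME$ function with $d'(X\res m)\ge 2^n$ for all $m\ge |f'(1^n)|$. This pair $(d',f')$ witnesses the negation of~(3), completing the argument. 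The main obstacle is engineering the mixture so that polynomial-length reaches by~$d$ produce the exponential values $2^n$ for~$d'$; once the weights are chosen, the $\FPSPACE$ and $\PTIME$ verifications are routine.
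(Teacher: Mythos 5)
Your proposal is correct and is essentially the paper's own proof: your weighted mixture $\sum_{j\ge 0} 2^{-j-1} d^*_{2j}$ of martingales stopped at the quadrupling thresholds $2^{2j}=4^j$ expands (via the geometric tail) to exactly the paper's savings-account martingale $d'(\tau) = d(\tau)/2^{n(\tau)} + \sum_{i<n(\tau)} d(\tau_i)/2^{i+1}$, and both arguments turn on the same key point you identified--using quadrupling rather than doubling thresholds so that the auxiliary function $f'(1^n)\approx f(1^{2n})$ stays $\PTIME$, where the naive $\BP$-style choice $f(2^{n+1})$ would blow up exponentially. The differences are purely presentational (mixture-plus-truncation versus direct savings-account bookkeeping, and a harmless index shift in the thresholds).
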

\begin{proof}
As before, the implications (2)\,$\Rightarrow$\,(1) and
(1)\,$\Rightarrow$\,(3) are immediate.  Proving
the implication (3)\,$\Rightarrow$\,(2) requires
a refined version of the argument used in the
proof of Proposition~\ref{prop3}.  As before,
we use the idea of a savings account, but now the
intuition is that once $d(\tau)\ge 4^{n+1}$, then
one half of the working capital
is transferred to a savings account.
So, loosely speaking, every time the capital
is {\em quadrupled}, one half of it is moved into
the savings account.
Suppose $d$ and~$f$ are as in~(2).  We formally define
a martingale~$d^\prime$ as follows.
Given $\tau\in \{0,1\}^*$, define
$\ell^\tau_n$ to equal the least value~$\le |\tau|$,
if any, such that $d(\tau\res \ell^\tau_n) \ge 4^{n+1}$.
Let $\tau_n$ equal $\tau \res \ell^\tau_n$, and let $n(\tau)$ be the
least value of~$n$ for which $\ell^\tau_n$ is undefined.
Note that $\ell^\tau_n$, $\tau_n$, and $n(\tau)$
are in $\FPSPACE$, since $d$ is in $\FPSPACE$.
Now define
\[
d^\prime(\tau) ~=~
   \frac{d(\tau)}{2^{n(\tau)}} +
   \sum\limits_{i=0}^{n(\tau)-1} \frac{d(\tau_i)}{2^{i+1}}.
\]
Once again, the summation
represents the capital placed in
the savings account, and
$d(\tau)/2^{n(\tau)}$ is the remaining working capital.
As before, it is easy to verify that $d^\prime$ is
a martingale.  Also, by the definition,
$d^\prime(\tau)$ is in $\FPSPACE$.

Since $d(\tau_i)\ge 4^{i+1}$,
we have $d(\tau)\ge \sum_{i=1}^{n(\tau)-1}2^{i+1} > 2^{n(\tau)}$.
Let $f^\prime$ be the $\PTIME$ function $f^\prime(1^n) = f(1^{2n})$.
Suppose $X\in\{0,1\}^\omega$ and $m\ge |f^\prime(1^n)|$.  Since
$m \ge |f(1^{2n})|$,
there is some $m^\prime\le m$ such that
$d(X \res m^\prime)\ge 2^{2n} = 4^n$.
Therefore, $n(X\res m)\ge n$, so $d^\prime(X \res m) \ge 2^n$.
This establishes
that the properties of~(3) holds for $d^\prime$ and~$f^\prime$.
\end{proof}

By suitably modifying the proofs of Theorem \ref{bps} and~\ref{bps2},
we can prove the equivalence of these three versions of
$\BPS$ randomness.  As we shall see, the modifications are straightforward
in the case of Theorem~\ref{bps}, but more substantial for
Theorem~\ref{bps2}.

As a preliminary step, the next lemma states that,
for \ml $\BPS$ randomness, the strings
in the sets~$G_n$ of a
$\PSPACE$ test can be required to all be the same length.

\begin{lemma} For any $\PSPACE$ test $(G,f)$,
there is an $\FPSPACE$ function~$G^\prime$ such that $(G^\prime, f)$
is a $\PSPACE$ test
such that, for all $n$, $G^\prime_n \subset \{0,1\}^{f(n)}$.
\end{lemma}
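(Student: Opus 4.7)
The plan is to replace $G$ by a function $G'$ that, given $(1^n,\tau)$, outputs $1$ exactly when $|\tau|=|f(1^n)|$ and some prefix $\sigma$ of $\tau$ (including $\tau$ itself) satisfies $G(1^n,\sigma)=1$. Then $G'_n$ is, by construction, a subset of $\{0,1\}^{|f(1^n)|}$, and consists precisely of all length-$|f(1^n)|$ extensions of the original elements of $G_n$.

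First I would verify $[G'_n]=[G_n]$, which yields $\mu([G'_n])=\mu([G_n])\le 2^{-n}$. For the $\supseteq$ direction, any $X\in[G_n]$ has a prefix $\sigma\in G_n$ with $|\sigma|\le|f(1^n)|$; the length-$|f(1^n)|$ prefix $\tau$ of $X$ then extends $\sigma$, so $\tau\in G'_n$ and hence $X\in[G'_n]$. For the $\subseteq$ direction, any $X\in[G'_n]$ has a prefix $\tau\in G'_n$ which in turn extends some $\sigma\in G_n$, so $X\in[\sigma]\subseteq[G_n]$.

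Next I would check that $G'\in\FPSPACE$. On input $(1^n,\tau)$: (i)~compute $|f(1^n)|$, which is bounded by a polynomial in $n$ since $f$ is $\PTIME$, and reject if $|\tau|\ne|f(1^n)|$; (ii)~iterate a length counter $k$ from $0$ to $|\tau|$ and, for each $k$, invoke the $\FPSPACE$ procedure for $G$ on $(1^n,\tau\res k)$, reusing the same workspace across invocations; (iii)~accept if any invocation returns $1$. The total space is polynomial in $n+|\tau|$, so $G'$ is indeed in $\FPSPACE$, and $(G',f)$ is a $\PSPACE$ test with $G'_n\subseteq\{0,1\}^{|f(1^n)|}$.

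There is no real obstacle; the only thing to be careful about is that $\FPSPACE$ is closed under the loop-and-reuse pattern in step~(ii), which is standard since the counter~$k$ uses only logarithmic space and the workspace for each call to $G$ may be erased and reused. Consequently the lemma follows.
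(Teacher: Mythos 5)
Your proposal is correct and is essentially identical to the paper's proof: the paper also defines $G'(1^n,\sigma)=1$ exactly when $|\sigma|$ equals the length bound and some prefix of~$\sigma$ is accepted by~$G$, noting that $[G_n]=[G'_n]$ and that $G'$ inherits membership in $\FPSPACE$ from~$G$. You merely spell out the two verifications ($[G_n]=[G'_n]$ and the space-reuse argument) that the paper leaves as ``clearly,'' which is fine.
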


\begin{proof}
We are given that $G_n = \{ \sigma\in\{0,1\}^{\le f(n)} : G(1^n,\sigma)=1 \}$.
It will suffice to define an $\FPSPACE$ function $G^\prime$ so that
$G^\prime_n = \{ \sigma\in\{0,1\}^{f(n)} : G^\prime(1^n,\sigma)=1 \}$
satisfies $[G_n] = [G^\prime_n]$ for all~$n$.
This is simply done: define $G^\prime(1^n,\sigma) = 1$
to hold exactly when $|\sigma| = n$ and there is some $\tau\sqsubset \sigma$
such that $G(1^n,\tau) = 1$.  Clearly $[G_n] = [G^\prime_n]$,
and $G^\prime$~is in $\FPSPACE$ since $G$ is.
\end{proof}

\begin{definition}
A function $M: \{0,1\}^* \to \{0,1\}^*$
is said to be an {\em $\FPSPACE$ $\BPS$ quick process
machine} if $M$~is in $\FPSPACE$ and is a quick process machine
with order function~$h$ such that
there is a $\PTIME$ function $g: \{1\}^* \to \{1\}^*$ such
that $g(1^n) =1^{h(n)}$ for all~$n$.
An infinite sequence~$X$ is {\em quick process $\BPS$ random}
if there do not
exist an $\FPSPACE$ $\BPS$ quick process machine~$M$ and a
$\PTIME$ function $f:\{1\}^* \rightarrow \{1\}^*$ such that,
for every $n \in \N$, $C_M(X\res |f(1^n)|) \leq |f(1^n)|-n$.
\end{definition}

\begin{theorem} \label{thm:firstequivsBPS}
The following are equivalent for $X \in \{0,1\}^\omega$.
\begin{enumerate}
\item $X$ is \ml $\BPS$ random.
\item $X$ is Kolmogorov  $\BPS$ random.
\item $X$ is quick process $\BPS$ random.
\end{enumerate}
\end{theorem}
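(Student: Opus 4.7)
The plan is to follow the three-way equivalence proof of Theorem~\ref{bps}, substituting $\FPSPACE$ functions and $\PTIME$ length-bounding functions for primitive recursive ones; since all the relevant growth rates stay polynomial, the overall analysis goes through with care. The implication $(2)\Rightarrow(3)$ is immediate, because every $\FPSPACE$ $\BPS$ quick process machine is, by definition, an $\FPSPACE$ function, so Kolmogorov $\BPS$ non-randomness is a special case of quick process $\BPS$ non-randomness.

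For $(1)\Rightarrow(2)$, given a compression witnessed by an $\FPSPACE$ function $M$ and a $\PTIME$ function $f$, I would set, in parallel with Theorem~\ref{bps},
\[
G_c ~=~ \{M(\sigma) : \sigma \in \{0,1\}^{\le |f(1^{c+1})|-c-1}\} \cap \{0,1\}^{|f(1^{c+1})|}
\]
and $U_c=[G_c]$. The sets $G_c$ arise as the range of an $\FPSPACE$ function on inputs of polynomial length, so by Proposition~\ref{prop:PspaceRange}, together with the $\PTIME$ length function $c\mapsto 1^{|f(1^{c+1})|}$, the sequence $(G_c)_c$ is a $\PSPACE$ test. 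The measure bound $\mu(U_c)<2^{-c}$ is the same counting argument as in Theorem~\ref{bps}, and $X\res |f(1^{c+1})|\in G_c$ by hypothesis, witnessing that $X$ is not \ml $\BPS$ random.

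For $(3)\Rightarrow(1)$, suppose $X$ is not \ml $\BPS$ random with witness $(G_n,f)$. After preprocessing (by the preliminary lemma and the standard adjustments from the proof of Theorem~\ref{bps}) I may assume $U_{n+1}\subseteq U_n$, uniform length $L_n=|f(1^n)|$ in $G_n$, and $L_{n+1}-(n+1)>L_n-n\ge n$. A naive attempt to reduce to a weak $\PSPACE$ test via the subsequence argument of Proposition~\ref{prop2} fails here because iterating the polynomial length bound $L_n$ produces super-polynomial growth, violating the $\PTIME$ constraint on the length function. Instead I would work directly with the non-weak test via a one-step index shift: set $\tilde L_n=L_{n+1}$ and target the compression $C_M(X\res\tilde L_n)\le\tilde L_n-n$, which provides one bit of slack beyond the trivial $\tilde L_n-(n+1)$. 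At each stage $n$ of the construction I would allocate $k_\nu=\lceil|H_\nu|/2^{\tilde L_{n+1}-\tilde L_n-1}\rceil$ consecutive lex strings of length $\tilde L_n-n$ to each $\nu\in G_{n+1}$, where $H_\nu=\{\rho:\nu\rho\in G_{n+2}\}$. A direct counting using $|G_{n+2}|\le 2^{\tilde L_{n+1}-(n+2)}$ and $|G_{n+1}|\le 2^{\tilde L_n-(n+1)}$ gives $\sum_\nu k_\nu \le 2^{\tilde L_n-n-1}+2^{\tilde L_n-n-1}=2^{\tilde L_n-n}$, so the allocation fits exactly into the available strings. Extensions at stage $n+1$ are distributed in lex order over the $k_\nu\cdot 2^{\tilde L_{n+1}-\tilde L_n-1}$ slots, with $0$-padding for any unused slots; the resulting $M$ is extension-preserving and covers $X\res\tilde L_n$ because $X\in\bigcap_m U_m$ guarantees $|H_{X\res\tilde L_n}|\ge 1$. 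The $\FPSPACE$ verification mirrors Theorem~\ref{bps}: computing $M(\sigma)$ requires at most $|\sigma|$ stages, each consisting of $\FPSPACE$ evaluations of the test and polynomially-many lex enumerations, and the order function $h$ with $|M(\sigma)|\ge h(|\sigma|)$ is $\PTIME$ because $f$ is. The delicate point is the one-step shift, which exactly absorbs the rounding error from the ceiling — without it the bound on $\sum k_\nu$ would exceed the available $2^{\tilde L_n-n}$ strings by a factor of two, and this is the main obstacle distinguishing the $\BPS$ case from the primitive recursive case treated in Theorem~\ref{bps}.
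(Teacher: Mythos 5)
Your directions (2)$\Rightarrow$(3) and (1)$\Rightarrow$(2) are correct and essentially identical to the paper's, and you have correctly diagnosed the real obstacle in (3)$\Rightarrow$(1): iterating the length bound as in Proposition~\ref{prop2} leaves $\PTIME$, a failure the paper itself flags just before Theorem~\ref{thm:martingaleEquivBPS}. But your replacement construction has a genuine gap. The allocation $k_\nu=\lceil|H_\nu|/2^{\tilde L_{n+1}-\tilde L_n-1}\rceil$ looks only one level ahead, and your bound $\sum_\nu k_\nu\le 2^{\tilde L_n-n}$ is a \emph{global} count at a single level; the constraint that actually binds is local and hierarchical: since $M$ must be extension-preserving, the $k_{\nu\rho}$ codewords for $\nu\rho\in G_{n+2}$ must extend the codewords already assigned to the prefix $\nu\in G_{n+1}$, which requires $\sum_{\rho\in H_\nu}k_{\nu\rho}\le k_\nu\cdot 2^{\tilde L_{n+1}-\tilde L_n-1}$ for \emph{each}~$\nu$. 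This can fail: take $\nu$ with $[\nu]\subseteq U_{n+2}$, so $|H_\nu|=2^{\tilde L_{n+1}-\tilde L_n}$ and $k_\nu=2$, whence every $\rho\in H_\nu$ receives exactly one slot under your lex distribution; if additionally $[\nu\rho_0]\subseteq U_{n+3}$ for some $\rho_0$, then $k_{\nu\rho_0}=2>1$. Nothing excludes this configuration, because without the weak-test property the local density of $U_{m+1}$ inside a cell of $U_m$ may equal~$1$ at many consecutive levels while the global bounds $\mu(U_m)\le 2^{-m}$ still hold (one needs only $\tilde L_n\ge n+2$, etc.). Along such a branch the deficit compounds by a factor of~$2$ per level, whereas your one-step index shift supplies a single factor of~$2$ in total: the ``one bit of slack'' is consumed at the first locally dense level, not at each one.

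The paper resolves this differently. Its proof of Theorem~\ref{thm:firstequivsBPS} keeps the Theorem~\ref{bps} construction verbatim, weak-test hypothesis included, and only verifies the new content, namely that the recursion (of depth~$c$, enumerating $G_c$ and $H_\nu$) runs in polynomial space; the missing reduction from $\PSPACE$ tests to weak $\PSPACE$ tests is then supplied not combinatorially but by Theorem~\ref{thm:martingaleEquivBPS}, whose (3)$\Rightarrow$(1) direction builds an $\FPSPACE$ martingale directly from a plain (non-weak) test --- using only the levels $U_{3i-1}$ precisely to manufacture per-level slack --- and whose (2)$\Rightarrow$(3) direction converts that martingale back into a weak test via Kraft's inequality. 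To salvage a direct process-machine construction you would need codeword allocations proportional to the conditional measures $\mu(U_m\cap[\nu])$ aggregated over all future levels, which is in effect re-deriving the martingale detour; as written, your one-step ceiling allocation does not prove (3)$\Rightarrow$(1).
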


\begin{proof}
The proof is almost identical to the proof of
Theorem~\ref{bps}, but noting that all the
constructions preserve the property of
being computable in $\FPSPACE$, instead
of being primitive recursive.
As before, (2)\,$\Rightarrow$\,(3) is immediate.
The proof of (1)\,$\Rightarrow$\,(2) is identical
to the proof of the same case for Theorem~\ref{bps}.
The only difference is that now $G_c$ must be
the range of a parameterized $\FPSPACE$ function~$\widehat G$ with
$G_c = \{\sigma: \sigma = \widehat G(1^c,\tau),\ \tau\in\{0,1\}^{|f(1^n|)} \}$.
Referring back to the definition of~$G_c$ in the proof
Theorem~\ref{bps} this is immediately seen to hold, as $M$~is
in $\FPSPACE$.

The proof of (3)\,$\Rightarrow$\,(1) is also identical to the
corresponding proof of Theorem~\ref{bps}; however, it is required
to show that the function~$M$ is in $\FPSPACE$.  In fact, the construction
given in the proof Theorem~\ref{bps} gives a polynomial space
algorithm for~$M$.
Recall that the values of $M(\sigma)$
for $f(c)-c < |\sigma| \le f(c+1)-c-1$ were computed by
expressing $\sigma$ in the form $\sigma = M(\nu)\fr \rho$,
if possible, where
$f(c-1)-c+1 < |\nu| \le f(c)-c$, and then enumerating $H_\nu$
in lexicographic order.  Since $G_c$ is given as the parameterized
domain of the $\FPSPACE$ function~$G$, it is certainly possible
to enumerate $G_c$ and hence $H_\nu$ using only polynomials space.
It is also necessary to find the value~$\nu$, if any,
such that $M(\nu)\sqsubset \sigma$.  This is done by calculating
$M(\nu)$ for all $\nu\in \{0,1\}^{|f(c)|-c}$.

In other words, $M(\sigma)$ is computed in polynomial space
by a recursive procedure that needs to compute
values of $M(\nu)$ with $f(c-1)-c+1 < |\nu| \le f(c)-c$.
The depth of the recursion is equal to~$c$; hence, the overall
computation of~$M(\sigma)$ can be carried out in polynomial space.
\end{proof}

Our next goal is to show that
$X$ is martingale $\BPS$ random if and only if
$X$ is weakly \ml $\BPS$ random.
The following lemma about martingales is a
version of Kraft's Inequality and will be needed to help
us prove this result.  Two strings $\sigma,\tau\in\{0,1\}^*$
are said to be {\em incompatible} if they are distinct
and neither one is a prefix
of the other.

\begin{lemma} [Kraft's Inequality] \label{lem1}
For any martingale~$d$, any $\sigma \in \{0,1\}^*$ and any finite set~$H$
of pairwise incompatible extensions of~$\sigma$,
\[
\sum_{\tau \in H} d(\tau) \cdot 2^{-|\tau|}
   ~\leq~ d(\sigma) \cdot 2^{-|\sigma|}.
\]
\end{lemma}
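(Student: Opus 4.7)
The plan is to leverage the martingale equation $d(\sigma) = \tfrac12(d(\sigma\fr 0) + d(\sigma\fr 1))$, rewritten as the \emph{conservation law}
\[
d(\sigma) \cdot 2^{-|\sigma|} ~=~ d(\sigma\fr 0)\cdot 2^{-|\sigma|-1} + d(\sigma\fr 1)\cdot 2^{-|\sigma|-1}.
\]
This says that the quantity $\Phi(\sigma) := d(\sigma)\cdot 2^{-|\sigma|}$ splits additively between the two immediate children of $\sigma$. Iterating, one obtains by an easy induction on $n$ that, for every $\sigma$ and every $n \geq 0$,
\[
\Phi(\sigma) ~=~ \sum_{\rho \in \{0,1\}^n} \Phi(\sigma\fr \rho).
\]
This is the only nontrivial ingredient.

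Given this, fix $\sigma$ and a finite set $H$ of pairwise incompatible extensions of $\sigma$, and let $N = \max\{|\tau| : \tau \in H\}$. First I would apply the iterated identity to each $\tau \in H$ with $n = N - |\tau|$, obtaining
\[
\Phi(\tau) ~=~ \sum_{\rho \in \{0,1\}^{N-|\tau|}} \Phi(\tau\fr \rho).
\]
Summing over $\tau \in H$ yields
\[
\sum_{\tau \in H} \Phi(\tau) ~=~ \sum_{\tau' \in T} \Phi(\tau'),
\]
where $T = \{\tau\fr \rho : \tau \in H,\ \rho \in \{0,1\}^{N-|\tau|}\}$ is a set of strings of length exactly $N$, all extending $\sigma$.

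Next I would use the incompatibility hypothesis: since any two distinct $\tau_1,\tau_2 \in H$ are incompatible, no extension of $\tau_1$ can equal any extension of $\tau_2$. Hence the strings in $T$ are pairwise distinct, so $T$ is a subset of $\{\sigma\fr \mu : \mu \in \{0,1\}^{N-|\sigma|}\}$. Using nonnegativity of $d$ (and hence of $\Phi$), one concludes
\[
\sum_{\tau' \in T} \Phi(\tau') ~\leq~ \sum_{\mu \in \{0,1\}^{N-|\sigma|}} \Phi(\sigma\fr \mu) ~=~ \Phi(\sigma),
\]
where the final equality is again the iterated conservation law applied to $\sigma$ with $n = N - |\sigma|$. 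Substituting the definition of $\Phi$ gives the desired inequality.

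There is no substantial obstacle: the only subtlety is being careful that the two uses of the iterated identity match (one pushes each $\tau$ down to level $N$, the other expands $\sigma$ to all its level-$N$ descendants), and that incompatibility is exactly what guarantees the index sets on the right can be identified with a subset of $\{0,1\}^{N-|\sigma|}$ rather than a multiset.
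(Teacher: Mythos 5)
Your proof is correct and is essentially the same argument as the paper's: both expand every $\tau \in H$ to the maximal common level $N$ via the iterated martingale equality, use pairwise incompatibility to see that the level-$N$ descendant sets are disjoint subsets of the level-$N$ extensions of~$\sigma$, and conclude by nonnegativity of~$d$. You merely make explicit (via the quantity $\Phi$ and the disjointness/nonnegativity steps) what the paper leaves implicit.
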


\begin{proof} Let $n = \max\{|\tau|: \tau \in H\}$ and
note that by the martingale condition,
\[
\sum \{d(\rho): |\rho| = n\ \&\ \sigma \sqsubset \rho\}
   ~=~ d(\sigma) \cdot 2^{n - |\sigma|}.
\]
For each $\tau \in H$, let $G(\tau) = \{\rho \in \{0,1\}^n: \tau
\sqsubset \rho\}$. Then by the martingale condition as above
\[
\sum_{\rho \in G(\tau)} d(\rho) ~=~ d(\tau) \cdot 2^{n - |\tau|}.
\]
Thus
\begin{eqnarray*}
\sum_{\tau \in H} d(\tau) \cdot 2^{-|\tau|}
      &=& \sum_{\tau \in H} \sum_{\rho \in G(\tau)} d(\rho) \cdot 2^{-n} \\
      &\le& \sum\{ d(\rho) \cdot 2^{-n}: |\rho| = n\ \&\ \sigma \sqsubset \rho\}
         ~=~ d(\sigma) \cdot 2^{-|\sigma|}.
\end{eqnarray*}
\end{proof}

We next prove the analogue of Theorem~\ref{bps2} that
martingale $\BPS$ randomness is equivalent to the other
forms of $\BPS$ randomness.  The proof
of Theorem~\ref{bps2} needs considerable reworking
however.  This is primarily
because the proof of Theorem~\ref{bps2} depended on the
equivalence of weak \ml $\BPS$
randomness and \ml $\BPS$ randomness.  Unfortunately,
the proof of Proposition~\ref{prop2} cannot
be readily modified to apply to $\BPS$ randomness: the difficulty
is that that proof defined a function~$h$
so that $h(n+1) = \ell(h(n))+1$,
but $h$ may not be in $\PTIME$ even if $\ell$~is.
Nonetheless, it does follow from
the next theorem that any $\PSPACE$ test can be
converted into a weak $\PSPACE$ test.

\begin{theorem} \label{thm:martingaleEquivBPS}
The following are equivalent for $X \in \{0,1\}^\omega$.
\begin{enumerate}
\item $X$ is \ml $\BPS$ random.
\item $X$ is weakly \ml $\BPS$ random.
\item $X$ is martingale $\BPS$ random.
\end{enumerate}
\end{theorem}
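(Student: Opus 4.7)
I~plan to close the cycle $(1)\Rightarrow(2)\Rightarrow(3)\Rightarrow(1)$. The first implication is immediate, since every weak $\PSPACE$ test is by definition a $\PSPACE$ test, so $X$'s passing all $\PSPACE$ tests entails its passing all weak ones.

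For $(2)\Rightarrow(3)$, I~argue the contrapositive, following the easy direction of Theorem~\ref{bps2}. Given an $\FPSPACE$ martingale~$d$ and a $\PTIME$ function~$f$ with $d(X\res|f(1^n)|)\ge 2^n$, Proposition~\ref{prop:martingaleEquivsBPS} lets me assume $d(X\res m)\ge 2^n$ for every $m\ge|f(1^n)|$. The candidate weak test uses doubled thresholds,
\[
G_n ~=~ \bigl\{\tau\in\{0,1\}^{|f(1^{2n})|} : 2^{2n}\le d(\tau)<2^{2n+1}\bigr\}.
\]
Kraft's Inequality (Lemma~\ref{lem1}) applied at~$\emptyset$ yields $\mu([G_n])\le 2^{-2n}\le 2^{-n}$, and applied at any $\sigma\in G_n$ the stratum's upper bound $d(\sigma)<2^{2n+1}$ gives $\mu([G_{n+1}]\cap[\sigma])\le d(\sigma)\mu([\sigma])/2^{2n+2}<\frac12\mu([\sigma])$, which is the weakness condition. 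The remaining issue is that $X\res|f(1^{2n})|$ need not lie in the specific stratum~$G_n$ (only in some stratum at or above it); I~will resolve this by merging the stratified tests into a single weak test, using that the stratum index is bounded by a polynomial in~$n$.

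For $(3)\Rightarrow(1)$ I~again argue the contrapositive. The direct stage-by-stage construction from Theorem~\ref{bps2}'s $(2)\Rightarrow(1)$ relies on Proposition~\ref{prop2}, which does not carry over to~$\BPS$. I~therefore route through Kolmogorov compressibility: by Theorem~\ref{thm:firstequivsBPS}, $X$~failing some $\PSPACE$ test implies that $X$~is compressed by a quick process $\BPS$ machine~$M$ with $\PTIME$ order function~$h$, witnessed by a $\PTIME$ function~$f'$ such that $C_M(X\res|f'(1^c)|)\le|f'(1^c)|-c$. Because $M$~is total and extension preserving with $|M(\tau)|\to\infty$, it induces a total map $M^*:\{0,1\}^\omega\to\{0,1\}^\omega$ via $M^*(Y)=\bigcup_k M(Y\res k)$, and I~take the Levin-style measure martingale
\[
  d(\sigma) ~=~ 2^{|\sigma|}\cdot\lambda\bigl(\{Y\in\{0,1\}^\omega : \sigma\sqsubseteq M^*(Y)\}\bigr).
\]
A direct calculation confirms $d(\emptyset)=1$, the martingale identity, and $d(X\res|f'(1^c)|)\ge 2^c$ (using the short preimage~$\tau_c$ of each prefix).

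The principal obstacle will be verifying that $d\in\FPSPACE$. Writing $d(\sigma)=\sum_{\tau\in T_\sigma}2^{|\sigma|-|\tau|}$ for the antichain $T_\sigma$ of minimal~$\tau$ with $M(\tau)\sqsupseteq\sigma$, I~enumerate $T_\sigma$ by a depth-first traversal of the input tree that prunes at any $\tau$ with $|M(\tau)|\ge|\sigma|$. The DFS~depth is bounded by the least~$k$ with $h(k)\ge|\sigma|$, so I~will first reduce, without loss of generality, to a quick process $\BPS$ machine whose order function has a polynomial inverse; this requires a careful transformation of~$M$ that preserves both the compression rate and extension-preservation. Once the depth is polynomially bounded, each contribution $2^{|\sigma|-|\tau|}$ is a single dyadic bit and the cumulative sum has polynomial bit-length, establishing $d\in\FPSPACE$.
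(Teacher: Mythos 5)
Your cycle $(1)\Rightarrow(2)\Rightarrow(3)\Rightarrow(1)$ is the right skeleton and $(1)\Rightarrow(2)$ is indeed immediate, but your $(2)\Rightarrow(3)$ has a genuine gap that the proposed repair does not close. As you note, $X\res|f(1^{2n})|$ need only satisfy $d(X\res|f(1^{2n})|)\ge 2^{2n}$ and may land in an arbitrarily high stratum; the trouble is that ``merging'' strata destroys exactly the upper bound $d(\sigma)<2^{2n+1}$ on which your Kraft computation of weakness rests. Concretely, if the merged $V_n$ contains a $\sigma$ with, say, $d(\sigma)=2^{2n+100}$, then $d$ may stay $\ge 2^{2n+2}$ on \emph{every} extension of $\sigma$ of the next length, so $\mu(V_{n+1}\cap[\sigma])=\mu([\sigma])$ and the weakness condition fails; no polynomial bound on the stratum index helps, since the stratum of $X$'s prefix varies with $n$. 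The paper's device repairs this cleanly: take $G_n$ to consist of the \emph{first-crossing} strings, i.e.\ $\sigma$ with $|\sigma|\le|f(1^n)|$, $d(\sigma)\ge 2^n$, and $d(\sigma\res i)<2^n$ for all $i<|\sigma|$. Minimality forces $d(\sigma)<2^{n+1}$ by the martingale equality, some prefix of $X$ automatically lies in $G_n$, and then $V_n=U_{2n}$ is weak by essentially your own Kraft estimates (and Proposition~\ref{prop:martingaleEquivsBPS} is no longer needed for this direction).

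Your $(3)\Rightarrow(1)$ is genuinely different from the paper's and is mostly sound --- the Levin-style pushforward martingale is well defined, satisfies the martingale identity since $M^*$ is total on $\{0,1\}^\omega$, and succeeds with $d(X\res|f'(1^c)|)\ge 2^c$ via the short preimage --- but the ``WLOG the order function has a polynomial inverse'' step is the crux and is left entirely unproved, and for an \emph{arbitrary} quick process $\BPS$ machine it is doubtful: padding outputs breaks extension-preservation, shrinking inputs slows the order function further, and without a polynomially invertible $h$ the exact dyadic value $d(\sigma)=\sum_{\tau\in T_\sigma}2^{|\sigma|-|\tau|}$ can require about $k^*=\min\{k:h(k)\ge|\sigma|\}$ bits, which may be exponential in $|\sigma|$ --- so $d$ fails to be $\FPSPACE$ for output-length reasons alone, not merely because of DFS depth. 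The repair is to stop treating Theorem~\ref{thm:firstequivsBPS} as a black box: the machine actually constructed in its proof (inherited from Theorem~\ref{bps}) satisfies $|M(\sigma)|=f(c)$ for $f(c)-c\le|\sigma|<f(c+1)-c-1$ with $f$ strictly length-increasing and $\PTIME$, whence $|\sigma|<f(c+1)\le\mathrm{poly}(c)\le\mathrm{poly}(|M(\sigma)|)$, i.e.\ its order function has a polynomial inverse; with that specific $M$, your DFS enumeration of the antichain $T_\sigma$ runs in polynomial space and $d\in\FPSPACE$. With this substitution your route works and is a real alternative to the paper's $(3)\Rightarrow(1)$, which instead builds the mixture $d=\sum_i 2^{-i}d_i$ directly from the test, arranging $d_i(\sigma)=1$ for $|\sigma|<i$ so that the infinite tail sums exactly to $2^{-|\sigma|}$, thereby avoiding process machines altogether.
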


\begin{proof} Clearly (1) implies (2).

{(2)\,$\Rightarrow$\,(3):} Suppose that (3) fails
 and $d$ is an $\FPSPACE$ martingale
which succeeds
on~$X$ with a $\PTIME$ function~$f$
such that, for all~$n$,
$d(X \res |f(1^n)|) \geq 2^n$.
%
Unlike the (2)\,$\Rightarrow$\,(1) case of Theorem~\ref{bps2},
we must construct a weak $\PSPACE$
test that $X$ fails instead of just a $\PSPACE$ test.
Define $G:\{1\}^* \times \{0,1\}^* \rightarrow \{0,1\}$ by
letting $G(1^n,\sg) =1$ iff
\[
|\sigma| \leq |f(1^n)|\ \&\ d(\sigma) \geq 2^n\ \&\ (\forall i<|\sigma|) (d(\sigma \res i) < 2^n).
\]
Note that, by the martingale equality, if $G(1^n,\sigma)=1$,
then $d(\sigma) < 2^{n+1}$.
Clearly $G$ is an $\FPSPACE$ function.  Thus
\[
G_n ~=~ \{\sigma: |\sigma| \leq |f(1^n)|\ \& \ d(\sigma) \geq 2^n\ \& \
(\forall i<|\sigma|)
(d(\sigma \res i) < 2^n)\}.
\]
Set
$U_n = [G_n]$; we have
$X \in U_n$ for all~$n$ by the assumption.
By Lemma~\ref{lem1} with $\sigma = \emptyset$,
$\sum_{\tau \in G_n} d(\tau) \cdot 2^{-|\tau|} \leq 1$. Since for
all $\tau \in G_n$, $d(\tau) \geq 2^n$, it follows that
 \[
\mu([U_n]) ~=~ \sum_{\tau \in G_n} 2^{-|\tau|} ~\leq~ 2^{-n}.
\]

Now let $V_n = U_{2n}$ for all $n \geq 0$  so that
$(V_n)_{n\geq 0}$ is a $\PSPACE$ test
that $X$ fails. We claim that
$(V_n)_{n \geq 0}$ is a weak
test.  We have $\mu(V_n) = \mu(U_{2n}) \leq 2^{-2n}$.
For $\sigma \in V_{n}$,
let $H(\sigma) = \{\tau: \sigma \sqsubset \tau\ \&\ \tau \in V_{n+1}\}$.
By Lemma~\ref{lem1},
\[
\sum_{\tau \in H(\sigma)} d(\tau) \cdot 2^{-|\tau|} ~\leq~ d(\sigma) \cdot 2^{-|\sigma|}.
\]
Since $d(\sigma) < 2^{2n+1}$ and for each $\tau \in H(\sigma)$,
$d(\tau) \geq 2^{2n+2}$, we obtain
\[
\sum_{\tau \in H(\sigma)} 2^{2n+2} \cdot 2^{-|\tau|}
   ~\leq~ 2^{2n+1} \cdot 2^{-|\sigma|}.
\]
Dividing by $2^{2n+2}$, we obtain
\[
\mu([H(\sigma)]) ~=~ \sum_{\tau \in H(\sigma)} 2^{-|\tau|}
   ~\leq~ \frac12 \cdot 2^{-|\sigma|}
   ~=~ \frac12 \mu([\sigma]).
\]
Thus the sequence $V_{n}$ is a weak $\PSPACE$ test as desired.
Thus if $X$ is not martingale $\BPS$ random, then $X$ is
not weakly \ml  $\BPS$ random.

{(3)\,$\Rightarrow$\,(1):}
This is the most difficult case to prove, since we have not (yet)
established the analogue of Proposition~\ref{prop2}
for $\BPS$ randomness.
Let $\{U_n\}_{n\ge 0}$ be a Martin-L\"of $\BPS$ test which $X$ fails,
with $U_n= [G_n]$ where $G_n$'s are as usual
given by an $\FPSPACE$ function~$G$.
In particular, each $U_n$ has measure $\mu(U_n)\le 2^{-n}$,
and w.l.o.g.\ every $\sigma\in G_n$ has length equal to $\ell(n)$ for
some $\PTIME$ function~$\ell$ such that $\ell(n)\ge i$.

We will define martingales $d_i$ for $i=1,2,\ldots$, and
then define the overall martingale which succeeds against~$X$
as $d = \sum_i 2^{-i} d_{i}$.
The martingale $d_i$ is defined using the set $U_{3i-1}$.  That is to say,
we use $U_j$ only for $j = 3i-1$.

\begin{definition}
First consider $i=1$.  The martingale $d_1$ will be defined from~$G_2$
so as to satisfy the following conditions:
\begin{enumerate}[label=\arabic*.]
\item $d_1(\emptyset) = 1$.
\item $d_1(\sigma) = 4$ for each $\sigma \in G_2$.
\item $d_1(\sigma 0) = d_1(\sigma 1) = d_1(\sigma)$ for
all $\sigma$ of length $|\sigma|\ge \ell(2)$.
\end{enumerate}
\end{definition}
The point of condition~3.\ is that the value of $d_1(\sigma)$
``settles down'' to a constant value which depends only
on the first $\ell(2)$ symbols of~$\sigma$.   Property~2.\ can
be forced to hold since $\mu(U_2) \le 1/4$, and hence the
martingale properties allow the value of $d_1(\emptyset)$ to
be quadrupled for $\sigma\in G_2$.

It is straightforward to define an $\FPSPACE$ function~$d_1$
satisfying these three conditions 1-3.  One way to do this
is to enumerate the members of~$G_2$: since $G_2\subset \{0,1\}^{\ell(2)}$
and $\mu([G_2]) \le 1/4$, we have $|G_2| \le 2^{\ell(2)-2}$.  Let $H_2$
be the lexicographically first $2^{\ell(2)-2} - |G_2|$ many strings of
length $\ell(2)$ which are not in~$G_2$.  Then define $d_1$ as
follows: For $\sigma = \ell(2)$, $d_1(\sigma)$
is set equal to~$4$ if $\sigma\in G_2 \cup H_2$,
and set equal to zero otherwise.
For longer strings~$\sigma$, $d_1(\sigma) = d_1(\sigma \res \ell(2))$.
For shorter strings~$\sigma$, $d_1(\sigma)$~is computed using
Kraft's inequality and counting the number of $\tau\sqsupset \sigma$
with $\tau\in G_2\cup H_2$.
\begin{definition}
Now consider $i>1$.  The martingale~$d_i$ is defined
so as to satisfy
\begin{enumerate}[label=\arabic*.i]
\item $d_i(\sigma) = 1$ for every $\sigma$
of length $|\sigma| < i$.  In particular, $d_i(\sigma)$ is
constant for small~$\sigma$'s.
\item $d_i(\sigma) = 2^{2i}$ for every $\sigma\in G_{3i-1}$.
\item $d_i(\sigma 0) = d_i(\sigma 1) = d_i(\sigma)$ for
all $\sigma$ of length $|\sigma|\ge \ell(3i-1)$.
\end{enumerate}
\end{definition}
The idea for making property~2.i\ hold with $i>1$ is similar to
the $i=1$ construction for~$d_1$.  Property~2.i\ is
possible since
\smallskip

\begin{enumerate}[label=\({\alph*}]
\item there are $2^{i-1}$ strings~$\sigma$
of length $i-1$,
and for these~$\sigma$'s, $d_i(\sigma) = 1$
and
\item we have $\mu(U_{3i-1}) \le 2^{-(3i-1)}$ so that
$|G_{3i-1}| \le 2^{\ell(3i-1)-(3i-1)}$.
\end{enumerate}

\noindent
For any~$\sigma$ of length~$i-1$,
\[
\mu(U_{3i-1} \cap [\sigma])/\mu([\sigma])
   ~\le~ \mu(U_{3i-1})/\mu([\sigma])
   ~\le~ 2^{-(3i-1)}/2^{-(i-1)} ~=~ 2^{-2i}.
\]
Thus $d_i$'s value of~$1$ for $\sigma$
may be multiplied by the factor $2^{2i}$
for each $\tau\in\penalty10000 U_{3i}$ which extends $\sigma$.
That is, define $H_{3i-1}$ to contain,
for each $\sigma\in\penalty10000 \{0,1\}^{i-1}$, the lexicographically
first $k(\sigma)$ many extensions of $\sigma$  in
\hbox{$([\sigma]\cap [\{0,1\}^{\ell(3i-1)}])\setminus G_{3i-1}$} where
$k(\sigma)$ equals $2^{2i} - |G_{3i-1}\cap [\sigma]|$.  Then define
$d_i(\sigma)$ as follows: For $\sigma\in \{0,1\}^{\ell(3i-1)}$,
set $d_i(\sigma)$ equal to $2^{2i}$ if $\sigma\in G_{3i-1}\cup H_{3i-1}$,
and equal to zero otherwise.  For shorter
strings~$\sigma$, set $d_i$ using Kraft's inequality; by construction,
this satisfies~1.
For longer strings~$\sigma$, $d_i(\sigma) = d_i(\sigma\res \ell(3i-1))$.
By construction, each $d_i(\sigma)$ can be computed in $\FPSPACE$,
uniformly in $i$ and~$\sigma$, by enumerating
the strings in~$G_{3i-1}$.

We finally claim that $d(\sigma)$ can
be computed in polynomial space.
For this,
\[
d(\sigma) ~=~ \sum_{i\ge 0} 2^{-i} d_i(\sigma) ~=~
  \sum_{0\le i\le |\sigma|} 2^{-i} d_i(\sigma) + \sum_{i > |\sigma|} 2^{-i} d_i(\sigma).
\]
The finite summation $\sum_{0\le i\le |\sigma|} d_i(\sigma)$ can be computed
in polynomial space by just evaluating each $d_i(\sigma)$.
The other, infinite, summation $\sum_{i>|\sigma|} d_i(\sigma)$ is
equal to~$2^{-|\sigma|}$ since $d_i(\sigma) = 1$
for $|\sigma|< i$.  Thus $d(\sigma)$ can be computed in
polynomial space.

Since $X{\upharpoonright}\ell(3i-1) \in U_{3i-1}$,
\[
d(X{\upharpoonright} \ell(3i-1)) ~\ge~ 2^{-i} d_i(X{\upharpoonright}\ell(3i-1))
  ~=~ 2^{-i} \cdot 2^{2i}
  ~=~ 2^i,
\]
and $d$ is an $\FPSPACE$ martingale which succeeds against~$X$.
This concludes the proof of Theorem~\ref{thm:martingaleEquivBPS}.
\end{proof}

We have the following analogue of Proposition~\ref{prop4}.
This is a version of bi-immunity for $\BPS$ random sets.
The notion of immunity is well-studied in the resource-bounded randomness
community. For example, it is known that any ptime random set is ptime bi-immune;
see Ambos-Spies and Mayordomo \cite{AM97}.

\begin{proposition} \label{prop:increasingRangeBPS}
Let $X$ be $\BPS$ random.  Suppose $h:\{1\}^*\rightarrow\{0,1\}^*$
is an
$\FPSPACE$ function
such that $f(n) = |h(1^n)|$ is an increasing function.
Then $X$~does not contain the range of~$f$.
\end{proposition}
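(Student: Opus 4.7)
The plan is to adapt the proof of Proposition~\ref{prop:increasingRangeBPS}'s analogue, Proposition~\ref{prop4}, by constructing a $\PSPACE$ test that $X$ fails whenever $X$ contains the range of $f$. Suppose for the contrapositive that $\{f(i) : i \in \N\} \subseteq X$. I would define the $\PSPACE$ test $(U_n)_{n \geq 1}$ by setting
\[
G_n ~=~ \{\sigma \in \{0,1\}^{f(n-1)+1} : (\forall i < n)\,\sigma(f(i)) = 1\}
\]
and $U_n = [G_n]$. Since exactly $n$ bit positions are fixed and the remaining $f(n-1)+1-n$ are free, $\mu(U_n) = 2^{-n}$. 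If $X$ contains the range of $f$, then $X \res (f(n-1)+1) \in G_n$ for every $n \geq 1$, so $X \in \bigcap_n U_n$.

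Next I would verify that $(U_n)_{n \geq 1}$ is genuinely a $\PSPACE$ test in the sense of Section~\ref{sec:BPSrandomness}. For the length bound, take $\widetilde{\ell}(1^n) = 1^{f(n-1)+1}$; since $f(n-1) = |h(1^{n-1})|$ and $h$ is $\FPSPACE$, we have $f(n-1) \leq p(n-1)$ for some polynomial $p$, so $|\widetilde{\ell}(1^n)|$ is polynomial in $n$ and $\widetilde{\ell}$ is itself computable in $\FPSPACE$ (which, by the remarks following the definition of $\PSPACE$ test, may replace the $\PTIME$ length bound). The function $\widetilde{\ell}$ is strictly length increasing because $f$ is increasing. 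For the membership function, $G(1^n, \sigma) = 1$ iff $|\sigma| = f(n-1)+1$ and $\sigma(f(i)) = 1$ for each $i < n$; to evaluate this we compute $f(i) = |h(1^i)|$ for each $i < n$, which takes polynomial space since $h \in \FPSPACE$ and the output of $h(1^i)$ has polynomial length. Hence $G$ is an $\FPSPACE$ function.

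Thus $(U_n)_{n \geq 1}$ is a $\PSPACE$ test with $X \in \bigcap_n U_n$, contradicting that $X$ is \ml $\BPS$ random, which is equivalent to $\BPS$ random by Theorem~\ref{thm:martingaleEquivBPS}. The only real point to watch is confirming that computing $f(i)$ stays within polynomial space (which follows immediately from $h$ being an $\FPSPACE$ function with polynomially bounded output length), so there is no genuine obstacle beyond bookkeeping — the construction is essentially the $\BP$ argument with ``primitive recursive'' replaced by ``$\FPSPACE$'' and ``polynomial'' growth bounds.
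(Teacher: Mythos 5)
Your proof is correct and follows exactly the route the paper intends: the paper's own proof of Proposition~\ref{prop:increasingRangeBPS} simply says to argue as in Proposition~\ref{prop4}, which is precisely the test $G_n$ fixing the bits at positions $f(0),\dots,f(n-1)$ that you construct (your use of length $f(n-1)+1$ instead of $f(n)$ is an inessential variant). Your added verification that the length-bound function $1^{f(n-1)+1}$ and the membership predicate are $\FPSPACE$, invoking the paper's remark that $\FPSPACE$ length bounds may replace $\PTIME$ ones, is exactly the bookkeeping the paper leaves to the reader.
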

\begin{proof}
Argue as in the proof of Proposition~\ref{prop4}.
\end{proof}

Next, analogously to Theorem \ref{thm2},
we construct a $\BPS$ random real which is not too complex.
A {\em $\DSPACE(t(n))$ real} is an infinite sequence $X\in\{0,1\}^\omega$
which is computable by a $\DSPACE(t(n))$ function (which is 0/1 valued, since
$X$ is 0/1 valued).

\begin{theorem} \label{thm2sp} Let $\epsilon>0$.  Then there is a
$\DSPACE(2^{\epsilon (\log n)^2})$  real which is
\ml $\BPS$ random.
\end{theorem}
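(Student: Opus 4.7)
I follow the pattern of Theorem~\ref{thm2}, constructing $X$ stage-by-stage while diagonalizing against an enumeration of $\FPSPACE$-$\PTIME$ pairs, and padding the stage lengths $L_k = |\tau_k|$ so aggressively that at stage~$k$ the polynomial-space exponent and the stage index are both at most $\epsilon\log L_k$. Fix a standard enumeration $(T_e)_{e\ge 0}$ of Turing machines and an enumeration $(e_k,c_k)_{k\ge 0}$ of $\N\times\N$ in which every pair appears infinitely often; take $M_k$ to be $T_{e_k}$ run under a space clock of $n^{c_k}+c_k$, and take $f_k(1^n) = 1^{n^{c_k}+c_k}$. Every $\FPSPACE$-$\PTIME$ witness to non-$\BPS$-randomness is then matched by $(M_k,f_k)$ at arbitrarily large~$k$.

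The construction starts from $\tau_0=\emptyset$. At stage~$k+1$, set $c=L_k$ and $L'_{k+1}=|f_k(1^c)|$. The pigeonhole count of Theorem~\ref{thm2} gives $2^{L'_{k+1}-c}$ extensions of $\tau_k$ of length $L'_{k+1}$ and only $2^{L'_{k+1}-c}-1$ strings of length $\le L'_{k+1}-c-1$, so I select the lexicographically first extension $\tau_k u$ which is not $M_k(\sigma)$ for any such~$\sigma$. Then pad with zeros to length $L_{k+1} := \max\bigl(L'_{k+1},\,2^{\lceil(k+c_k+2)/\epsilon\rceil}\bigr)$, which forces $L_k\ge 2^{(k+c_k)/\epsilon}$ and hence $k,c_k\le \epsilon\log L_k$. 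Set $X=\bigcup_k\tau_k$. The same diagonalization argument as in Theorem~\ref{thm2} shows that no $(M_k,f_k)$ compresses $X$, so by Theorem~\ref{thm:firstequivsBPS} $X$~is \ml $\BPS$ random.

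To place $X$ in $\DSPACE(2^{\epsilon(\log n)^2})$, I compute $X(n)$ as follows. First locate the stage: the largest $k$ with $L_k\le n$; since $L_k\ge 2^{k/\epsilon}$, this step uses space $O(\log n\cdot\log\log n)$. If $n\ge L'_{k+1}$, output~$0$; otherwise recursively reconstruct stages $0,1,\ldots,k+1$. At stage $j\le k+1$ the work is to enumerate candidates $u\in\{0,1\}^{L'_{j+1}-L_j}$ in lex order and, for each $u$ and each $\sigma$ of length $\le L'_{j+1}-L_j-1$, run $M_j(\sigma)$ (in space $O(L_{j+1}^{c_j})$) and compare its output bit-by-bit against $\tau_j u$, retrieving bits of $\tau_j$ at positions $i<L_j$ by recursive calls. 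With one shared workspace and an explicit recursion stack, the total space is $O\bigl((k+1)\cdot L_{k+1}^{c_k}\bigr) = 2^{O(c_k\log L_{k+1}) + O(\log\log n)}$; since $c_k\log L_{k+1}\le\epsilon(\log L_{k+1})^2\le\epsilon(\log n)^2$, this is at most $2^{\epsilon(\log n)^2}$ after absorbing the $\log\log n$ overhead by shrinking the initial choice of~$\epsilon$ slightly.

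The main obstacle is arranging both bounds simultaneously: the recursion depth must be $O(\log n)$ so that the per-frame bookkeeping is absorbable, and each stage's simulation of $M_j$ must use only a polynomially small exponent relative to $\log L_j$. The single padding rule $L_{k+1}\ge 2^{(k+c_k)/\epsilon}$ handles both: the first summand controls recursion depth and the second controls $M_j$-simulation space. I expect the remaining details --- maintaining the stack in $O(\log n)$ bits per frame and performing the bit-by-bit comparison --- to be routine polynomial-space engineering.
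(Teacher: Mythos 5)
Your overall architecture is the same as the paper's: diagonalize stage-by-stage against a clocked enumeration exactly as in Theorem~\ref{thm2}, and the zero-padding between diagonalization blocks is harmless for randomness, just as in the paper's Theorem~\ref{thm:nolimitBPS} (the paper, by contrast, avoids padding and instead builds the small exponent into the enumeration itself, giving machine~$e$ the space/time budget $\max\{2,n\}^{\max\{2,\log e\}}$). The randomness half of your argument is fine. The genuine gap is in the space analysis: your padding rule constrains the wrong stage and the wrong exponent, so the claimed $\DSPACE(2^{\epsilon(\log n)^2})$ bound does not follow.

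First, since $k$ is the largest index with $L_k\le n$, you have $n<L_{k+1}$, so your chain $c_k\log L_{k+1}\le\epsilon(\log L_{k+1})^2\le\epsilon(\log n)^2$ runs in the wrong direction; relatedly, your rule defines $L_{k+1}$ (not $L_k$) from $c_k$, so the assertion that it ``forces $L_k\ge 2^{(k+c_k)/\epsilon}$'' is off by one stage: at position $n\approx L_k$ only $c_{k-1}$ is controlled, while $c_k$ is whatever the enumeration hands you and can be astronomically larger than $\log L_k$ (the enumeration must contain \emph{all} pairs, so this case is unavoidable). Concretely, take $n=L_k$, the first bit produced at stage $k+1$: computing it requires the full stage search, simulating $M_k$ on inputs of length up to $L'_{k+1}-L_k-1\approx L_k^{c_k}$ under a clock with exponent $c_k$, i.e.\ space about $L_k^{c_k^2}=n^{c_k^2}$, which vastly exceeds $2^{\epsilon(\log n)^2}$ once $c_k^2>\epsilon\log n$. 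Note the relevant exponent is $c_k^2$, not $c_k$: your estimate $O(L_{j+1}^{c_j})$ is correct in terms of $L_{j+1}$ but hides the square when converted to~$n$. The fix is easy and preserves your design: the enumeration is fixed in advance, so pad with \emph{look-ahead}, requiring $L_k\ge 2^{(k+c_k^2)/\epsilon'}$ where $(e_k,c_k)$ is the pair to be attacked at stage $k+1$ and $\epsilon'<\epsilon$ absorbs overhead; then $c_k^2\le\epsilon'\log n$ for every $n\ge L_k$ in the block, and the whole computation (depth $k\le\epsilon'\log n$, plus per-frame storage of the current candidate $u_j$ --- which, incidentally, has length about $L'_{j+1}$, not the $O(\log n)$ bits per frame you claim, though the total still fits) lands inside $2^{\epsilon(\log n)^2}$, matching the bound the paper obtains via its $n^{\log e}$-budgeted enumeration.
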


\begin{proof} We modify the proof of Theorem \ref{thm2} as follows.
Let $(M_e,f_e)$ enumerate all pairs of functions such that
$M_e: \{0,1\}^* \to \{0,1\}^*$ is
computable in space $\max\{2,n\}^{\max\{2,\log e\}}$
and $f_e: \{1\}^* \to \{1\}^*$ is
computable in time $\max\{2,n\}^{\max\{2,\log e\}}$.
Every possible $\BPS$ compression will be found in this list, since
every computable function has infinitely many indices.
We may also assume without loss of generality that
$f_e(i) \geq i$ for all $e$ and $i$.
Let $C_e$ denote $C_{M_e}$.
We will define a real~$X$ such that, for every~$e$,
there is some~$c$ such that
$C_e(X\res f_e(c)) > f_e(c) - c - 1$.
It follows that no $\BPS$ machine can $\BPS$ compress~$X$,
and hence $X$ is $\BPS$ random.

The definition of~$X$ proceeds as in the proof of Theorem~\ref{thm2}. That is,
the definition of~$X$ is in stages, and $X$~is the union of a sequence
$\emptyset = \tau_0 \subseteq \tau_1 \subseteq \cdots$.
Let $n_k = |\tau_k|$.
Initially $\tau_0 = \emptyset$ and thus $n_0 = 0$.

At stage $k+1$, we let $c = n_k$, and $n_{k+1} = f_k(n_k)$.
Then we find an extension
$\tau = \tau_{k+1}$ of $\tau_k$ of length $f_k(c)$ which does not
equal $M_k(\sigma)$ for any $\sigma$ of length $\leq f_k(c) - c - 1$.
Observe that there are $2^{f_k(c) - c}$ different extensions
of~$\tau_k$ of length $f_k(c)$,
but there are only $2^{f_k(c) - c} -1$ strings of length $\leq f_k(c) - c - 1$.
Hence such a string~$\tau$ exists and we may compute it as follows.
Start with the guess that $\tau = 0^{n_{k+1}}$
and successively compute~$M_k(\sigma)$,
and compare it with~$\tau$, for every~$\sigma$ of length $\leq f_k(c) - c -1$.
If we find a match,
then change~$\tau$ to $0^{n_{k+1}-1} 1$ and
continue incrementing~$\tau$ until we find the desired string.

It remains to be seen that $X$ is in fact computable in space $n^{\epsilon \log n}$.
To compute $X \res n$ from~$n$, first compute the sequence
$$n_0 = 0, n_1 = f_0(n_0), \dots, n_{k+1} =\penalty10000 f_k(n_k)$$
until $n_{k+1} \geq n$.
Now $X \res n$ will be an initial segment of~$\tau_{k+1}$,
which will have length $n_{k+1} = f_k(n_k) \leq n^{\epsilon \log n}$,
for sufficiently large~$n$.
This last inequality is because $f_k(m)$ is computable in time $n^{\log k}$
and we have $n_k \leq n$ and also $k \leq n$ since $f_e(i) \geq i$.
Thus $M_k(\sigma)$ can be computed in space $\leq n^{\epsilon \log n}$
for all strings~$\sigma$ of length $\leq n_{k+1} - n_k - 1$.
It follows that the entire computation up through stage $k+1$,
needed to compute $X \res n$, can be done in space $n ^{\epsilon \log n}$.
\end{proof}

We also have the following analogue of Theorem~\ref{thm3}.
\begin{proposition}\label{thm:subsequenceBPS}
Let $X \in \{0,1\}^\omega$ be $\BPS$ random and
let $h:\{1\}^*\rightarrow\{0,1\}^*$ be
computable in $\FPSPACE$.  Suppose
that $g(n) = |h(1^n)|$ is an increasing function.
Then the sequence
$(X(g(0)), X(g(1)), X(g(2)), \ldots )$
is also $\BPS$ random.
\end{proposition}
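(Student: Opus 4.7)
The plan is to mirror the proof of Theorem~\ref{thm3}, but with extra care to verify that all constructions remain in $\FPSPACE$ and that the length parameter of the resulting test is polynomially bounded. Let $Y(n) = X(g(n))$ and suppose, for contradiction, that $Y$ is not $\BPS$ random. By Theorem~\ref{thm:firstequivsBPS} we obtain a $\PSPACE$ test $(G,f)$ such that $Y \in \bigcap_n [G_n]$, and by the earlier uniform-length lemma we may assume every $\sigma \in G_n$ has length exactly $|f(1^n)|$. The goal is to build a $\PSPACE$ test $(G',f')$ that $X$ fails, whose membership at level $n$ depends only on the bits $X(g(0)),\dots,X(g(|f(1^n)|-1))$.

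First I would fix a polynomial $p(n)$ dominating the length $g(|f(1^n)|-1)+1$. This dominating polynomial exists because $f$ is $\PTIME$, so $|f(1^n)| \le q(n)$ for some polynomial $q$, and $g(m)=|h(1^m)|$ is polynomially bounded in $m$ since $h$ is an $\FPSPACE$ function and $\FPSPACE$ functions have polynomial output length. Then I would set $f'(1^n) = 1^{p(n)}$, which is strictly length-increasing (after harmlessly replacing $p$ by $p(n)+n$) and plainly in $\PTIME$.

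Next I would define $G'(1^n,\sigma)$ to output $1$ iff $|\sigma|=p(n)$ and the subsequence $\tau = (\sigma(g(0)),\sigma(g(1)),\dots,\sigma(g(|f(1^n)|-1)))$ satisfies $G(1^n,\tau)=1$. Verifying that $G'$ is in $\FPSPACE$ is routine: compute $|f(1^n)|$ in $\PTIME$, then iteratively compute each $g(i)$ in $\FPSPACE$ (reusing the workspace between iterations), read off the corresponding bit of $\sigma$, assemble $\tau$ on a polynomial-length scratch area, and invoke $G$. The total space is polynomial in $n$. For the measure bound, observe that for each $\tau \in G_n$ the set of $\sigma \in \{0,1\}^{p(n)}$ whose induced subsequence equals $\tau$ has normalized measure exactly $2^{-|f(1^n)|}$, so $\mu([G'_n]) = |G_n|\cdot 2^{-|f(1^n)|} = \mu([G_n]) \le 2^{-n}$. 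Thus $(G',f')$ is a $\PSPACE$ test, and by construction $X \in [G'_n]$ iff $Y \in [G_n]$, so $X$ lies in every $[G'_n]$, contradicting its $\BPS$ randomness.

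The main obstacle I anticipate is purely bookkeeping: one must be careful that the length function $f'$ remains $\PTIME$ even though $g$ is only $\FPSPACE$-computable. This is resolved by the observation above that it suffices to use any polynomial upper bound for $g(|f(1^n)|-1)+1$, in line with the paper's remark that the only property of $f'$ that actually matters is its polynomial growth rate. Once that point is settled, every remaining verification is a direct adaptation of the primitive recursive argument in Theorem~\ref{thm3}.
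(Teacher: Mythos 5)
Your proposal is correct and takes essentially the same route as the paper: pull the $\PSPACE$ test for $Y=(X(g(n)))_n$ back along $g$ to a test that $X$ fails, exactly as in Theorem~\ref{thm3}, and check that the construction stays within the resource bounds. Your care in replacing the merely $\FPSPACE$-computable length $g(|f(1^n)|-1)+1$ by a dominating polynomial so that $f'$ is a strictly length-increasing $\PTIME$ function is precisely the verification the paper compresses into its one-line remark that $g(f(n))$ is bounded by a polynomial $p(n)$.
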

\begin{proof}
The construction from the proof of Theorem~\ref{thm3} still
applies; we only need to verify that $V_n$ is an $\FPSPACE$
test.  To see this, just note that
$g(f(n))$ is computable in
space bounded by $p(n)$ for some polynomial~$p$.
From this, $V_n$ is a $\PSPACE$ test.
\end{proof}

We can also give a prefix-free characterization of
$\BPS$ randomness, completely analogous to the case of
$\BP$ randomness.
\begin{definition} A sequence $X\in\{0,1\}^\omega$ is
\emph{prefix-free $\BPS$ random} if there does not exist
a prefix-free $\FPSPACE$
function~$M$
and a $\PTIME$ function~$f$ such that
$C_M(X \res f(c) \leq f(c) - c$ for all~$c$.
\end{definition}

\begin{proposition} \label{prop:prefixfreeBPS}
A real~$X$ is $\BPS$ random if and only if it is prefix-free $\BPS$ random.
\end{proposition}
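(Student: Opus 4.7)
The easy direction is immediate: every prefix-free $\FPSPACE$ function (with output~$\infty$ denoting divergence, as in Section~2.3) is in particular an $\FPSPACE$ function, so any prefix-free compression witness is automatically a plain compression witness, showing $\BPS$ random implies prefix-free $\BPS$ random. For the contrapositive of the other direction, assume $X$ is not $\BPS$ random. By Theorem~\ref{thm:firstequivsBPS}, there is a $\PSPACE$ test $(G,f)$ with $X \in \bigcap_c [G_c]$. My first step is to sharpen the test so that $\mu([G_c]) \leq 2^{-2c}$, by replacing $G$ and~$f$ with $G'(1^c,\sigma) = G(1^{2c},\sigma)$ and $f'(1^c) = f(1^{2c})$; both remain in $\FPSPACE$ and $\PTIME$ respectively. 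This gives $|G_c| \leq 2^{|f(1^c)| - 2c}$. As in the proof of Theorem~\ref{thm:firstequivsBPS}, I then arrange that $|f(1^{c+1})| - (c+1) > |f(1^c)| - c$ by splitting strings in $G_c$ if necessary, ensuring $|f(1^c)| - c \geq c$ for all~$c$.

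Next I would define a prefix-free function~$M$ in stages, mirroring the construction in Proposition~\ref{p2}. At stage~$c$, enumerate $G_c = \{\tau_{1,c}, \ldots, \tau_{k(c),c}\}$ in lexicographic order and let $\sigma_{1,c}, \ldots, \sigma_{k(c),c}$ be the lexicographically first $k(c)$ strings of length $|f(1^c)| - c$ which do not extend any shorter string $\nu$ already assigned $M(\nu) \neq \infty$; set $M(\sigma_{i,c}) = \tau_{i,c}$, and set $M(\rho) = \infty$ for every other string~$\rho$ encountered at this stage. The counting argument from Proposition~\ref{p2} transfers verbatim: at each previous stage $n < c$ only a fraction $2^{-n}$ of $\{0,1\}^{|f(1^n)|-n}$ received a finite value, so the fraction of strings of length $|f(1^c)|-c$ blocked by previous assignments is bounded by $\sum_{n<c} 2^{-n} < 1$, leaving at least $2 k(c)$ available strings. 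Since all previously assigned~$\nu$ are strictly shorter than $|f(1^c)|-c$, they cannot extend any $\sigma_{i,c}$ either, so $M$ is indeed prefix-free. Because $X \res |f(1^c)| \in G_c$ for every~$c$, we obtain $C_M(X \res |f(1^c)|) \leq |f(1^c)| - c$, witnessing that $X$ is not prefix-free $\BPS$ random.

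The main obstacle, and the only substantive change from Proposition~\ref{p2}, is verifying that this $M$ is genuinely computable in $\FPSPACE$ rather than merely primitive recursive. Given an input~$\sigma$ of length~$m$, I would first scan $c \leq m$ --- justified by $|f(1^c)| - c \geq c$ --- to find the unique $c$, if any, with $m = |f(1^c)| - c$; evaluating the $\PTIME$ function~$f$ costs only polynomial space. If no such $c$ exists, return~$\infty$. Otherwise, replay the stage-$c$ construction to determine whether $\sigma = \sigma_{i,c}$ for some~$i$, and if so output~$\tau_{i,c}$. This replay enumerates $G_c$ using the $\FPSPACE$ algorithm for~$G$, and, for each candidate string of length $|f(1^c)|-c$, checks whether it extends any $\nu$ with $M(\nu) \neq \infty$ by invoking $M$ recursively on inputs of length $|f(1^n)|-n$ for $n < c$. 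The recursion depth is bounded by $c \leq m$, and at each level the work (enumerating $G_n$ and comparing prefixes in lexicographic order) uses only polynomial space; reusing the same workspace across recursive calls keeps the total space polynomial in~$m$. This establishes that $M$ is a prefix-free $\FPSPACE$ function witnessing the failure of prefix-free $\BPS$ randomness, completing the proof.
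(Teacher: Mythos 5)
Your proposal is correct and takes essentially the same approach as the paper: its proof of Proposition~\ref{prop:prefixfreeBPS} simply states that the construction of Proposition~\ref{p2} applies as is, with the only new point being the polynomial-space verification of~$M$ via the recursive-replay argument from the (3)\,$\Rightarrow$\,(1) case of Theorem~\ref{thm:firstequivsBPS} --- precisely the two steps you carry out in detail (including the bounded recursion depth $c\le m$ and the polynomial-length counters).
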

\begin{proof}
The proof of Theorem~\ref{p2} applies as is.  The only
change needed is to verify that the algorithm for~$M$ can
be carried out in polynomial space.  This is proved using the
same kind of argument used for the proof
of the (3)\,$\Rightarrow$\,(1) case of Theorem~\ref{thm:firstequivsBPS}.
\end{proof}

The theorems on statistical tests also carry over to the setting
of $\BPS$ random sets:
\begin{theorem} \label{thm:statisticsBPS}
Let $X$ be a $\BPS$ random set. For any increasing
primitive recursive function~$f$ and any $\epsilon > 0$,
\[
\left| \frac{card(X \cap [[f(n)]])}{f(n)} - \frac12 \right| \leq \epsilon
\]
for infinitely many~$n$.
\end{theorem}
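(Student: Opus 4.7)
My plan is to adapt the proof of Theorem~\ref{thm4} (the $\BP$-randomness analogue), which builds a Martin-L\"of-style test from Chernoff's Lemma. The key obstruction to a direct translation is that the primitive recursive test $V_{f(n)}$ used there has strings of length $\Theta(f(n))$, which is generally not polynomial in the test index~$n$; the $\PSPACE$ test framework, however, insists that the string-length bound $|f'(1^n)|$ be a $\PTIME$ function of~$n$. I therefore re-index the test family so that the string length is polynomial in the new test index, rather than restricting the class of $f$.

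Set $m = \lceil 1/\epsilon\rceil$ and let $S_{L,\delta}$ denote the set of length-$L$ strings with density more than $\delta$ off from $\frac12$; Chernoff gives $\mu([S_{L,\delta}]) \leq 2 e^{-\delta^2 L/6}$. For each $k \in \N$, let $n(k)$ be the least integer with $f(n(k)) \geq 12 m^2 (\ln 2)\,k$, set $L(k) = f(n(k))$, and define $U_k = [S_{L(k),\,1/m}]$. The measure bound $\mu(U_k) \leq 2^{-k}$ follows from the choice of threshold, and the membership function $G(1^k,\tau)$ iterates the primitive recursive scheme of $f$ to recover $n(k)$ and $L(k)$, checks $|\tau| = L(k)$, and counts the $1$'s of $\tau$, all evaluable in space polynomial in $k + |\tau|$.

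The decisive step is to verify that $L(k) \leq q(k)$ for some polynomial $q$, which is what makes $(U_k)_k$ qualify as a $\PSPACE$ test in the sense of the paper. By the minimality of $n(k)$, we have $f(n(k)-1) < 12 m^2 (\ln 2)\,k$, so the overshoot $L(k)/f(n(k)-1)$ is governed by the single-step primitive recursive growth of $f$ at its predecessor. I plan to exploit the primitive recursive scheme defining $f$ to bound this overshoot at the crossing point, tuning the re-indexing constant to $f$'s growth rate so as to secure a polynomial $q$. This is the most delicate ingredient and the main obstacle, since arbitrary primitive recursive $f$ admit very rapid growth jumps and the re-indexing must be matched to the appropriate level of the primitive recursive hierarchy of~$f$.

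Once the polynomial bound $L(k) \leq q(k)$ is secured, the rest is routine. Assuming for contradiction that the density of $X \res f(n)$ is more than $\epsilon$ off from $\frac12$ for all $n \geq N_0$, then for $k$ large enough we have $n(k) \geq N_0$, so $X \res L(k) \in S_{L(k),\,1/m}$, whence $X \in U_k$ for all large~$k$. This contradicts $\BPS$ randomness of~$X$. Finally, applying the whole argument to each shifted sequence $(U_{k+i})_k$ upgrades "for some $n$" to "for infinitely many $n$" in the style of Theorem~\ref{thm4}.
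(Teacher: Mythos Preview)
You have correctly spotted an obstruction that the paper's own proof glosses over: a $\PSPACE$ test requires the length-bound function to be $\PTIME$, so the strings in~$G_n$ must have length polynomial in~$n$, whereas the test $U_n = V_{f(n)}$ carried over from Theorem~\ref{thm4} uses strings of length $\Theta(f(n))$. The paper's one-line justification (``membership in $S_{n,\epsilon}$ is decidable in $\PSPACE$'') addresses only the $\FPSPACE$ condition on~$G$, not the $\PTIME$ length bound.

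Your repair, however, cannot work either. The crucial step is bounding $L(k) = f(n(k))$ polynomially in~$k$, where $f(n(k)-1) < Ck \le f(n(k))$. For $f(n) = 2^{2^n}$ this happens to give $L(k) \le (Ck)^2$, but already for $f(n) = 2^{2^{2^n}}$ one gets from $f(n(k)-1) < Ck$ that $2^{2^{n(k)-1}} < \log_2(Ck)$, hence $2^{2^{n(k)}} < (\log_2 Ck)^2$ and
\[
L(k) ~=~ f(n(k)) ~<~ 2^{(\log_2 Ck)^2} ~=~ (Ck)^{\log_2(Ck)},
\]
which is super-polynomial. Replacing the threshold $Ck$ by any fixed polynomial in~$k$ does not help, and nothing in the primitive recursive scheme of~$f$ forces the single-step ratio $f(n)/f(n{-}1)$ to be polynomially bounded in~$f(n{-}1)$. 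So the ``most delicate ingredient'' you flag is in fact unobtainable in general.

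Indeed the theorem as literally stated appears to be false: the $\BPS$ random real~$A$ of Theorem~\ref{thm:nolimitBPS} is built by interleaving diagonalization stages with long blocks of zeros and ones, and the positions at the ends of the zero-blocks form an increasing primitive recursive sequence along which the density of~$A$ stays below~$1/4$. Almost certainly ``primitive recursive'' in the hypothesis is a copy-paste slip from Theorem~\ref{thm4}; compare Propositions~\ref{prop:increasingRangeBPS} and~\ref{thm:subsequenceBPS}, where the $\BPS$ analogues restrict to $f(n) = |h(1^n)|$ for an $\FPSPACE$ function $h:\{1\}^*\to\{0,1\}^*$, i.e.\ to polynomially bounded~$f$. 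Under that corrected hypothesis the paper's direct transplant of Theorem~\ref{thm4} is valid and your re-indexing becomes unnecessary.
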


\begin{proof}
The proof is identical to the proof of Theorem~\ref{thm4}.
The fact that $U_n$ is a $\PSPACE$ test follows from
the fact that membership in $S_{n,\epsilon}$ is decidable in
$\PSPACE$ for any fixed~$\epsilon$.
\end{proof}

\begin{corollary} \label{cor:statisticsBPS}
For any $\BPS$
random set~$X$, if $\lim_n card(A \cap [[n]])/n$ exists, then it
equals $1/2$.
\end{corollary}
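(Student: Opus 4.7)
The plan is to deduce this corollary directly from Theorem~\ref{thm:statisticsBPS}. First I would apply that theorem to the identity function $f(n) = n$, which is increasing and computable in $\FPSPACE$, to obtain that for every $\epsilon > 0$ there are infinitely many $n$ with
\[
\Bigl| \frac{card(X \cap [[n]])}{n} - \frac12 \Bigr| \le \epsilon.
\]

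Next, suppose for contradiction that $L := \lim_n card(X \cap [[n]])/n$ exists but $L \neq \frac12$. Set $\epsilon = |L - \frac12|/2 > 0$. Then by convergence, for all sufficiently large~$n$ we have $|card(X \cap [[n]])/n - L| < \epsilon$, and hence $|card(X \cap [[n]])/n - \frac12| > \epsilon$ for all sufficiently large~$n$. This directly contradicts the previous paragraph, which guarantees infinitely many~$n$ violating this inequality.

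There is no significant obstacle; the corollary is essentially an immediate consequence of Theorem~\ref{thm:statisticsBPS} together with the standard observation that a convergent sequence cannot accumulate at two distinct points. The only minor subtlety is to verify that $f(n) = n$ is admissible as an increasing primitive recursive function (trivially true), so that Theorem~\ref{thm:statisticsBPS} applies.
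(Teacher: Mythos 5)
Your proof is correct and matches the paper's intent exactly: the paper states this as an immediate corollary of Theorem~\ref{thm:statisticsBPS} (with no separate argument given), and your instantiation with the identity function $f(n)=n$ followed by the standard limit-versus-accumulation contradiction is precisely the routine deduction the authors leave to the reader.
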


\begin{theorem} \label{thm:nolimitBPS}
Let $\epsilon>0$.  Then there exists a
$\BPS$ random set~$A$ in
$\DSPACE(2^{\epsilon (\log n)^2})$ such that
$\lim_n card(A \cap [[n]])/n$ does not exist.
\end{theorem}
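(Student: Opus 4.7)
The plan is to adapt the construction from the proof of Theorem~\ref{thm2sp} by interleaving the diagonalization stages with long blocks of constant bits, so that the density of $1$'s in the set $A$ oscillates between values near $0$ and values near $1$. This parallels the hint given in the proof of Theorem~\ref{thm5}.

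Recall that in Theorem~\ref{thm2sp} we enumerate pairs $(M_k,f_k)$ and at stage $k+1$ we set $c = n_k = |\tau_k|$, compute $n_{k+1}=f_k(n_k)$, and choose an extension $\tau_{k+1} \sqsupseteq \tau_k$ of length $n_{k+1}$ such that $\tau_{k+1} \neq M_k(\sigma)$ for every $\sigma$ with $|\sigma| \leq n_{k+1}-c-1$. Crucially, once $\tau_{k+1}$ is fixed, the diagonalization against $(M_k,f_k)$ is witnessed entirely by the initial segment $A\res f_k(c)$, so any further extension of $\tau_{k+1}$ is permitted before starting the next stage.

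The modification I would make is the following. After $\tau_{k+1}$ has been chosen, append a block $B_{k+1}$ of length $L_{k+1} := k\cdot n_{k+1}$, equal to $0^{L_{k+1}}$ if $k+1$ is odd and $1^{L_{k+1}}$ if $k+1$ is even; let $\tau'_{k+1} = \tau_{k+1}\fr B_{k+1}$ and begin stage $k+2$ from $\tau'_{k+1}$. Define $A = \bigcup_k \tau'_k$. Three properties must then be verified. First, $A$ is $\BPS$ random: since each $\tau_{k+1}$ is a prefix of $A$ and was chosen precisely so that $C_{M_k}(A\res f_k(c)) > f_k(c)-c-1$ at $c = n_k$, no pair $(M_k,f_k)$ can $\BPS$-compress $A$. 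Second, $card(A\cap [[n]])/n$ does not converge: at the end of an odd stage $k+1$ the total length is $(k+1)\,n_{k+1}$ while the number of $1$'s is at most $n_{k+1}$, giving density $\leq 1/(k+1)$; at the end of an even stage $k+1$ the density of $1$'s is at least $k/(k+1)$, so $\liminf_n card(A\cap[[n]])/n = 0$ and $\limsup_n card(A\cap[[n]])/n = 1$. Third, $A$ must lie in $\DSPACE(2^{\epsilon(\log n)^2})$: to compute $A(n)$, iterate the stages exactly as in Theorem~\ref{thm2sp} until the running length exceeds $n$, noting that each block $B_{k+1}$ can be emitted in constant additional space.

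The main obstacle will be the third step: verifying that inserting the blocks $B_{k+1}$ does not destroy the tight $n^{\epsilon\log n}$ space bound. Because $L_{k+1}$ is only a polynomial multiple ($k+1 \leq n$) of $n_{k+1}$ and because $k \leq n$ throughout the iteration, the length growth $|\tau'_{k+1}| = (k+1)\,f_k(|\tau'_k|)$ differs from the original growth $n_{k+1}=f_k(n_k)$ by only a linear factor per stage. This factor is absorbed in the same way that the proof of Theorem~\ref{thm2sp} absorbs the $n^{\log k}$ runtime of $f_k$, so the computation of $A\res n$ still runs in space $n^{\epsilon\log n} = 2^{\epsilon(\log n)^2}$ for sufficiently large~$n$.
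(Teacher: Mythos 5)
Your proposal is correct and takes essentially the same route as the paper: the paper's own proof of Theorem~\ref{thm:nolimitBPS} is precisely to modify the construction of Theorem~\ref{thm2sp} by inserting long blocks of $0$'s and $1$'s in alternation after each requirement is satisfied, exactly as in Theorem~\ref{thm5}. Your explicit choice $L_{k+1}=k\cdot n_{k+1}$, the observation that the diagonalization is witnessed by an initial segment and so survives arbitrary extension, and the check that the polynomial per-stage blow-up is absorbed into the $2^{\epsilon(\log n)^2}$ space bound simply supply details the paper leaves implicit.
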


\begin{proof} To construct such a set~$A$, modify the proof of
Theorem~\ref{thm2sp} by adding long strings of~0's and long strings of~1's (in
alternation) after satisfying each requirement. Then we can make the
density go arbitrarily low and then arbitrarily high infinitely
often.
\end{proof}

It is an immediate consequence of (the proof of) Theorem~\ref{thm:nolimitBPS},
that there are $\BPS$ random reals which are not $\PSPACE$ random
in the sense of Lutz~\cite{Lutz92}.
To see this, note that part~4 of example~2.10 of
Ambos-Spies and Mayordomo~\cite{AM97} shows that
$\PSPACE$ random reals cannot have ``exponential gaps''.
For the same reason, it follows that there are
$\BPS$ random reals which are not PSR1-/PSR2-random in the sense
of Ko~\cite{Ko86}, as Ko showed that the relative frequencies of
$0$'s and~$1$'s in a PSR1-/PSR2-random real quickly converge to~$1/2$.

The constructions of Section~\ref{sec:relativeBP}
carry over straightforwardly to $\BPS$ randomness.  Since we
only use relative computation relative to 0/1 valued functions,
the definitions of the various
forms of relative $\BPS$ randomness are completely straightforward.
It is then easy to prove the
equivalence of the the Kolmogorov definition, the \ml definition, and the martingale
definition of relative randomness by the same constructions as
used for Theorems \ref{thm:firstequivsBPS} and~\ref{thm:martingaleEquivBPS}.
We leave the details to the reader.  One direction of van Lambalgen's theorem
also follows by the same proof as Theorem~\ref{thm:LambalgenDirectionOne}.

\begin{theorem} \label{thm:lambalgenDiectionOneBPS}
For any reals $A$ and~$B$,
if $A$ is $\BPS$ random relative to~$B$ and
$B$~is $\BPS$ random relative to~$A$, then
$A\oplus B$ is $\BPS$ random.
\end{theorem}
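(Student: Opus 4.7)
The plan is to prove the contrapositive: assume $A\oplus B$ is not $\BPS$ random and derive that either $A$ fails to be $\BPS$ random relative to $B$ or $B$ fails to be $\BPS$ random relative to $A$. By the (relativized) equivalences established for $\BPS$ randomness, it suffices to work with the measure-theoretic definition. So fix an $\FPSPACE$ function $G$ and $\PTIME$ length function $\ell$ presenting a $\PSPACE$ test $(U_n)_n$ with $U_n=[G_n]$, $G_n\subseteq \{0,1\}^{\ell(n)}$, $\mu(U_n)\le 2^{-n}$, and $A\oplus B\in\bigcap_n U_n$. Without loss of generality assume $\ell(n)$ is even.

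The heart of the argument is a slicing/Fubini step. For $b\in\{0,1\}^{\ell(n)/2}$ set $G_n^b=\{a\in\{0,1\}^{\ell(n)/2}:a\oplus b\in G_n\}$; membership is $\FPSPACE$ in $b$, uniformly in $n$. Since $\sum_b |G_n^b|=|G_n|\le 2^{\ell(n)-n}$, the expected measure of $[G_n^b]$ is $\mu(U_n)\le 2^{-n}$. Apply Markov's inequality to the test $(U_{2n})_n$: the set $W_n=\{b:\mu([G_{2n}^b])>2^{-n}\}$ is clopen, $\PSPACE$-decidable, and has $\mu(W_n)\le 2^{-n}$, so $(W_n)_n$ is a $\PSPACE$ test (a genuine unconditional test for $B$). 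Next, set $V_n^B=[G_{2n}^B]$ when $B\notin W_n$, and $V_n^B=\emptyset$ otherwise; each $V_n^B$ has measure at most $2^{-n}$ and is an $\FPSPACE$ oracle clopen set in $B$, so $(V_n^B)_n$ is a $\PSPACE$ oracle test relative to $B$. Since $A\oplus B\in U_{2n}$ forces $A\in [G_{2n}^B]$, at every $n$ we obtain the dichotomy: either $B\in W_n$, or $A\in V_n^B$.

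To turn this into global failure of one side, I would form truncated unions with a $\PTIME$ window, e.g.\ $\widetilde W_m=\bigcup_{m\le n<2m} W_n$ and $\widetilde V_m^B=\bigcup_{m\le n<2m,\,B\notin W_n} V_n^B$. Both are $\PSPACE$ (resp.\ $\PSPACE$ oracle) tests with measures bounded by $2^{-m+1}$, and by the per-$n$ dichotomy, for every $m$ either $B\in\widetilde W_m$ or $A\in\widetilde V_m^B$. A case analysis then yields the conclusion: if the first alternative holds for all sufficiently large $m$, then (after reindexing by a $\PTIME$ shift) $B$ fails a $\PSPACE$ test and is not $\BPS$ random, hence not $\BPS$ random relative to $A$; otherwise, from some point on the second alternative always holds, and $A$ fails the oracle test $\widetilde V_m^B$, so $A$ is not $\BPS$ random relative to $B$.

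The main obstacle is precisely this last step. The dichotomy is pointwise in $m$ and in principle the "winning side" can alternate as $m$ varies, while $\BPS$ randomness demands a single $\PSPACE$-indexed test that is failed for \emph{all} $m$. In the classical Martin-L\"of setting this is handled by a Borel–Cantelli / limsup argument, which is not directly available under $\PTIME$ indexing and $\PSPACE$ computation bounds. The truncated-window device above is intended to localize the argument so that the alternation cannot persist in both directions, but the delicate point is to show the $\PTIME$ window is wide enough (e.g.\ a polynomial such as $n\mapsto 2n$ suffices) that one side must eventually monopolize every window, thereby producing a single test whose failure certifies non-randomness of $A$ relative to $B$ or of $B$ (and hence relative to $A$). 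This is the step where the proof structure of Theorem~\ref{thm:LambalgenDirectionOne} must be adapted with care to the polynomial-space bounded setting.
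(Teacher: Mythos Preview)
There is a genuine problem here, but it is not with your technique: the statement you are trying to prove is \emph{false}. The very next theorem in the paper (Theorem~\ref{thm:Lambalgen:CounterExampleBPS}) exhibits reals $A,B$ such that $A$ is $\BPS$ random relative to~$B$, $B$ is $\BPS$ random relative to~$A$, yet $A\oplus B$ is not $\BPS$ random. So Theorem~\ref{thm:lambalgenDiectionOneBPS} as printed is a typo; the intended statement has the implication reversed: if $A\oplus B$ is $\BPS$ random, then each of $A,B$ is $\BPS$ random relative to the other. That direction is what the paper actually means when it says ``follows by the same proof as Theorem~\ref{thm:LambalgenDirectionOne}'': given an oracle $\PSPACE$ test $(U_n^A)_n$ that $B$ fails, one defines $H_n=\{a\oplus b: a,b\in\{0,1\}^{\ell(n)},\ b\in[G_n^a]\}$ and obtains an unrelativized $\PSPACE$ test that $A\oplus B$ fails. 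No Fubini/Markov splitting is needed.

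Your proof sketch is instructive precisely because the obstacle you isolate is the reason the stated direction fails. The pointwise dichotomy ``either $B\in W_n$ or $A\in V_n^B$'' is correct, but the winning side genuinely can alternate forever, and no $\PTIME$ windowing trick will force one side to monopolize all windows. In the counterexample of Theorem~\ref{thm:Lambalgen:CounterExample} (which carries over verbatim to $\BPS$), the construction alternates: on stage $2j+1$ one forces $A$ to avoid compression by $M_j^B$ while padding $B$ with zeros, and on stage $2j+2$ the roles swap. The resulting $A\oplus B$ has long alternating blocks where one coordinate is identically~$0$, so $A\oplus B$ fails an easy $\PSPACE$ test; yet neither $A$ nor $B$ fails any single oracle test relative to the other, because each diagonalization requirement is met at its own stage. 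Your case analysis ``if the first alternative holds for all sufficiently large $m$ \ldots\ otherwise the second eventually always holds'' is exactly the step that breaks: in this example neither alternative holds cofinally.
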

On the other hand, the sets $A$ and~$B$ constructed in the
proof of Theorem~\ref{thm:Lambalgen:CounterExample} are
certainly $\BPS$ random relative to each other, and clearly
$A\oplus B$ is not $\BPS$ random.  Thus,
\begin{theorem}\label{thm:Lambalgen:CounterExampleBPS}
There are reals $A$ and~$B$ such that $A$~is $\BPS$ (even, $\BP$) random relative to~$B$,
and $B$~is $\BPS$ (even, $\BP$) random relative to~$A$, but $A\oplus B$ is
not $\BPS$ random.
\end{theorem}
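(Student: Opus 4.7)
The plan is to reuse the construction from Theorem~\ref{thm:Lambalgen:CounterExample} unchanged, adding only two verifications: that relative $\BP$ randomness of $A$ and~$B$ automatically upgrades to relative $\BPS$ randomness, and that conditions~1 and~2 of that construction explicitly force $A\oplus B$ into a uniformly computable $\PSPACE$ test. This replaces the ``clearly $A\oplus B$ is not $\BPS$ random'' remark with a concrete witness.

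For the first point, let $A,B$ be the pair produced by Theorem~\ref{thm:Lambalgen:CounterExample}. Every $\FPSPACE$ oracle machine is in particular a primitive recursive oracle machine, and every $\PTIME$ length function $f\colon\{1\}^*\to\{1\}^*$ induces the polynomially bounded primitive recursive function $n\mapsto |f(1^n)|$. Hence any $\BPS$ oracle compression of $A$ by~$B$ would at the same time be a $\BP$ oracle compression, which is ruled out by Theorem~\ref{thm:Lambalgen:CounterExample}; symmetrically for~$B$.  Thus $A$~is $\BPS$ random relative to~$B$ and $B$ is $\BPS$ random relative to~$A$.

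For the second point, conditions~1 and~2 guarantee that every $n\geq 0$ lies in some interval $[h(i),h(i+1))$, which forces either $B(n)=0$ (if $i$ is even) or $A(n)=0$ (if $i$ is odd). Consequently, at least one of $(A\oplus B)(2m)=A(m)$ and $(A\oplus B)(2m+1)=B(m)$ equals~$0$ for every $m\geq 0$. Accordingly, define
\[
V_k ~=~ \{Z\in\{0,1\}^\omega :(\forall m<3k)(Z(2m)=0 \lor Z(2m+1)=0)\}.
\]
For each~$m$ the event $Z(2m)=0\lor Z(2m+1)=0$ has probability~$3/4$, and these events are independent across different~$m$, so $\mu(V_k)=(3/4)^{3k}<2^{-k}$. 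The set $V_k$ is clopen, depends only on the first $6k$ bits of~$Z$, and membership is $\PTIME$-decidable; together with the $\PTIME$ length function $1^k\mapsto 1^{6k}$, this exhibits $(V_k)_{k\geq 0}$ as a genuine $\PSPACE$ test. By the observation above, $A\oplus B\in V_k$ for every~$k$, so $A\oplus B$ is not $\BPS$ random.

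Because the substantive diagonalization has already been carried out in Theorem~\ref{thm:Lambalgen:CounterExample}, essentially no new difficulty remains. The only mild subtlety is that the test $V_k$ must not depend on the (potentially very fast-growing) function~$h$; this is handled by exploiting the coordinate-wise property $A(n)\cdot B(n)=0$ for every~$n$, rather than any specific structural feature of the block endpoints~$h(i)$.
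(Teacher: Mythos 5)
Your proposal is correct and takes essentially the same route as the paper, whose entire proof consists of reusing the pair $A,B$ from Theorem~\ref{thm:Lambalgen:CounterExample} and remarking that they are ``certainly'' $\BPS$ random relative to each other and that $A\oplus B$ is ``clearly'' not $\BPS$ random. Your two verifications correctly fill in what the paper leaves implicit, and your explicit test $V_k$ built from the coordinate-wise fact $A(n)\cdot B(n)=0$ rightly avoids any dependence on the fast-growing (and non-$\PSPACE$-computable) stage function~$h$.
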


\section{Conclusions and future research}

In this paper, we defined a robust notion of primitive recursive
and $\PSPACE$ bounded random reals
in that both definitions could be framed
in all three versions of algorithmically
random reals via measure,
Kolmogorov complexity, or martingales. We view the work of
this paper as a possible model for defining algorithmically random
reals relative to several other classes of
sub-computable functions. In future work, we will
define similar notions of bounded random reals
for other classes of sub-computable functions such
as elementary, on-line, or exponential space.

Lutz \cite{Lutz90,Lutz92} proved a wide range of
properties of $\PSPACE$ randomness.
It would be interesting to understand
which of the properties of $\PSPACE$ randomness
established in~\cite{Lutz90,Lutz92} apply also to
$\BPS$ randomness.
Lutz showed that
$\Ptime$ has measure~0 in $\EXPTIME$.
This means that
there is a single $\EXPTIME$ martingale
which succeeds on every set in~$\Ptime$.
It is an interesting question whether there can be a $\BPS$
martingale which succeeds
on every set in~$\Ptime$.

Kihara and Miyabe \cite{KM13} defined a notion of uniform
relativization for Kurtz randomness  for which
van Lambalgen's theorem holds.  It would be interesting
to investigate whether their uniform relativization can
be adapted to the setting of $\BP$ and $\BPS$ randomness.

A theory of algorithmic randomness for trees and
effectively closed sets was developed in a series of papers by
Barmpalias, Brodhead, Cenzer, et al.~\cite{BCDW07,BCRW08}.
One can adapt our definitions of primitive recursive
bounded randomness to define similar notions
of bounded random trees and
effectively closed sets for various classes of
sub-computable functions. This will appear in future papers.
\bigskip

\noindent
{\bf Acknowledgements.}
We thank the anonymous referees for this paper,
as well the referees for~\cite{CenzerRemmel:BPrandomness},
for useful comments and suggestions.

\bibliographystyle{plain}
\bibliography{random}

\end{document}